\numberwithin{equation}{section}
\newcommand{\C}{{\mathbb C}}
\renewcommand{\Re}{{\operatorname{Re\,}}}
\renewcommand{\Im}{{\operatorname{Im\,}}}
\newcommand{\Tr}{{{\operatorname{Tr}}}}
\newcommand{\al}{\alpha}
\newcommand{\be}{\beta}
\newcommand{\ga}{\gamma}
\newcommand{\Ga}{\Gamma}
\newcommand{\ep}{\varepsilon}
\newcommand{\de}{\delta}
\newcommand{\De}{\Delta}
\newcommand{\sg}{\sigma}
\newcommand{\om}{\omega}
\newcommand{\z}{\zeta}
\newtheorem{theo}{{\sc \bf Theorem}}[section]
\newtheorem{cor}[theo]{{\sc \bf Corollary}}
\newtheorem{prop}[theo]{{\sc \bf Proposition}}
\newtheorem{lemma}{Lemma}[section]
\newtheorem{theorem}[lemma]{Theorem}
\newtheorem{proposition}[lemma]{Proposition}
\newtheorem{corollary}[lemma]{Corollary}
\newtheorem{remark}[lemma]{Remark}
\begin{document}

\title[Topological expansion in the cubic random matrix model]
{Topological expansion in the cubic random matrix model}

\author{Pavel Bleher}
\address{Department of Mathematical Sciences,
Indiana University-Purdue University Indianapolis,
402 N. Blackford St., Indianapolis, IN 46202, U.S.A.}
\email{bleher@math.iupui.edu}

\author{Alfredo Dea\~no}
\address{Departamento de Matem\'{a}ticas,
Universidad Carlos III de Madrid,
Avda. de la Universidad, 30. 28911 Legan\'{e}s, Madrid, Spain}
\email{alfredo.deanho@math.uc3m.es}

\thanks{The first author is supported in part
by the National Science Foundation (NSF) Grant and DMS-0969254. The second author acknowledges financial support from Universidad Carlos III de Madrid (Ayudas para la Movilidad del Programa Propio de Investigaci\'{o}n 2010) and project MTM2009-11686 from the Spanish Ministry of Science and Innovation}

\thanks{Both authors would like to thank the Mathematical Sciences Research Institute for their hospitality during the program ``Random Matrices, Interacting Particle Systems and Integrable Systems" in the Fall of 2010, where much of this work was performed. The authors acknowledge useful discussions with N. M. Ercolani, B. Eynard, A. B. J. Kuijlaars and K. T-R. McLaughlin.}

\date{\today}

\begin{abstract}
In this paper we study the topological expansion in the cubic random matrix model,
and we evaluate explicitly the expansion coefficients for genus 0 and 1. For genus 0
our formula coincides with the one of Br\'ezin, Itzykson, Parisi, and Zuber \cite{BIPZ}.
For the expansion coefficients of higher genus we obtain their asymptotic behavior as the number
of vertices of the associated graphs tends to infinity. Our study is based on the Riemann-Hilbert
problem, string equations, and the Toda equation.
\end{abstract}

\keywords{Random matrices, asymptotic representation in the complex domain, topological expansion, classical hypergeometric functions, Riemann-Hilbert problem, string equations, Toda equation. 2010 MSC: 30E15, 60B20, 33C05, 05C10.}

\maketitle

\section{Introduction and statement of the main results}
In this paper we return to the classical work \cite{BIPZ} by Br\'ezin, Itzykson, Parisi and Zuber, in which,
among other things, the authors explicitly calculated the coefficients
of the topological expansion in the cubic random matrix model in genus 0.
Our main goal will be to rigorously prove the results of \cite{BIPZ} and to obtain an explicit formula for
the coefficients of the topological expansion in genus 1. We will also prove some formulae and asymptotic results
for the coefficients of the topological expansion in higher genera.

We consider the random matrix model given by the probability distribution
\begin{equation}\label{dPM}
d\mu_N(M)=\frac{1}{\tilde Z_N}e^{-N \textrm{Tr} V(M)}dM,
\end{equation}
on the space of $N\times N$ Hermitian matrices $M$, where
\begin{equation}\label{VM}
V(M)=\frac{M^2}{2}-uM^3,
\end{equation}
and $u>0$. The model is ill-defined because of the divergence at infinity of the partition
function,
\begin{equation}\label{tZN}
\tilde Z_N(u)=\int e^{-N \textrm{Tr} V(M)}dM.
\end{equation}
The partition function of eigenvalues,
\begin{equation}\label{ZN0}
Z_N(u)=\int_{-\infty}^\infty\ldots\int_{-\infty}^\infty \prod_{1\leq j<k\leq N}(z_j-z_k)^2\,
\prod_{j=1}^N e^{-N \left(\frac{z_j^2}{2}-uz_j^3\right)}dz_1\ldots dz_N
\end{equation}
diverges as well on the real line. To regularize it, we will consider integration on a specially
chosen contour $\Ga$ in the complex plane:
\begin{equation}\label{ZN}
Z_N(u)=\int_\Ga\ldots\int_\Ga \prod_{1\leq j<k\leq N}(z_j-z_k)^2\,
\prod_{j=1}^N e^{-N \left(\frac{z_j^2}{2}-uz_j^3\right)}dz_1\ldots dz_N,
\end{equation}
on which the integral converges.
%%%%%%%%%%%%% Fig.  %%%%%%%%%%%%%%
\begin{center}
\begin{figure}[h]
\begin{center}
\scalebox{0.52}{\includegraphics{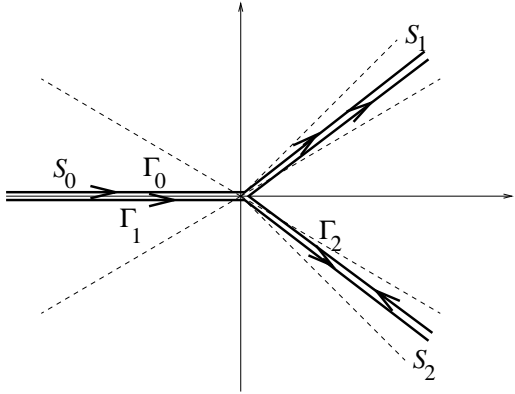}}
\end{center}
  \caption[sectors ]{The sectors $S_0$, $S_1$, $S_2$ and the contours  $\Ga_0$, $\Ga_1$, $\Ga_2$.}
 \end{figure}
\label{Figsectors}
\end{center}
%%%%%%%%%%%%%%%%%%%%%%%%%%%%%%%%%%

To choose $\Ga$, consider the three sectors on the complex plane,
\begin{equation}\label{sectors}
\begin{aligned}
S_0&=\Big\{z\in\C:\;\frac{5\pi}{6}< \arg z < \frac{7\pi}{6}\Big\}\,,\\
 S_1&=\Big\{z\in\C:\;\frac{\pi}{6}< \arg z < \frac{\pi}{4}\Big\}\,,\\
 S_2&=\Big\{z\in\C:\;-\frac{\pi}{4}< \arg z < -\frac{\pi}{6}\Big\}\,,
\end{aligned}
\end{equation}
see Fig.\ref{Figsectors}.
Then for any ray
\begin{equation}\label{ray}
R_{\theta}=\big\{z\in\C:\; \arg z =\theta \big\},
\end{equation}
lying in the sectors $S_0$, $S_1$, and $S_2$, the integral
\begin{equation}\label{integral}
\int_{R_{\theta}}z^k e^{-N \left(\frac{z^2}{2}-uz^3\right)}dz
\end{equation}
converges for any $k=0,1,\ldots$ and any $u\ge 0$. Therefore, we will use contours consisting of two
rays in the sectors $S_0$, $S_1$, and $S_2$.
More specifically, we may consider the three contours on the complex plane,
\begin{equation}\label{contours}
\Ga_0=R_\pi\cup R_{\pi/5},\qquad
\Ga_1=R_\pi\cup R_{-\pi/5},\qquad
\Ga_2=R_{-\pi/5}\cup R_{\pi/5},
\end{equation}
with orientation from $(-\infty)$ to $(\infty\,e^{\pi i/5})$ on $\Ga_0$,
from $(-\infty)$ to $(\infty\,e^{-\pi i/5})$ on $\Ga_1$, and
from $(\infty e^{-\pi i/5})$ to $(\infty\,e^{\pi i/5})$ on $\Ga_2$,  see Fig. \ref{Figsectors}.
However,
to get the topological expansion of the free energy, we will use $\Ga=\Ga_0$ or $\Ga=\Ga_1$, but not $\Ga=\Ga_2$.

More generally, following Duits and Kuijlaars \cite{DK},
it is convenient to introduce a linear combination of the contours $\Ga_0$, $\Ga_1$. To that end, let us
fix some $\al\in\C$ and consider $\Ga$ as a linear combination of $\Ga_0$, $\Ga_1$,
\begin{equation}\label{Gammal}
\Ga=\al\Ga_0+(1-\al)\Ga_1,
\end{equation}
in the sense that
\begin{equation}\label{Gamma2}
\int_\Ga f(z)dz=\al\int_{\Ga_0}f(z)dz+(1-\al)\int_{\Ga_1}  f(z) dz,
\end{equation}
cf. \cite{BEH}.
With
this choice of $\Ga=\Ga(\al)$, the integral $Z_N(u)=Z_N(u;\al)$ in \eqref{ZN} is convergent for any $u\ge 0$.
By the Cauchy theorem, we have some flexibility in the choice of the contours $\Ga_0$, $\Ga_1$
within the sectors $S_0$, $S_1$, $S_2$.

It is easy to see, by differentiation of the integral with respect to $u$,
that the partition function $Z_N(u;\al)$ is analytic for $u>0$, and it is infinitely differentiable
for $u\ge 0$. Observe, however, that $Z_N(u;\al)$ is not analytic at $u=0$. Indeed, the analytic
continuation of $Z_N(u)$, $u>0$, to $u=r e^{-3i\theta}$, $r>0$,  can be obtained by the change
of variable $z_j=w_je^{i\theta}$:
\begin{equation}\label{ZNr}
\begin{aligned}
Z_N(r e^{-3i\theta};\al)&=\int_\Ga\ldots\int_\Ga \prod_{1\leq j<k\leq N}(w_je^{i\theta}-w_ke^{i\theta})^2\,\\
&\times\prod_{j=1}^N e^{-N \left(\frac{w_j^2e^{2i\theta}}{2}-rw_j^3\right)}d(w_1e^{i\theta})\ldots d(w_Ne^{i\theta}).
\end{aligned}
\end{equation}
By the Cauchy theorem, one can use the same contour $\Ga=\Ga(\al)$ for all $\theta$. For $\theta=\frac{\pi}{2}$,
the exponential term in the latter formula becomes
\[
e^{N \left(\frac{w_j^2}{2}+rw_j^3\right)}
\]
and the integral in \eqref{ZNr} diverges at $r=0$. Moreover,
\begin{equation}\label{ZNr1}
\begin{aligned}
\lim_{r\to 0}|Z_N(r e^{-3i\theta};\al)|=\infty,\qquad \theta=\frac{\pi}{2}\,,
\end{aligned}
\end{equation}
hence $Z_N(u;\al)$ is not analytic at $u=0$.

We define the free energy as
\begin{equation}\label{free}
F_N(u;\al)=\frac{1}{N^2}\ln \frac{Z_N(u;\al)}{Z_N(0;\al)}.
\end{equation}
Observe that $Z_N(0;\al)$ is independent of $\al$.
The aim of this paper is to investigate the large $N$ asymptotic behavior of the free energy, and, in particular, the structure of the different terms that appear in this asymptotic expansion. We consider $u$ in the interval
$0\leq u<u_c$, where $u_c$ is the following critical value:
\begin{equation}\label{uc}
u_c=\frac{3^{1/4}}{18}.
\end{equation}

We prove the following results:
\begin{theorem}\label{Th1} Suppose that $\al\in\C$ is fixed. Then
for any  $0\leq u< u_c$,
the free energy $F_N(u;\al)$ admits an asymptotic expansion in inverse powers of $N^2$:
\begin{equation}\label{top1}
F_N(u;\al)\sim \sum_{g=0}^{\infty}\frac{ F^{(2g)}(u)}{N^{2g}},
\end{equation}
which is uniform in the variable $u$ on any interval $[0,u_c-\varepsilon]$, $\varepsilon>0$
and which can be differentiated on $[0,u_c-\varepsilon]$ with respect to $u$ any number of times,
with a uniform estimate of the error term  with respect to $u\in [0,u_c-\varepsilon]$.
The functions $F^{(2g)}(u)$ do not depend on $\al$ and they
admit an analytic continuation to the disk $|u|<u_c$ in the complex plane,
and if we expand them in powers of $u$,
\begin{equation}\label{top2}
F^{(2g)}(u)=\sum_{j=1}^\infty \frac{f^{(2g)}_{2j}u^{2j}}{(2j)!},
\end{equation}
then the coefficient $f^{(2g)}_{2j}$ is a positive integer number that counts the number
of $3$-valent connected graphs with $2j$ vertices on a Riemann surface of genus $g$.
\end{theorem}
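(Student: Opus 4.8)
The proof proceeds through orthogonal polynomials. Let $P_n(z)=z^n+\dotsb$ be the monic polynomials orthogonal on $\Ga=\Ga(\al)$ with respect to $e^{-NV(z)}$, with squared norms $h_n=\int_\Ga P_n^2\,e^{-NV}\,dz$ and three-term recurrence $zP_n=P_{n+1}+\beta_nP_n+\gamma_nP_{n-1}$, $\gamma_n=h_n/h_{n-1}$ (existence for large $N$ comes out of the Riemann--Hilbert analysis below). Since $Z_N(u;\al)=N!\prod_{n=0}^{N-1}h_n$,
\begin{equation*}
F_N(u;\al)=\frac1{N^2}\sum_{n=0}^{N-1}\ln\frac{h_n(u)}{h_n(0)}
=\frac1N\ln\frac{h_0(u)}{h_0(0)}+\frac1{N^2}\sum_{k=1}^{N-1}(N-k)\ln\frac{\gamma_k(u)}{\gamma_k(0)}.
\end{equation*}
Differentiating the orthogonality relations and integrating by parts along $\Ga$ gives the string (Freud) equations
\begin{equation*}
\frac nN=\gamma_n\bigl(1-3u(\beta_{n-1}+\beta_n)\bigr),\qquad
\beta_n=3u\bigl(\gamma_n+\gamma_{n+1}+\beta_n^2\bigr),
\end{equation*}
while the dependence of $Z_N$ on $N$ is controlled by the Toda equation associated with the auxiliary deformation $V\mapsto V+\tfrac sNz$. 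These relations organize the $1/N$-expansion and, in particular, pin down the two leading orders of $F_N$, which the double summation above does not by itself fix.

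The analytic core is a Deift--Zhou steepest-descent analysis of the $2\times2$ Fokas--Its--Kitaev RHP for $P_n$ on $\Ga$. Rescaling $z\mapsto\sqrt{n/N}\,z$ shows that $P_n$ is governed by the cubic potential $\tfrac{z^2}2-u\sqrt{n/N}\,z^3$, so the effective coupling at level $n$ is $t_n=u\sqrt{n/N}\le u<u_c$ for every $0\le n\le N$. One constructs the $g$-function from the equilibrium measure for the cubic potential on $\Ga$ (the genus-zero spectral curve, cf.\ \cite{DK}) and proves that for all $t\in[0,u_c)$ it is one-cut, supported on an interval $[a(t),b(t)]$ with $a(t)<0<b(t)$, with density vanishing like a square root at the two soft endpoints and strictly positive inside; $u_c=3^{1/4}/18$ is exactly the first value at which this regularity fails. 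The usual chain of transformations (normalization at infinity, opening of lenses, outer parametrix from the spectral curve, Airy parametrices at $a(t),b(t)$) reduces the problem to a small-norm RHP for an error matrix $R=R^{(n,N)}=I+\sum_{m\ge1}R_m/N^m$, with estimates uniform for $u\in[0,u_c-\varepsilon]$ and for $n/N$ in compact subsets of $(0,1]$; the degenerate regime $n/N\to0$ is handled by the extra rescaling, which drives the model to the exactly solvable Gaussian case. Extracting $\gamma_n,\beta_n,h_n$ from $R$ yields complete asymptotic expansions in powers of $1/N$ with coefficients analytic in $\lambda=n/N$ and in $u$ (the $u$-dependence entering through the algebraic spectral-curve data). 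Substituting these into the summation formula and applying Euler--Maclaurin to the sum over $k$ produces an asymptotic expansion $F_N(u;\al)\sim\sum_{k\ge0}F^{(k)}(u)/N^{k}$, uniform and (differentiating the RHP in $u$) termwise differentiable on $[0,u_c-\varepsilon]$, each $F^{(k)}$ being an explicit expression in the spectral-curve data and hence analytic on $|u|<u_c$. Finally, writing $\Ga(\al)=\Ga_1+\al\Ga_2$ and noting that $\Ga_2$ lies entirely in the right half-plane and so cannot be deformed onto $[a(u),b(u)]$, the $\al$-dependent part of $Z_N$ is exponentially small relative to $Z_N$; hence the $F^{(k)}$ do not depend on $\al$.

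It remains to identify the coefficients and to eliminate the odd powers of $1/N$. Since $Z_N(u;\al)$ is $C^\infty$ at $u=0$ and its $u$-derivatives there are, after rotating each ray of $\Ga$ onto the real line, ordinary Gaussian ($u=0$) ensemble expectations, the Taylor series of $F_N(u;\al)$ at $u=0$ coincides with the formal perturbative series $\tfrac1{N^2}\ln\langle e^{Nu\,\Tr M^3}\rangle_0$. Wick's theorem expands this as a sum over connected trivalent ribbon graphs: a graph with $2j$ vertices, $3j$ edges and $2-2g+j$ faces contributes $u^{2j}N^{-2g}$ to $F_N$ (an odd number of trivalent vertices being impossible), so the series has the form $\sum_{g\ge0}N^{-2g}\sum_{j\ge1}f^{(2g)}_{2j}u^{2j}/(2j)!$, with $f^{(2g)}_{2j}$ the number of connected $3$-valent graphs with $2j$ vertices cellularly embedded in a surface of genus $g$ (the standard labelling convention being absorbed into the factor $(2j)!$), a non-negative and, whenever such graphs exist, positive integer. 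Comparing this with the expansion of the previous step at $u=0$: differentiating $F_N(u;\al)\sim\sum_kF^{(k)}(u)/N^k$ any number of times in $u$ and letting $u\to0$, uniqueness of asymptotic expansions forces $\partial_u^jF^{(k)}(0)=0$ for all $j$ whenever $k$ is odd, so each such $F^{(k)}$, being analytic near $0$ and hence on $|u|<u_c$, vanishes identically. Thus $F_N(u;\al)\sim\sum_{g\ge0}F^{(2g)}(u)/N^{2g}$, the $F^{(2g)}$ are $\al$-independent and analytic on $|u|<u_c$, and matching Taylor coefficients gives $F^{(2g)}(u)=\sum_{j\ge1}f^{(2g)}_{2j}u^{2j}/(2j)!$, which is in particular even in $u$; for $g=0$ one checks that the spectral-curve computation reproduces the formula of \cite{BIPZ}.

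The principal obstacle is the Riemann--Hilbert work of the second paragraph: constructing the cubic equilibrium measure on the complex contour $\Ga$, proving that it remains regular (one-cut, square-root edges, positive density) throughout $0\le u<u_c$ together with the sharp identification of $u_c=3^{1/4}/18$, and carrying out the steepest-descent analysis and the extraction of the \emph{complete} $1/N$-expansion of $\gamma_n,\beta_n,h_n$ with all estimates uniform in $u\in[0,u_c-\varepsilon]$ and in $n/N\in(0,1]$, including the degenerating regime $n/N\to0$. Secondary difficulties are checking that the algebraic spectral-curve data (hence the $F^{(2g)}$) are single-valued and analytic on the whole disk $|u|<u_c$, and using the Toda equation to fix the two lowest orders of the free energy that the discrete double summation cannot detect.
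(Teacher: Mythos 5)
Your overall architecture is not the one used in the paper: you go through the identity $Z_N=N!\prod_{n=0}^{N-1}h_n$, sum $\ln h_n$ (Euler--Maclaurin over $\ln\gamma_k$), and then clean up parity and $\alpha$-dependence afterwards, i.e.\ essentially the Bleher--Its strategy for the quartic model. The paper instead establishes the one-cut regularity of the equilibrium measure for $0\le u<u_c$ and then invokes the Ercolani--McLaughlin deformation technique (RH analysis of the one-point function integrated over a deformation of the potential), which yields the $1/N^2$ expansion of $\ln\bigl(Z_N(u)/Z_N(0)\bigr)$ without ever needing the orthogonal polynomials of low degree; the graphical identification of $f^{(2g)}_{2j}$ via Wick calculus at $u=0$ is the same in both treatments. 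The difference matters, because your route hinges on a step that the paper deliberately avoids and that your write-up leaves unproven: the identity $Z_N=N!\prod_{n=0}^{N-1}h_n$ and your summation formula require the existence (non-vanishing of the Hankel determinants $D_n$, hence of $h_n$) of \emph{all} the orthogonal polynomials $P_0,\dots,P_{N-1}$ for the complex weight on $\Gamma$, together with asymptotics of $\gamma_n^2,\beta_n,h_n$ uniform over the whole range $n/N\in(0,1]$ with errors summable to all orders. The Riemann--Hilbert analysis in the paper gives existence only for $n$ in a window $I_N$ with $n/N$ near $1$, and the paper needs a separate analytic-continuation argument (Appendix B) merely to justify the Toda equation precisely because the low-degree polynomials are not known to exist; for a complex (non-Hermitian) orthogonality the determinants $D_n$ can in principle vanish, so this is not a formality.

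Your remark that the effective coupling at level $n$ is $u\sqrt{n/N}<u_c$ is correct and does suggest the program can be completed (large $n$ with $n/N$ bounded away from $0$ by steepest descent; $n$ bounded by a $z\mapsto z/\sqrt N$ rescaling to a perturbed Gaussian), but the intermediate regime, the uniformity of the full $1/N$-expansions in $n/N$ down to $0$, and the control of all error terms after the $(N-k)$-weighted summation are exactly the ``principal obstacle'' you name and do not supply; as written the proof therefore has a genuine gap at its central analytic step. Two smaller points: the $\alpha$-independence argument (``the $\Gamma_2$-part of $Z_N$ is exponentially small'') is only heuristic, since $\alpha$ enters every $h_n$ through the weight -- in the paper this is read off from the fact that $\alpha(z)$ appears only in jump matrices that are exponentially close to the identity, so the expansion coefficients cannot depend on $\alpha$; and your parity argument (killing odd powers of $1/N$ by matching Taylor coefficients at $u=0$ with the Wick expansion and using analyticity on $|u|<u_c$) is sound, but it presupposes the uniform-in-$u$ termwise differentiability at $u=0$ that your Euler--Maclaurin step must first deliver.
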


The existence of the $\frac {1}{N^2}$ asymptotic expansion for the free energy is known in the
physical literature since the classical work of  Bessis, Itzykson, Zuber \cite{BIZ}
(see also references therein to the earlier works).
A rigorous proof of the $\frac {1}{N^2}$
asymptotic expansion for the free energy for a general polynomial $V(M)$ of even degree
is given in the paper of Ercolani and McLaughlin \cite{EML}.
Our proof of Theorem \ref{Th1} follows the work \cite{EML}, with some modifications
related to the fact that the cubic model is defined on the contour $\Ga$ in the complex plane and
not on the real line.

It is worth noticing that the coefficients $F^{(2g)}(u)$ of asymptotic expansion \eqref{top1}
in powers of $1/N^2$ do not depend on $\al$ for all $0\le u<u_c$. The dependence on $\al$ arises, however,
in the double scaling limit as $u\to u_c$. We will return to this question in a subsequent paper.
(See also the paper \cite{DK} of Duits and Kuijlaars, where a similar double scaling limit is studied
for the quartic random matrix model.)

In the next theorem we evaluate explicitly the coefficients $f^{(0)}_{2j}$ in the genus $0$ term $F^{(0)}(u)$:

\begin{theorem}\label{Th2}
The coefficient $f^{(0)}_{2j}$ can be written as
\begin{equation}\label{Th2_1}
f^{(0)}_{2j}=\frac{72^j
\Gamma(\frac{3j}{2})(2j)!}{2\Gamma(j+3)\Gamma(\frac{j}{2}+1)},
\end{equation}
and it has the following asymptotic behavior as $j\to\infty$:
\begin{equation}\label{asympf02j}
\begin{aligned}
f^{(0)}_{2j} &=\frac{K_0(2j)!}{u_c^{2j} j^{7/2}}\left(1+\mathcal{O}(j^{-1})\right), \qquad K_0=\frac{1}{\sqrt{6\pi}},
\end{aligned}
\end{equation}
where $u_c$ is the critical value defined in \eqref{uc}.
\end{theorem}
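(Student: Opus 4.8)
The plan is to reduce the computation to an explicit solution of the genus~$0$ string equations, extract the Taylor coefficients by the Lagrange--B\"urmann inversion formula, and then read off the asymptotics from Stirling's formula. The proof of Theorem~\ref{Th1} yields the classical representation
\[
F^{(0)}(u)=\int_0^1 (1-x)\,\ln\frac{R(x,u)}{x}\,dx,
\]
where $R(x,u)$ is the branch, analytic at $u=0$ with $R(x,0)=x$, of the solution of the continuum string equations
\[
R\bigl(1-6ub\bigr)=x,\qquad b=3u\bigl(b^2+2R\bigr).
\]
These are the $N\to\infty$ limits of the string equations for the recurrence coefficients of the orthogonal polynomials with weight $e^{-NV}$ on $\Ga$ (obtained by integrating $(e^{-NV}p_np_{n-1})'$ and $(e^{-NV}p_n^2)'$ and using $V'(z)=z-3uz^2$); the boundary term $N^{-1}\ln(h_0(u)/h_0(0))$ is $O(N^{-1})$ and does not contribute to $F^{(0)}$.

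The first real step is to linearise the logarithm. Putting $v=3ub$ and eliminating $b$ collapses the two string equations into the single cubic $v(1-v)(1-2v)=18u^2x$ with $v(x,0)=0$, and $R/x=(1-2v)^{-1}$. The substitution $\xi=1-(1-2v)^2$ then turns this into $\xi\sqrt{1-\xi}=t$ with $t:=72u^2x$, so that $\xi=t\,(1-\xi)^{-1/2}$ while $\ln(R/x)=-\tfrac12\ln(1-\xi)$. Applying the Lagrange--B\"urmann formula to $\xi=t\,(1-\xi)^{-1/2}$ gives, for $n\ge1$,
\[
[t^{\,n}]\,\ln\frac{R}{x}=\frac1n\,[\xi^{\,n-1}]\!\left(\frac{(1-\xi)^{-n/2}}{2(1-\xi)}\right)=\frac{1}{2n}\binom{\tfrac{3n}{2}-1}{\,n-1\,}=\frac{\Gamma\bigl(\tfrac{3n}{2}\bigr)}{2\,\Gamma(n+1)\,\Gamma\bigl(\tfrac{n}{2}+1\bigr)}.
\]
Hence $\ln(R/x)=\sum_{n\ge1}\frac{72^{\,n}\Gamma(3n/2)}{2\,\Gamma(n+1)\Gamma(n/2+1)}\,u^{2n}x^{\,n}$, and integrating against $(1-x)$ over $[0,1]$, using $\int_0^1(x^n-x^{n+1})\,dx=\frac1{(n+1)(n+2)}$ and $(n+1)(n+2)\Gamma(n+1)=\Gamma(n+3)$, produces $F^{(0)}(u)=\sum_{j\ge1}\frac{72^j\Gamma(3j/2)}{2\,\Gamma(j+3)\Gamma(j/2+1)}\,u^{2j}$; multiplying the coefficient of $u^{2j}$ by $(2j)!$ is precisely \eqref{Th2_1}. (As a consistency check, the discriminant of the cubic in $v$ first degenerates, as $x$ ranges over $[0,1]$, exactly at $u=u_c$, in agreement with the radius of convergence $u_c$ in Theorem~\ref{Th1}.)

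For the asymptotics \eqref{asympf02j} I would insert $\Gamma(z)=\sqrt{2\pi}\,z^{z-1/2}e^{-z}\bigl(1+O(z^{-1})\bigr)$ into \eqref{Th2_1}: the exponential factors cancel, the factors $(1+3/j)^{j+5/2}e^{-3}\to1$ and $(1+2/j)^{j/2+1/2}e^{-1}\to1$ get absorbed into $1+O(j^{-1})$, the powers of $j$ combine to $j^{-7/2}$, and the powers of $2$ and $3$ assemble into $(3\sqrt3/2)^{\,j}$ with constant $2/\sqrt{6\pi}$. This yields $f^{(0)}_{2j}=\frac{(2j)!}{\sqrt{6\pi}}\,(108\sqrt3)^{\,j}\,j^{-7/2}\bigl(1+O(j^{-1})\bigr)$, and since $108\sqrt3=u_c^{-2}$ this is \eqref{asympf02j} with $K_0=1/\sqrt{6\pi}$.

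The one non-mechanical ingredient is the substitution chain $v\mapsto 1-2v\mapsto\xi=1-(1-2v)^2$: it is exactly what makes $\ln(R/x)=-\tfrac12\ln(1-\xi)$ with $\xi$ governed by the single-term functional equation $\xi=t\,(1-\xi)^{-1/2}$, so that Lagrange--B\"urmann inversion collapses to one ratio of Gamma functions rather than to a binomial sum; finding this chain is the step I expect to require the most thought. Everything else---the free-energy formula and string equations inherited from the proof of Theorem~\ref{Th1}, the elementary $x$-integration, and the Stirling estimate---is routine, the only point needing a little care being the verification that the chosen branch of the cubic remains analytic on all of $[0,1]\times[0,u_c)$, which is what the discriminant computation secures.
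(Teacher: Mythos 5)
Your computational core is correct and quite clean: eliminating $b$ via $v=3ub$ does collapse the genus-zero string equations \eqref{g0b0} to $v(1-v)(1-2v)=18u^2x$ with $R/x=(1-2v)^{-1}$, the substitution $\xi=1-(1-2v)^2$ gives $\xi\sqrt{1-\xi}=72u^2x$ and $\ln(R/x)=-\tfrac12\ln(1-\xi)$, the Lagrange--B\"urmann step yields exactly the Gamma-ratio coefficients, the $x$-integration produces the series \eqref{seriesf0u}, and the Stirling estimate (with $108\sqrt3=u_c^{-2}$) gives \eqref{asympf02j} with $K_0=1/\sqrt{6\pi}$. This parallels the paper's own residue computation in Proposition \ref{g0T}, just performed at general $s=x$ and packaged through one inversion formula.

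The genuine gap is your starting point. The representation $F^{(0)}(u)=\int_0^1(1-x)\ln\bigl(R(x,u)/x\bigr)\,dx$ is \emph{not} delivered by the proof of Theorem \ref{Th1}, and in this model its standard (BIZ-type) derivation fails as stated. That derivation rests on $\ln\frac{Z_N(u)}{Z_N(0)}=N\ln\frac{h_0(u)}{h_0(0)}+\sum_{n=1}^{N-1}(N-n)\ln\frac{\gamma_n^2(u)}{\gamma_n^2(0)}$, which requires all Hankel determinants to be nonzero, i.e.\ the existence of the orthogonal polynomials $P_n$ for \emph{every} $0\le n\le N$ and every $u\in[0,u_c-\varepsilon]$, together with the uniform asymptotics $\gamma_n^2\sim g_0(n/N,u)$ for $n/N$ ranging over all of $(0,1]$ (with error control down to $n/N\to0$). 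Because the weight lives on the complex contour $\Ga$, the Riemann--Hilbert analysis of the paper (Theorem \ref{RHP_t}) establishes existence and asymptotics only for $n$ in the window $I_N$, i.e.\ for $n/N$ near $1$, and the expansion of the recurrence coefficients is proved only for $|s-1|$ small. This is precisely why the paper avoids the $(1-x)$-integral: it works only at $s=1$, invokes the Toda equation \eqref{Toda} in the variable $t$ --- whose validity for $t>t_c$ itself has to be argued by a separate analytic-continuation argument in Appendix \ref{A_Toda}, exactly because low-degree OPs are not known to exist --- and then recovers $F^{(0)}$ by integrating $\tilde g_0$ twice from $t=\infty$, fixing the integration constants from the $u\to0$ behavior (Proposition \ref{prop_41}, formula \eqref{CDR8}). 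To make your route rigorous you would have to extend the RH/string-equation analysis to all $s$ in $(0,1]$ and prove nonvanishing of all $h_n$ (or otherwise justify the filling-fraction integral for this complex-contour model); as written, the first sentence of your argument assumes away the hardest part.
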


From formula \eqref{Th2_1} we obtain the first several terms of the Taylor series of $F^{(0)}(u)$ at $u=0$:
\begin{equation}\label{seriesF0}
\begin{aligned}
F^{(0)}(u) &=6u^2+216 u^4+13608u^6+1119744u^8+\frac{540416448}{5}u^{10}\\
&+\mathcal{O}(u^{12}).
\end{aligned}
\end{equation}

Our formula \eqref{Th2_1} for $f^{(0)}_{2j}$ coincides with the one obtained in
the work of Br\'ezin, Itzykson, Parisi and Zuber \cite[\S 4]{BIPZ} in the planar diagrams
approximation.

A more complicated expression, but still explicit, is available when $g=1$:

\begin{theorem}\label{Th3}
The coefficient $f^{(2)}_{2j}$ can be written explicitly in terms of a $_3F_2$ hypergeometric function as follows:
\begin{equation}\label{Th3_1}
f^{(2)}_{2j}=\frac{5\cdot 72^j\Gamma(\frac{3j}{2})(2j)!}{48(3j+2)\Gamma(j+1)\Gamma(\frac{j}{2}+1)}
\, _3F_2\left(\begin{array}{l} -j+1,2,6\\ 5,-\frac{3j}{2}+1 \end{array};\frac{3}{2}\right),
\end{equation}
and it has the following asymptotic behavior as $j\to\infty$:
\begin{equation}\label{Th3_2}
\begin{aligned}
f^{(2)}_{2j} &=\frac{K_2 (2j)!}{u_c^{2j}j}\left(1+\mathcal{O}(j^{-1/2})\right),
\qquad
K_2=\frac{1}{48}\,.
\end{aligned}
\end{equation}
\end{theorem}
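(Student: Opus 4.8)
The plan is to reduce $F^{(2)}(u)$ to explicit genus-$0$ data via the string equations and the Toda equation, then to extract the Taylor coefficients by Lagrange inversion and their large-$j$ asymptotics by Darboux's method. Let $P_n(z)=z^n+\dots$ be the monic polynomials orthogonal with respect to $e^{-NV(z)}\,dz$ on $\Ga$, with $\int_\Ga P_nP_m\,e^{-NV}=h_n\delta_{nm}$, and let $R_n=h_n/h_{n-1}$ and $b_n$ be the three-term recurrence coefficients, $zP_n=P_{n+1}+b_nP_n+R_nP_{n-1}$. Integrating $(P_n^2e^{-NV})'$ and $(P_nP_{n-1}e^{-NV})'$ over $\Ga$ (the endpoint terms vanish by the choice of $\Ga$) gives the string equations $b_n=3u(R_{n+1}+b_n^2+R_n)$ and $n/N=R_n\bigl(1-3u(b_n+b_{n-1})\bigr)$. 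Substituting the formal expansions $R_n=\sum_{k\ge0}R_k(\lambda)N^{-k}$, $b_n=\sum_{k\ge0}b_k(\lambda)N^{-k}$ with $\lambda=n/N$ and Taylor-expanding the shifts $n\pm1$ about $\lambda$, one finds at order $N^0$ the genus-$0$ algebraic system $b_0=3u(2R_0+b_0^2)$, $\lambda=R_0(1-6ub_0)$, and at each higher order a \emph{linear} system for $(R_k,b_k)$ in terms of lower-order data. The legitimacy of these expansions, uniformly on $[0,u_c-\varepsilon]$, is exactly what the Riemann--Hilbert analysis behind Theorem \ref{Th1} supplies.

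The next step is a closed form for $F^{(2)}$. From $Z_N=N!\prod_{k<N}h_k$, $h_k/h_{k-1}=R_k$, and the Toda equation (for the deformed potential $V(z)-tz$) one has $N^2F_N=N\ln\frac{h_0(u)}{h_0(0)}+\sum_{k=1}^{N-1}(N-k)\ln\frac{NR_k}{k}$; Euler--Maclaurin summation and separation of powers of $N$ yield $F^{(0)}(u)=\int_0^1(1-\lambda)\ln\frac{R_0(\lambda;u)}{\lambda}\,d\lambda$ (from which Theorem \ref{Th2} follows by Lagrange inversion), and, once the interior $\lambda$-integrals at the next order collapse to endpoint data, a closed expression for $F^{(2)}(u)$ in terms of $R_0(1;u)$, $b_0(1;u)$ and their $u$-derivatives --- the one-cut genus-$1$ formula. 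At $\lambda=1$ the genus-$0$ system reduces, on setting $s:=u\,b_0(1;u)$, to $6u^2=s(1-3s)(1-6s)$ with $R_0(1;u)=1/(1-6s)$, so that $F^{(2)}$ becomes an explicit function of $s$ and $u$.

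Then come the coefficients and their asymptotics. Lagrange inversion applied to $s=6u^2\bigl[(1-3s)(1-6s)\bigr]^{-1}$ expresses $s$, and hence $F^{(2)}(u)$, as a power series in $u^2$ whose $u^{2j}$-coefficient is a terminating binomial-type sum; reindexing that sum identifies it with the ${}_3F_2\!\left(-j+1,2,6;5,-\tfrac{3j}{2}+1;\tfrac32\right)$ of \eqref{Th3_1}, the parameter shifts by $\tfrac{3j}{2}$ and the argument $\tfrac32$ recording that $2j$ trivalent vertices carry $3j$ half-edges. For the large-$j$ behavior, $\varphi(s)=s-9s^2+18s^3$ satisfies $\varphi'(s_c)=0$ at $s_c=\tfrac{3-\sqrt3}{18}$, so $u-u_c$ vanishes to second order in $s-s_c$ and $\tfrac{d}{du}R_0(1;u)$ blows up like $(u_c-u)^{-1/2}$; feeding this into the closed form shows $F^{(2)}(u)=-\tfrac1{48}\ln\bigl(1-u^2/u_c^2\bigr)+R(u)$, with $R$ analytic at $u=\pm u_c$ and its remaining singularities on $|u|=u_c$ of square-root type, so that Darboux's method gives $f^{(2)}_{2j}/(2j)!=\frac{1}{48\,u_c^{2j}\,j}\bigl(1+\mathcal O(j^{-1/2})\bigr)$, which is \eqref{Th3_2}.

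I expect the middle step to be the main obstacle: carrying out the Euler--Maclaurin/Toda bookkeeping exactly --- including the endpoint contributions at $\lambda=0$ and $\lambda=1$ and the $\ln(h_0(u)/h_0(0))$ term --- verifying that the interior $\lambda$-integrals at order $N^0$ collapse to the genus-$1$ one-cut expression in $R_0(1)$, $b_0(1)$, and then compressing that algebraic expression in $s$ into a single ${}_3F_2$. Once a workable closed form of $F^{(2)}$ (or a manageable form of the sum \eqref{Th3_1}) is in hand, \eqref{Th3_2} follows by routine singularity analysis.
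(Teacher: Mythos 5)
Your route diverges from the paper's at exactly the point you flag as the main obstacle, and that is where it breaks down. You propose to get $F^{(2)}$ from $Z_N=N!\prod_{k=0}^{N-1}h_k$, i.e.\ $N^2F_N=N\ln\frac{h_0(u)}{h_0(0)}+\sum_{k=1}^{N-1}(N-k)\ln\frac{NR_k}{k}$, followed by Euler--Maclaurin over the whole range of $k$. For this cubic model on the complex contour $\Gamma$, that formula presupposes the existence and nonvanishing of $h_k$ (hence of all the orthogonal polynomials $P_0,\dots,P_{N-1}$), and the summation requires uniform asymptotics of $R_k,b_k$ for $k/N$ throughout $(0,1]$, including $k$ of order one. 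Neither is available here: the Riemann--Hilbert analysis (Theorem \ref{RHP_t}) only yields existence and asymptotics for $n/N$ in a small neighborhood of $1$, and for a cubic weight on a complex contour there is no positivity argument guaranteeing $h_k\neq 0$ for small $k$ at fixed $u>0$. This is precisely why the paper does not use the product formula: it changes variables to $\tilde V(\zeta)=-\zeta^3/3+t\zeta$, proves the Toda equation $\frac{d^2\tilde F_N}{dt^2}=\tilde\gamma_N^2$ for all $t>t_c$ by analytic continuation from the regime of small $u$ where all $P_n$, $n\le N$, do exist (Proposition \ref{Toda_eq}, Appendix \ref{A_Toda}), and then integrates the expansion of $\tilde\gamma_N^2$ twice in $t$ (Proposition \ref{prop_41}), so that only $n\approx N$ is ever needed. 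In addition, your ``interior $\lambda$-integrals collapse to endpoint data'' step --- the one-cut genus-one formula in terms of $R_0(1;u),b_0(1;u)$ --- is itself a substantial unproved identity (essentially the content of \cite{Amb}/\cite{EML}, which the paper cites but deliberately avoids); as written your proposal assumes it rather than derives it.

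The remaining ingredients are sound and close to the paper's. Your genus-zero reduction at $\lambda=1$ (with $s=ub_0$, $6u^2=s(1-3s)(1-6s)$, $R_0=1/(1-6s)$, critical point $s_c=\frac{3-\sqrt3}{18}$) matches the paper's data \eqref{p0q0}--\eqref{cubicg0hat}, \eqref{hbcr}. The paper obtains $f^{(2)}_{2j}$ by solving the order-$N^{-2}$ string system for $\hat g_2(w)=\frac{162\hat g_0(w)(5-324\hat g_0(w))}{(1-108\hat g_0(w))^4}$, extracting its Taylor coefficients by a Cauchy-integral/residue computation (equivalent to your Lagrange inversion), and then integrating twice via \eqref{CDR9}; your coefficient-extraction plan would work once a correct closed form is in hand. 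For the asymptotics \eqref{Th3_2}, the paper argues via hypergeometric transformations and Watson's lemma (Appendix \ref{App_Th3_2}), while your Darboux-type singularity analysis mirrors the paper's Section \ref{Th4} computation giving $F^{(2)}(u)=-\frac{1}{48}\ln\bigl(1-u^2/u_c^2\bigr)+\dots$; note in passing that on $|u|=u_c$ the only singularities of $F^{(2)}$ are at $u=\pm u_c$ (the point $w=-w_c$ is regular on the relevant branch of $\hat g_0$), so your ``remaining square-root singularities'' caveat is harmless but not needed. The essential missing piece, however, is a justification of the passage from the recurrence-coefficient expansion to the free-energy expansion that does not rely on all $n\le N$; without the Toda-equation argument (or an equivalent), the proof of \eqref{Th3_1} is not complete.
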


\begin{remark}
The $_3F_2$ function in formula \eqref{Th3_1} can actually be written as a linear combination of two $_2F_1$ functions:
\begin{equation}
\begin{aligned}
_3F_2\left(\begin{array}{l} 2,6,-j\\ 5,-\frac{3j}{2}-\frac{1}{2} \end{array};\frac{3}{2}\right)
=&\left[ _2F_1\left(\begin{array}{l} -j,2\\ -\frac{3j}{2}-\frac{1}{2} \end{array};\frac{3}{2}\right)\right.\\
&\left.+\frac{6j}{5(3j+1)}\, _2F_1\left(\begin{array}{l} -j+1,3\\ -\frac{3j}{2}+\frac{1}{2} \end{array};\frac{3}{2}\right)\right].
\end{aligned}
\end{equation}
\end{remark}

From formula \eqref{Th3_1}, the first several terms of the Taylor series of $F^{(2)}(u)$ at $u=0$ are
\begin{equation}\label{F2}
\begin{aligned}
F^{(2)}(u)&=\frac{3}{2}u^2+189u^4+26892u^6+4076568u^8+\frac{3213210384}{5}u^{10}\\
&+\mathcal{O}(u^{12}).
\end{aligned}
\end{equation}

In the case of genus $g>1$, an explicit formula for
the coefficients $f^{(2g)}_{2j}$ becomes complicated,
but it is possible to obtain the asymptotic behavior of these coefficients when $j\to\infty$:

\begin{theorem}\label{Thgg1}
For any $g>1$, the coefficient $f^{(2g)}_{2j}$ has the following asymptotic behavior as $j\to\infty$:
\begin{equation}\label{gg1}
\begin{aligned}
f^{(2g)}_{2j} &=\frac{K_{2g} (2j)! j^{\frac{5g-7}{2}}}{u_c^{2j}}\left(1+\mathcal{O}(j^{-1/2})\right),
\end{aligned}
\end{equation}
where
\begin{equation}\label{gg2}
K_{2g}=\frac{6\cdot 3^{1/4}C_{2g}}{\Gamma\left(\frac{5g-1}{2}\right)u_c^{g}},
\end{equation}
and $C_{2g}$ satisfies the recurrence relation
\begin{equation}\label{gg3}
\begin{aligned}
C_{2g}&=\frac{1}{2^{3/2}3^{5/4}}\bigg(\frac{\left(5g-6\right)\left(5g-4\right)C_{2g-2}}{48}
+54\mathop{\sum_{m+m'=g}}_{m,m'\le g-1} C_{2m}C_{2m'}\bigg)
\end{aligned}
\end{equation}
for $g\ge 1$, with the initial value
\begin{equation}\label{gg4}
C_0=-2^{-1/2}3^{-7/4}.
\end{equation}
\end{theorem}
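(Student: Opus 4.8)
\emph{Proof proposal.} The plan is to deduce, from the $1/N^2$ expansion of Theorem~\ref{Th1}, the precise nature of the singularity of each $F^{(2g)}(u)$ at $u=u_c$, and then to extract the Taylor-coefficient asymptotics \eqref{gg1} by singularity analysis. The starting point is the Toda-equation identity $Z_{n+1}Z_{n-1}/Z_n^{2}=R_n$, relating the partition functions to the recurrence coefficients $R_n$ of the orthogonal polynomials with weight $e^{-N(z^2/2-uz^3)}$ on $\Ga$: it expresses the free energy $F_N$ (via a double summation in $n$) in terms of the $R_n$, and the $R_n$ — together with the other recurrence coefficients — are exactly the quantities governed by the string equations. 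Inserting the large-$N$ expansion of the recurrence coefficients, available from the proof of Theorem~\ref{Th1} (following Ercolani--McLaughlin), into the string equations and collecting orders produces a triangular hierarchy: the genus-$g$ piece of $\log R$ — and, through the Toda equation, the genus-$g$ specific heat $\partial_u^2 F^{(2g)}$ — is determined recursively from the genus-$<g$ pieces and their derivatives.

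To analyse this hierarchy near $u=u_c$, recall that there the genus-$0$ data degenerates: the limiting eigenvalue density, which generically vanishes like a square root at an endpoint of its support, acquires a three-halves-power zero there, so the singular part of $\partial_u^2 F^{(0)}(u)$ is a constant times $(u_c-u)^{1/2}$. Propagating this through the hierarchy and keeping at each step only the most singular contribution, one finds that $\partial_u^2 F^{(2g)}(u)$ has leading singular term $D_{2g}\,(u_c-u)^{(1-5g)/2}$, and the equations satisfied by the suitably renormalised coefficients $C_{2g}$ collapse to exactly \eqref{gg3}: the term $\tfrac{(5g-6)(5g-4)}{48}\,C_{2g-2}$ comes from the second-derivative, genus-raising term of the string equation, and the convolution $54\sum_{m+m'=g}C_{2m}C_{2m'}$ from its quadratic nonlinearity. (Structurally this is the genus expansion of the formal solution of the limiting, Painlev\'e~I type, string equation that emerges near $u_c$.) The overall normalisations in \eqref{gg2} — the power $u_c^{g}$ and the factor $6\cdot 3^{1/4}$ — and the seed value $C_0=-2^{-1/2}3^{-7/4}$ of \eqref{gg4} are then pinned down by matching the genus-$0$ and genus-$1$ formulas already established in Theorems~\ref{Th2} and~\ref{Th3}; as a consistency check, \eqref{gg3} at $g=1$ (with empty convolution sum) returns $C_2=1/5184$, which agrees with $K_2=1/48$.

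From the local expansion of $\partial_u^2 F^{(2g)}$ at $u_c$ one recovers, by the two $u$-integrations that relate $F^{(2g)}$ to its second derivative, $F^{(2g)}(u)=B_{2g}\,(u_c-u)^{(5-5g)/2}+(\text{less singular})$ near $u=u_c$, with $B_{2g}=\tfrac{4}{(3-5g)(5-5g)}\,D_{2g}$, and the mirror statement at $u=-u_c$ by the evenness of $F^{(2g)}$. Writing $F^{(2g)}(u)=\Phi_{2g}(u^2/u_c^2)$, the function $\Phi_{2g}$ is analytic in $|x|<1$ with a single singularity, at $x=1$, of type $(1-x)^{(5-5g)/2}$; Darboux's method then yields $\dfrac{f^{(2g)}_{2j}}{(2j)!}\,u_c^{2j}\sim\dfrac{B_{2g}\,(u_c/2)^{(5-5g)/2}}{\Gamma\!\big(\tfrac{5g-5}{2}\big)}\,j^{(5g-7)/2}$, and simplifying the $\Gamma$-factors — where the argument $\tfrac{5g-1}{2}$, rather than $\tfrac{5g-5}{2}$, appears because $C_{2g}$ is attached to $\partial_u^2 F^{(2g)}$, two $u$-integrations more singular than $F^{(2g)}$ — turns the constant into $K_{2g}$ of \eqref{gg2}. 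The half-integer spacing between successive exponents in the local expansion of $F^{(2g)}$ at $u_c$ gives the $\ocal(j^{-1/2})$ error term.

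The genuine obstacle is analytic, not algebraic. One must show that the formal, term-by-term local expansion of $F^{(2g)}$ at $u_c$ is valid — that the remainder bounds in Theorem~\ref{Th1} hold uniformly as $u$ approaches $u_c$, so that the limits $N\to\infty$ and $u\to u_c$ may be interchanged coefficient by coefficient — and one must check that $F^{(2g)}$ with its leading singular term removed has strictly weaker singularities on $|u|=u_c$ (or continues analytically across that circle), which is what licenses singularity analysis and absorbs all subleading contributions into the stated error. Establishing the precise local structure of $F^{(2g)}$ at $u_c$, sharply enough to read off both the leading constant and the order of the first correction, is the heart of the matter; once it is in hand, the passage to \eqref{gg1}--\eqref{gg4} is routine bookkeeping with the recursion \eqref{gg3} and Gamma-function identities.
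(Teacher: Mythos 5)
Your proposal is correct and follows essentially the same route as the paper: the string-equation hierarchy for the expansion coefficients, inductive propagation of the square-root singularity at the critical point yielding exactly the recursion \eqref{gg3} (the second-derivative term and the quadratic convolution), double integration via the Toda equation, and extraction of the coefficient asymptotics from the $(1-u^2/u_c^2)^{(5-5g)/2}$ singularity together with the $\Gamma$-shift explaining \eqref{gg2}. The only notable difference in emphasis is that the paper does not need the interchange of $N\to\infty$ and $u\to u_c$ that you single out as the main obstacle: it first identifies $F^{(2g)}$ on $|u|<u_c$ as an explicit double integral of the string-equation coefficient $\tilde g_{2g}$ (Proposition \ref{prop_41}), then reads off the critical behavior from the half-integer-power expansions of $\hat g_{2k}$, $\hat b_{2k}$ at $w_c$ obtained inductively from the linear system \eqref{hat3}--\eqref{hat4}, and fixes $C_0$ and the normalizations in \eqref{gg2} by direct computation (change of variables $w\to t\to u$) rather than by matching against Theorems \ref{Th2} and \ref{Th3}.
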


For $g=2$ we have that
\begin{equation}\label{gg5}
F^{(4)}=\frac{8505 u^6}{2}+2217618 u^8 +\frac{3905028468 u^{10}}{5}+\mathcal O(u^{12}),
\end{equation}
and the constants $C_4$ and $K_4$ are
\begin{equation}\label{gg6}
C_4=\frac{49\cdot 2^{1/2}3^{3/4}}{17915904}\,,\qquad K_4=\frac{7}{1440\sqrt{6\pi}}\,.
\end{equation}

We note that if we  write the recursion \eqref{gg3} as follows:
\begin{equation}
C_{2g}=\mu\left(5g-6\right)\left(5g-4\right)C_{2g-2}+
\nu\mathop{\sum_{m+m'=g}}_{m,m'\le g-1} C_{2m}C_{2m'},
\end{equation}
and if we construct the following generating function (cf. \cite{BE})
\begin{equation}\label{genfun}
y(t)=\sum_{g=0}^{\infty} C_{2g}t^{\frac{1-5g}{2}},
\end{equation}
then $y(t)$ satisfies the Painlev\'e I differential equation
\begin{equation}\label{ode}
y''(t)=\frac{y^2(t)}{8\mu}-\frac{C_0^2}{8\mu}t.
\end{equation}
We can make the change of variables $t=-c \tau$ and $u=\lambda y$ to bring it to the standard form
\begin{equation}\label{PI}
u''(\tau)=6u^2(\tau)+\tau,
\end{equation}
see \cite{dlmf}. Explicitly,
\begin{equation}
c=2^{-3/5}, \qquad \lambda=2^{3/10}3^{5/4}.
\end{equation}

The proof of Theorem \ref{Th1} relies on the fact that the equilibrium measure for this problem is supported by a single interval $[a,b]$ on the real axis, provided $0\leq u< u_c$.
This will be shown in Section \ref{eq}. As a consequence, we will be able to apply the technique
developed in \cite{EML} to the proof of Theorem \ref{Th1}.

The precise structure of the coefficients $F^{(2g)}(u)$ of the asymptotic expansion of
the free energy will be analyzed using the following steps:
\begin{itemize}
\item First we apply the Deift--Zhou nonlinear steepest descent
method to the associated Riemann--Hilbert problem, in order to obtain the large $N$ asymptotic expansion of the recurrence coefficients $\gamma^2_n$ and $\beta_n$ of the corresponding orthogonal polynomials $P_n(z)$, when the index $n$ is of the order of $N$. The coefficients in this asymptotic expansion
will be found by using the string equations, which are nonlinear algebraic relations for $\gamma^2_n$ and $\beta_n$. Simultaneously, this proves the existence of the orthogonal polynomials $P_n(z)$ when the index $n$ is of the order of $N$ and $N$ is large enough.
 \item Then we make a change of variable to reduce the cubic polynomial $V(M)=\frac{M^2}{2}-uM^3$ to
$\tilde V(M)=tM-\frac{M^3}{3}$, and we derive simple formulae for the partition function and the recurrence coefficients under this change of variable. The change of variable allows us
to use the Toda equation that connects the second derivative of the free energy $\tilde F_N$ with respect to the parameter $t$
to the recurrence coefficient $\gamma^2_N$.
\item By integrating the large $N$ asymptotic expansion for $\gamma^2_N$ term by term, we obtain detailed information about the large $N$ asymptotic expansion of the free energy $F_N(u)$, and in particular the form of the different $F^{(2g)}(u)$ terms. Note that in this derivation we only need the recurrence coefficients $\gamma^2_N$ and $\beta_N$ for large values of $N$, whose existence and asymptotic behavior are proved via the Riemann--Hilbert analysis before.
\end{itemize}

We note that an alternative method to derive formulae for $F^{(2g)}$ was proposed in \cite{Amb},
using the so called loop equations (see \cite{EMLloop} for a rigorous derivation of the loop equations).
In \cite{Amb}, explicit expressions for $F^{(2)}$ and $F^{(4)}$ are given,
in terms of elementary functions that depend on the endpoints of the support of the equilibrium measure
associated with the corresponding potential. In \cite{CE,Eyn}, a general heuristic formula for computing
$F^{(g)}$ is presented for arbitrary potentials and multicut cases.

We would like also to bring attention to the very recent paper \cite{EP} of Ercolani and Pierce, which
appeared after the present paper had been posted on arXiv, and which cited the
present paper.  In \cite{EP}
a different approach, based on the so called difference string equations, is developed for the calculation of the terms of the topological expansion in the
random matrix model with a general potential $V(M)$.

The rest of the paper is organized as follows: in Sections \ref{eq} and \ref{OPs} we analyze the equilibrium measure and
the corresponding Riemann-Hilbert problem for orthogonal polynomials with respect to the cubic-type weight. In Section \ref{rec_coeff} we
apply the string equations to the large $N$ expansion of the recurrence coefficients $\gamma_n^2$ and $\beta_n$.
This is used in Section \ref{free_energy} to obtain the large $N$ expansion of the free energy $F_N(u)$,
and this leads to the proof of Theorems \ref{Th2} and Theorem \ref{Th3} in Section \ref{Th2Th3}.
In Section \ref{Th4} we prove Theorem \ref{Thgg1}, expanding the recurrence coefficients around
the critical point $u=u_c$ and selecting the terms that give the dominant behavior in the free energy.
Finally, Section \ref{graphs} is devoted to the interpretation of the topological expansion in terms
of connected graphs embedded in closed Riemann surfaces.

\section{Equilibrium measure and the Riemann--Hilbert analysis}\label{eq}

\subsection{Construction}

Let us denote by $\varrho(s)$ the density of the equilibrium measure for this problem, that we assume supported on a certain curve $J\subset\mathbb{C}$ with endpoints $z=a$ and $z=b$. We consider $J$ oriented from $a$ to $b$, and we take the $+$-side on the left of $J$ and the $-$-side on the right of $J$, following the standard convention.

The resolvent
\begin{equation}
\om(z)=\int_J \frac{\varrho(s)ds}{z-s}
\end{equation}
is analytic off $J$, and it satisfies the Euler--Lagrange equation on $J$,
\begin{equation}\label{int1}
\begin{aligned}
&\om_+(z)+\om_-(z)=V'(z),\qquad z\in J,
\end{aligned}
\end{equation}
and the asymptotics at infinity,
\begin{equation}\label{int2}
\om(z)=\frac{1}{z}+\mathcal{O}(z^{-2}),\quad z\to\infty.
\end{equation}
To solve (\ref{int1}), we write
\begin{equation}\label{wRh}
\om(z)=\frac{1}{2}V'(z)-\frac{1}{2}\sqrt{R(z)}\,h(z),
\end{equation}
where $h(z)$ is an analytic function and
\begin{equation}\label{int3}
R(z)=(z-a)(z-b),
\end{equation}
since we are assuming that we are in the one cut case. We take the principal sheet for $\sqrt{R(z)}$, with a cut on $J$.
Due to the Plemelj formula, we have
$$
\om_-(z)-\om_+(z)=2\pi i\varrho(z)=\sqrt{R(z)}h(z)\Rightarrow \varrho(z)=\frac{1}{2\pi i}\sqrt{R(z)}h(z).
$$
We easily deduce from \eqref{wRh} that $h(z)$ is a polynomial of degree $1$.
Moreover, if we write
\begin{equation}\label{int4}
a=x-y, \qquad b=x+y
\end{equation}
and $h(z)=A(z-z_0)$, then identifying the asymptotics at infinity on both sides of \eqref{wRh}, we find
\begin{equation}\label{system}
\begin{aligned}
A &=-3u,\\
z_0 &=\frac{1}{3u}-x,\\
2x^2+y^2&=\frac{2x}{3u}\,,\\
\frac{y^2}{4}\left(1-6ux\right)&=1.
\end{aligned}
\end{equation}
Thus,
\begin{equation}\label{int5}
\begin{aligned}
\om(z)&=\frac{z-3uz^2}{2}-\frac{1}{2}\sqrt{(z-a)(z-b)}\,(1-3uz-3ux)
\end{aligned}
\end{equation}
and therefore
\begin{equation}\label{int6}
\begin{aligned}
\varrho(z)&=\frac{1}{2\pi i}\sqrt{(z-a)(z-b)}\,(1-3uz-3ux).
\end{aligned}
\end{equation}
It follows that $\varrho(z)^2$ has a double zero at
\[
z_0=\frac{1}{3u}-x
\]
 and two simple roots
located at $z_{1,2}=a,\, b$.

Next we will prove the following facts:
\begin{itemize}
 \item For $0\le u<u_c$, where $u_c$ is given by \eqref{uc}, the equilibrium measure is supported
on an interval $J=[a,b]$ of the real axis. The endpoints of this interval are analytic functions
of the parameter $u$.
\item For $0\le u<u_c$, we can extend the interval $J$ to an unbounded contour
$\tilde \Gamma_0=\gamma_1\cup J\cup \gamma_2$ in the upper half-plane,
\[
\mathbb{C}_+=\{z\in\C:\; \Im z\ge 0\},
\]
 in such a way that
\begin{equation}\label{Rephi}
\begin{aligned}
 \phi_1(z)>0 & \qquad z\in\gamma_1,\\
 \phi_2(z)>0 & \qquad z\in\gamma_2,
\end{aligned}
\end{equation}
where
\begin{equation}\label{phi12}
\begin{aligned}
\phi_1(z)&=\frac{1}{2}\int_a^z \sqrt{R(s)}h(s)ds ,\\
\phi_2(z)&=\frac{1}{2}\int_b^z \sqrt{R(s)}h(s)ds.\\
\end{aligned}
\end{equation}
Additionally, as $|z|\to\infty$ on $\gamma_1$ and $\gamma_2$, it is possible to estimate the growth of $\Re \phi_1(z)$ and $\Re \phi_2(z)$:
\begin{equation}\label{growthphi}
\Re \phi_i(z)=-\frac{uz^3}{3}+\mathcal{O}(z^2), \qquad i=1,2.
\end{equation}
This last result is relevant in the steepest descent method applied to the Riemann--Hilbert problem,
in order to show that the jump matrices outside of $J$ tend to the identity when $N\to\infty$.

\end{itemize}
\subsection{Support of the equilibrium measure}
Combining the last two equations
in \eqref{system}, we obtain the following cubic equation for the variable $x$ as a function of the parameter $u$:
\begin{equation}\label{cubic}
18u^2x^3-9ux^2+x-6u=0.
\end{equation}
If $u>0$ is small, then this equation has the three solutions,
\begin{equation}\label{cubic2}
x_1=6u+\mathcal O(u^2),\qquad x_2=\frac{1}{6u}+\mathcal O(1),\qquad x_3=\frac{1}{3u}+\mathcal O(1).
\end{equation}
In what follows we will be interested in the first root, $x=x_1$, which is the solution that remains bounded when $u$ is small. More terms in the expansion of $x$ are
\begin{equation}\label{cubic3x}
x=x_1(u)=6u+324u^3+31104u^5+\mathcal O(u^7).
\end{equation}
From the last equation in \eqref{system} we find
\begin{equation}\label{cubic3y}
y=2+36u^2+2916u^4+\mathcal O(u^6).
\end{equation}
The support of the equilibrium measure is the interval $[a,b]=[x-y,x+y]$. As $u\to 0$,
it converges to $[-2,2]$, which is the support of the equilibrium measure for GUE.

The series that we obtain in \eqref{cubic3x} can be proved to be convergent for small values of $u$:

\begin{proposition}\label{Propxu}
The series \eqref{cubic3x} is convergent for $|u|<u_c$, where
$u_c$ is given by \eqref{uc}.
\end{proposition}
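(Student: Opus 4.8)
The plan is to realize $x_1(u)$ as a distinguished branch of the algebraic function cut out by the cubic \eqref{cubic} and to show that this branch is free of singularities in the open disk $|u|<u_c$, the obstruction appearing only on the circle $|u|=u_c$. Write $P(x,u)=18u^2x^3-9ux^2+x-6u$, so that \eqref{cubic} reads $P(x,u)=0$. Since $P(0,0)=0$ and $\partial_x P(0,0)=1\neq 0$, the implicit function theorem produces a unique analytic function $x_1(u)$ near $u=0$ with $x_1(0)=0$ and $P(x_1(u),u)\equiv 0$; this is precisely the root whose Taylor expansion is \eqref{cubic3x} (the solution that stays bounded as $u\to 0$, the other two escaping to infinity). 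Let $\rho>0$ be the radius of convergence of \eqref{cubic3x}. The assertion $\rho\geq u_c$ will follow once we show $x_1$ has no singularity in the punctured disk $0<|u|<u_c$, because $\rho$ equals the distance from $0$ to the nearest singular point of the analytic continuation of \eqref{cubic3x}.

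The only candidates for singularities of a root of $P(\,\cdot\,,u)=0$ are the zeros of the leading coefficient $18u^2$ and the zeros of the discriminant $\mathrm{disc}_x P(\,\cdot\,,u)$. A direct computation with the standard cubic discriminant formula gives
$$\mathrm{disc}_x P(\,\cdot\,,u)=9u^2\bigl(1-34992\,u^4\bigr).$$
Since $u_c^4=3/18^4=1/34992$ by \eqref{uc}, the nonzero zeros of the discriminant are exactly $u=\pm u_c$ and $u=\pm i u_c$, all of modulus $u_c$, while the leading coefficient vanishes only at $u=0$. Hence the only possible finite singularity of $x_1$ inside $|u|<u_c$ is $u=0$ itself, which is the center of the series and is therefore automatically a regular point of \eqref{cubic3x}.

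It remains to check that every $u_*$ with $0<|u_*|<u_c$ is a regular point of $x_1$. For such $u_*$ the leading coefficient and the discriminant of $P(\,\cdot\,,u)$ are nonzero in a neighbourhood of $u_*$, so there $P(\,\cdot\,,u)$ has three distinct simple roots $r_1(u),r_2(u),r_3(u)$, depending analytically on $u$, with $r_i(u)\neq r_j(u)$ throughout the neighbourhood, and these roots are uniformly bounded (a bound on roots in terms of coefficients, the latter bounded and bounded away from the degeneracy because $u_*\neq 0$). Continuing $x_1$ toward $u_*$ along a radius, it stays bounded and hence extends continuously to $u_*$ with $P(x_1(u_*),u_*)=0$; being a continuous selection among the three disjoint analytic branches $r_j$, it coincides with one of them near $u_*$ and is therefore analytic at $u_*$. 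Consequently the analytic continuation of \eqref{cubic3x} has no singular point on any circle $|u|=\rho'$ with $0<\rho'<u_c$, which forces $\rho\geq u_c$.

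The one place requiring a little care is $u=0$, where \eqref{cubic} degenerates: its leading coefficient vanishes and two roots run off to infinity, so a priori the monodromy around $u=0$ might mix the sheets. This is excluded because $x_1,x_2,x_3$ all have Puiseux expansions at $u=0$ in integer powers of $u$ — explicitly $x_1=6u+O(u^3)$ from \eqref{cubic3x}, while the Newton polygon of \eqref{cubic} shows the other two roots behave like $\tfrac{1}{6u}$ and $\tfrac{1}{3u}$ — so the monodromy around $0$ is trivial; in any event the "radius of convergence equals distance to the nearest singularity" argument above only uses the single function element \eqref{cubic3x}, so $u=0$ contributes nothing beyond the elementary implicit-function-theorem input. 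Everything else — the discriminant computation and the standard continuation argument — is routine. (It is worth noting that the branch point $u=u_c$ is exactly the point where the double zero $z_0$ of $\varrho^2$ collides with an endpoint of $[a,b]$, i.e.\ where the one-cut equilibrium measure ceases to be regular.)
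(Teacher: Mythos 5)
Your proposal is correct and follows essentially the same route as the paper: analyticity of the bounded branch at $u=0$ (where the root is isolated), the computation of the discriminant $9u^2(1-34992u^4)$, and the observation that its nonvanishing for $0<|u|<u_c$ keeps the branch isolated, hence analytic, throughout the disk. Your version merely spells out in more detail the standard facts about singularities of algebraic functions (leading coefficient, boundedness of the branch, monodromy at $u=0$) that the paper leaves implicit.
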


\begin{proof} Since by \eqref{cubic2} the root $x=x_1(u)$ is isolated for small $u$, $x_1(u)$ is
analytic at $u=0$.
The discriminant of cubic equation \eqref{cubic} is $\Delta=9u^2(1-34992u^4)$, and it vanishes
at the critical value $u=u_c$. Hence it does not vanish for $0<|u|<u_c$, and the root $x_1(u)$
remains isolated. This proves that $x_1(u)$ is analytic in the disk $|u|<u_c$,
and hence the Taylor series for $x_1(u)$ is convergent in this disk.
\end{proof}

It follows from the last equation of \eqref{system} that a similar result is valid for the variable $y=y(u)>0$.
Namely, $y(u)$ is obviously analytic at $u=0$. Suppose that $1-6ux_1(u)=0$, then from equation \eqref{cubic}
we obtain that $u=0$, which contradicts $1-6ux_1(u)=0$. Therefore,
\begin{equation}\label{x1u}
1-6ux_1(u)\not=0,\qquad |u|<u_c,
\end{equation}
hence $y(u)$ is analytic in the disk $|u|<u_c$.
This shows that for real $0\le u<u_c$, both $x=x_1(u)$ and $y(u)>0$ are real, so the support of the equilibrium measure
is the interval $[a,b]$ on the real axis. Moreover, the endpoints $z=a$ and $z=b$ are analytic functions
of the parameter $u$.

In order to prove that we are in the one-cut regular case, we also need to show that the double root $z_0$
lies outside $[a,b]$:

\begin{proposition}\label{suppeq}
For $0\leq u<u_c$, we have that $z_0>b$, so that the double root is outside the interval $[a,b]$.
\end{proposition}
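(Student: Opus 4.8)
The plan is to reduce Proposition~\ref{suppeq} to showing that a single explicit function is positive on $(0,u_c)$, and then to pin down where that function could vanish using the relations in~\eqref{system}. Set $D(u)\eqd z_0-b=\dfrac{1}{3u}-2x_1(u)-y(u)$; at $u=0$ the statement holds in the limiting sense, since $z_0=\tfrac{1}{3u}-x\to+\infty$ as $u\to0^+$, so the content is the case $u\in(0,u_c)$. By Proposition~\ref{Propxu} and the analyticity of $y(u)$ noted after \eqref{x1u}, both $x_1(u)$ and $y(u)$ are real and analytic on $(0,u_c)$, hence so is $D$; and the expansions \eqref{cubic3x}, \eqref{cubic3y} give $D(u)=\dfrac{1}{3u}-2+\mathcal O(u)\to+\infty$ as $u\to0^+$, so $D>0$ near $0$. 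By continuity it then suffices to prove $D(u)\neq0$ for all $u\in(0,u_c)$.

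To do this I would argue by contradiction. Suppose $D(u_0)=0$ for some $u_0\in(0,u_c)$, and abbreviate $x=x_1(u_0)$, $y=y(u_0)>0$. Then $z_0=b$ reads $y=\dfrac{1}{3u_0}-2x$; substituting this into the third equation of \eqref{system} and clearing the denominators produces the quadratic
\[
54u_0^2x^2-18u_0x+1=0,\qquad\text{so}\qquad x=\frac{3\pm\sqrt3}{18u_0}.
\]
For the $+$ root one gets $1-6u_0x=-\sqrt3/3<0$, which is incompatible with \eqref{x1u} (equivalently, the fourth equation of \eqref{system} forces $1-6ux>0$ because $y^2>0$). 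For the $-$ root one finds $1-6u_0x=\sqrt3/3$ and $y=\sqrt3/(9u_0)$, and plugging these into the fourth equation of \eqref{system} gives $\sqrt3/(324u_0^2)=1$, i.e.\ $u_0^2=\sqrt3/324$, so $u_0=3^{1/4}/18=u_c$ by \eqref{uc} --- contradicting $u_0<u_c$. Hence $D$ does not vanish on $(0,u_c)$, and together with the sign information near $0$ this gives $D(u)>0$, i.e.\ $z_0>b$, on all of $[0,u_c)$.

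The computations here are elementary; the point one has to notice to keep the argument short is the structural fact that the only value of $u$ in $(0,u_c]$ at which $z_0$ can reach the endpoint $b$ --- i.e.\ the only place where the one-cut regular picture degenerates --- is precisely the critical value $u_c$ already singled out by the discriminant in Proposition~\ref{Propxu}. That is exactly what allows a purely topological continuity argument to carry the inequality from a neighbourhood of $u=0$ all the way up to $u_c$, with no need for any monotonicity estimate in $u$. The main thing to be careful about is that the argument genuinely relies on the global reality of $x_1(u)$ and $y(u)$ on $(0,u_c)$, which is supplied by the analyticity statements already established; a more computational route --- differentiating \eqref{system} in $u$ to control $D'(u)$ directly --- would in principle also work but is considerably messier and less transparent.
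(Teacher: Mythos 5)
Your proof is correct and follows essentially the same strategy as the paper's: establish $z_0>b$ for small $u>0$ from $z_0=\frac{1}{3u}-x\to+\infty$, then show by contradiction, using the relations in \eqref{system}, that $z_0=b$ at some $u\in(0,u_c)$ would force $u=u_c$. The only difference is the elimination step: the paper reduces to two cubics in $v=ux$ (from \eqref{cubic} and $(1-6ux)^3=36u^2$) and applies the Euclidean algorithm to get $u^2(1-34992u^4)=0$, while you solve an explicit quadratic coming from the third equation of \eqref{system} and discard one root by the sign of $1-6ux$ — an equivalent, slightly more explicit computation.
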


\begin{proof}
By the second equation in \eqref{system},
\[
z_0=\frac{1}{3u}-x\to +\infty,
\]
as $u\to 0^+$, so $z_0>b$ for small $u>0$. Suppose that for some $0<u<u_c$ we have $z_0=b$. Then
\[
\frac{1}{3u}-x=x+y,
\]
hence
\[
\left(\frac{1}{3u}-2x\right)^2=y^2=\frac{4}{1-6ux}\,,
\]
and
\[
(1-6ux)^3=36u^2.
\]
Denote $v=ux$. Then the latter equation and \eqref{cubic} give two cubic equations on $v$:
\[
(1-6v)^3-36u^2=0,\qquad 18v^3-9v^2+v-6u^2=0.
\]
By applying the Euclidean algorithm to these two cubic equations, we obtain that $u^2(1-34992u^4)=0$,
which is not true for $0<u<u_c$. Therefore, $z_0>b$ for $0<u<u_c$.
\end{proof}

When $u=u_c$, the support of the equilibrium measure will be the interval
\begin{equation}
%[a,b]=3^{3/4}\left[1-\sqrt{3},1+\frac{1}{\sqrt{3}}\right].
[a,b]=[3^{3/4}-3^{5/4},3^{3/4}+3^{1/4}],
\end{equation}
and the double root of the function $\varrho(z)^2$ is located at
\begin{equation}
z_c=3^{3/4}+3^{1/4},
\end{equation}
which coincides with the right endpoint of the support of the equilibrium measure.

We prove now that there exists an extension of the interval $J$ to a curve $\Gamma$ in $\mathbb{C}$
with right properties for the Riemann--Hilbert analysis.

\begin{proposition}
There exist two curves $\gamma_1$ and $\gamma_2$ in $\mathbb{C}$ such that conditions \eqref{Rephi} hold true.
\end{proposition}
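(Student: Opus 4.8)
The plan is to look for $\gamma_1$ and $\gamma_2$ among the critical trajectories of the polynomial quadratic differential $Q(z)\,dz^2:=\bigl(\phi_1'(z)\bigr)^2\,dz^2=\tfrac14 R(z)h(z)^2\,dz^2$, that is, as maximal analytic arcs on which $\Im\phi_1$ is constant (equivalently $\Im\phi_2$, since by \eqref{phi12} one has $\phi_2=\phi_1-c_0$ with $c_0:=\tfrac12\int_a^b\sqrt{R(s)}h(s)\,ds$ a constant, $\phi_1$ and $\phi_2$ having the same derivative $\tfrac12\sqrt{R}h$). Two preliminary facts will be used. First, by Proposition~\ref{suppeq} we have $z_0>b>a$ for $0\le u<u_c$, so $h$ has its unique, simple zero at $z_0\notin[a,b]$ and $h(a)=-3u(a-z_0)>0$, $h(b)=-3u(b-z_0)>0$. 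Second, using the boundary value $\sqrt{R(s)}_+=i\sqrt{(s-a)(b-s)}$ on $(a,b)$ together with $h>0$ there, $c_0=\tfrac i2\int_a^b\sqrt{(s-a)(b-s)}\,h(s)\,ds$ is a nonzero purely imaginary number; hence $\Re\phi_2\equiv\Re\phi_1$, while $\phi_1(a)=0$ and $\phi_2(b)=0$. With these in hand I take $\gamma_1:=(-\infty,a)$ on the real axis: there $\sqrt{R(z)}$ is real and negative (the principal branch obeys $\sqrt{R(z)}\sim z$ at $\infty$) and $h(z)=3u(z_0-z)>0$, so $\phi_1'(z)=\tfrac12\sqrt{R(z)}h(z)<0$, and $\phi_1$ is real and strictly increasing from $\phi_1(a)=0$ to $+\infty$ as $z\to-\infty$; since moreover $\phi_1'(z)\sim-\tfrac{3u}{2}z^2$ at $\infty$, this also yields \eqref{growthphi} for $i=1$. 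In particular $\phi_1>0$ on $\gamma_1$.

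The construction of $\gamma_2$ is not symmetric, since $z_0$ lies to the right of $b$. On the segment $(b,z_0]$ of the real axis, $\sqrt{R(z)}>0$ and $h(z)>0$, so $\phi_2'>0$ and $\phi_2$ increases from $\phi_2(b)=0$ up to the positive value $\phi_2(z_0)$. At $z_0$, a simple zero of $h$, the function $\phi_2$ has a saddle: $\phi_2''(z_0)=\tfrac12\sqrt{R(z_0)}\,h'(z_0)=-\tfrac{3u}{2}\sqrt{R(z_0)}<0$ is real (recall $\sqrt{R(z_0)}>0$), so near $z_0$ one has $\phi_2(z)-\phi_2(z_0)=\tfrac12\phi_2''(z_0)(z-z_0)^2+\mathcal O\bigl((z-z_0)^3\bigr)$. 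Consequently the level set $\{\Im\phi_2=0\}$ through $z_0$ has four prongs: the two real-axis directions, along which $\phi_2$ decreases away from $z_0$, and the two vertical directions $\arg(z-z_0)=\pm\tfrac\pi2$, along which $\phi_2$ increases away from $z_0$. I let $\gamma_2'$ be the maximal $Q$-trajectory leaving $z_0$ in the direction $\arg(z-z_0)=+\tfrac\pi2$, and set $\gamma_2:=(b,z_0]\cup\gamma_2'$.

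Along $\gamma_2'$ the function $\phi_2$ is real, and since $\phi_2'=\tfrac12\sqrt{R}h$ vanishes only at the three zeros $a,b,z_0$ of $Q$, $\phi_2$ is strictly monotone on $\gamma_2'$; by the local analysis it is increasing, starting from $\phi_2(z_0)>0$. Hence $\phi_2>0$ on $\gamma_2'$, and therefore $\phi_2>0$ on $\gamma_2$ away from the endpoint $b$, which is what \eqref{Rephi} requires. To see that $\gamma_2'$ escapes to infinity, note that — $Q$ having a pole of order $8\ge 3$ at $\infty$ — there are no recurrent trajectories, so $\gamma_2'$ is a critical trajectory and thus either joins two zeros of $Q$ or tends to $\infty$; it cannot reach $a$ (where $\phi_2\to-c_0\notin\R$), nor $b$ (where $\phi_2\to 0<\phi_2(z_0)$), nor return to $z_0$ (monotonicity), so $\gamma_2'\to\infty$. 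Since $\Im\phi_2=0$ and $\phi_2\to+\infty$ along it and $\phi_2(z)=-\tfrac u2 z^3+\mathcal O(z^2)$ for large $z$, the escape direction is one of the rays $\arg z=\tfrac\pi3$ or $\pi$ lying in $\mathbb{C}_+$, which gives \eqref{growthphi} for $i=2$. Finally $\gamma_2'$ stays in the open upper half-plane: on it $\phi_2$ is real with $\phi_2>\phi_2(z_0)$, whereas on $(a,b)$ one has $\Re\phi_2=0$, on $(b,z_0]$ one has $0<\phi_2\le\phi_2(z_0)$, on $(z_0,\infty)$ one has $\phi_2<\phi_2(z_0)$, and on $(-\infty,a]$ the value $\phi_2$ is non-real, so $\gamma_2'$ can never return to the real axis. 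Thus $\gamma_1,\gamma_2\subset\mathbb{C}_+$, the set $\tilde\Gamma_0=\gamma_1\cup J\cup\gamma_2$ is an unbounded contour, and \eqref{Rephi} holds.

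The main obstacle is precisely this global behaviour of $\gamma_2'$: the saddle expansion at $z_0$ is elementary, but to conclude that the trajectory issuing upward from $z_0$ truly escapes to infinity — rather than spiralling or closing up — one invokes the structure theory of trajectories of meromorphic quadratic differentials on $\widehat{\C}$, in particular the absence of recurrent trajectories when a pole of order $\ge 3$ is present, together with the monotonicity of $\phi_2$ and its values at $a,b,z_0$. Alternatively one can avoid quadratic differentials altogether and argue with the single-valued harmonic function $\Re\phi_2$ on $\C\setminus[a,b]$, whose only critical points are $a,b,z_0$ and whose zero set is $[a,b]$ together with finitely many arcs running to $\infty$, analysing the unbounded component of $\{\Re\phi_2>0\}\cap\mathbb{C}_+$ and its boundary. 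It is worth recording that the subsequent Riemann--Hilbert analysis only needs $\Re\phi_i>0$ on $\gamma_i$ outside a neighbourhood of $J$, so by Cauchy's theorem there is wide latitude in choosing $\gamma_1,\gamma_2$ within $\{\Re\phi_i>0\}$; the trajectories above are merely a convenient canonical representative.
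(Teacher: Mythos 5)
Your construction is essentially the paper's own proof: take $\gamma_1=(-\infty,a)$ and $\gamma_2=(b,z_0]$ continued by the trajectory of the quadratic differential emanating vertically from the double zero $z_0$, with the same sign analysis on the real segments and the same cubic growth of $\Re\phi_i$ at infinity. The only difference is that you spell out in more detail why that trajectory must escape to infinity without returning to the real axis (no recurrent trajectories for a single pole of order $\ge 3$, monotonicity of $\phi_2$, and the catalogue of boundary values of $\phi_2$ on $\R$), a point the paper handles more briefly by citing the general theory of quadratic differentials; your argument is correct.
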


%%%%%%%%%%%%% Fig.  %%%%%%%%%%%%%%
\begin{center}
\begin{figure}[h]\label{Gamma_t1}
\begin{center}
\scalebox{0.7}{\includegraphics{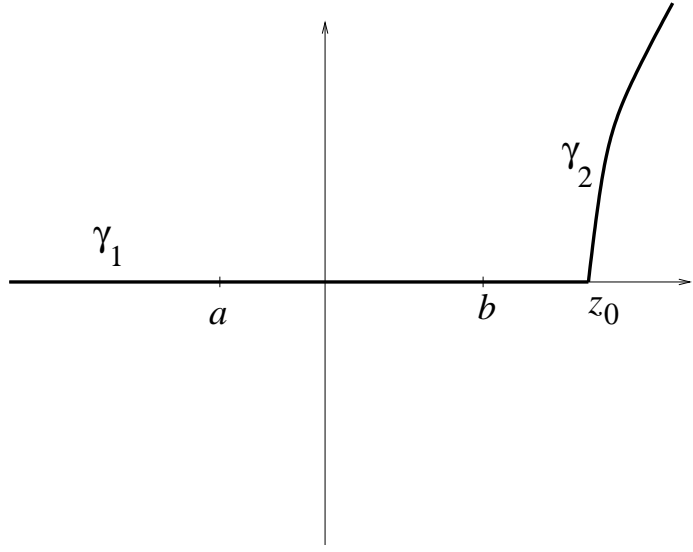}}
\end{center}
  \caption[sectors ]{The  contour  $\tilde \Ga_0=\ga_1\cup[a,b]\cup\ga_2$.}
 \end{figure}
\label{Gammat1}
\end{center}
%%%%%%%%%%%%%%%%%%%%%%%%%%%%%%%%%%

\begin{proof}
If $b<z<z_0$ then $(z-a)(z-b)>0$ and $z-z_0<0$ again, so $\Re \phi_2(z)>0$ and we can take the segment $(b,z_0)$ as part of $\gamma_2$. However, we cannot take the whole interval $(b,\infty)$ as $\gamma_2$, since $\Re \phi_2(z)<0$ if $z>z_0$. Nevertheless, if we define the quadratic differential $-Q(z)dz^2$, where
\begin{equation}
Q(z)=\frac{1}{4}R(z)h^2(z),
\end{equation}
then it follows from the general theory (see \cite[\S 8.2]{Pom} or \cite[Theorem 7.1]{Str}) that, since $z=z_0$ is a double zero of $-Q(z)dz^2$, then there are exactly four trajectories that emanate from $z=z_0$, along which $\Im Q(z)=0$. Locally, these trajectories leave the point $z=z_0$ with angles $k\pi/2$, $0\leq k\leq 3$. We choose the trajectory that leaves $z=z_0$ with angle $\pi/2$ into the upper half plane. This trajectory must go to $\infty$, since $\phi_2(z_0)>0$ and $\phi_2(z)$ is increasing on $\gamma_2$, so it cannot go back to the real axis. Now,
\begin{equation}
\begin{aligned}
0<\phi_2(z)-\phi_2(z_0)&=\frac{1}{2}\int_{z_0}^z \sqrt{R(s)}h(s)ds\\
&=\int_{z_0}^z \left(\frac{V'(s)}{2}-\om(s)\right)ds\\
&=\frac{V(z)-V(z_0)}{2}+\mathcal{O}(\ln|z|)=
-\frac{uz^3}{3}+\mathcal{O}(z^2)
\end{aligned}
\end{equation}
when $|z|\to\infty$. Since $\phi_2(z)$ is real for $z\in\gamma_2$, we have
$\Re \phi_2(z)=\phi_2(z)$, and
\begin{equation}
\Re \phi_2(z)=-\frac{uz^3}{3}+\mathcal{O}(z^2)>0.
\end{equation}
This also shows that $\gamma_2$ goes to infinity with angle $\pi/3$.

We can take $\gamma_1=(-\infty,a)$, since for $z<a$ we have $(z-a)(z-b)>0$ and $z-z_0<0$, so
\begin{equation}
\frac{1}{2}\sqrt{R(z)}h(z)=\frac{3u}{2}\sqrt{(z-a)(z-b)}(z-z_0)<0,
\end{equation}
and hence $\Re\, \phi_1(z)>0$ for $z<a$, after integrating. Note that on $(-\infty,a)$ we need to take the square root with negative sign. On $(-\infty,a)$, with a similar computation as before, we have
\begin{equation}
\Re \phi_1(z)=\phi_1(z)=-\frac{uz^3}{3}+\mathcal{O}(z^2),
\end{equation}
and the result follows.
\end{proof}

\subsection{The Riemann--Hilbert analysis}

We will consider monic orthogonal polynomials (OPs) $P_n(z)=z^n+\ldots$ with respect to the weight function
\begin{equation}\label{w_f}
w(z)= e^{-NV(z)},
\end{equation}
so that
\begin{equation}\label{op_1}
\int_{\Gamma} P_n(z)z^k w(z)dz=0, \qquad k=0,1,\ldots,n-1.
\end{equation}
The existence of the OPs is not obvious, since the contour $\Ga$ is complex, but the Riemann--Hilbert (RH)
analysis provides us, among other properties, with the existence and uniqueness of $P_n(z)$ for large enough
values of $n$. The RH analysis also gives us various properties of the OPs. In particular, we will see that
the OPs satisfy the three-term recurrence relation,
\begin{equation}\label{TTRR}
zP_n(z)=P_{n+1}(z)+\beta_nP_n(z)+\gamma_n^2 P_{n-1}(z),
\end{equation}
and that the recurrence coefficients $\be_n$, $\ga_n^2$ satisfy string and deformation equations.

It is known from the work of Fokas, Its and Kitaev \cite{FIK}, that the family of orthogonal polynomials
can be characterized as the solution of the following $2\times 2$ Riemann--Hilbert problem (RHP):

Find
a $2\times 2$ matrix-valued $Y_n : \mathbb C \setminus (\Ga_0\cup\Ga_1) \to \mathbb{C}^{2\times 2}$ such that
\begin{itemize}
\item $Y_n(z)$ is analytic for $z \in \mathbb{C}\setminus (\Ga_0\cup\Ga_1)$ and the limits
\[
\lim_{s\to z\pm 0}Y_{n}(s)=Y_{n\pm}(z)
\]
exist as $s$ approaches $z$ from the $\pm$-side of $\Gamma_0\cup\Gamma_1$. As usual, we assume that the
$+$-side is on the left of an oriented contour.
\item For $z \in \Ga_0\cup\Ga_1$,
\begin{equation}\label{RHP_1}
Y_{n+}(z)=Y_{n-}(z) \begin{pmatrix} 1 & \al(z)e^{-NV(z)} \\ 0 & 1 \end{pmatrix},
\end{equation}
where
\begin{equation}\label{RHP_2}
\al(z)=\al\chi_{\Ga_0}(z)+(1-\al)\chi_{\Ga_1}(z),
\end{equation}
and $\chi_A(z)$ denotes the characteristic function of the set $A$.
\item  As $z\to\infty$,
\begin{equation}\label{RHP_3}
Y_n(z)=\left(I+\mathcal{O}\left(\frac 1z\right)\right)
    \begin{pmatrix} z^n & 0 \\ 0 & z^{-n} \end{pmatrix}.
\end{equation}
\end{itemize}
We will call $n$ the {\it degree} of the RHP. This RHP has a unique solution if and only if
the monic polynomial $P_n(z)$, orthogonal with respect to the
weight function $w(z)e^{-NV(z)}$, uniquely exists (see the next section). If additionally
$P_{n-1}(z)$ uniquely exists, then the solution of the Riemann--Hilbert
problem is given by:
\begin{equation}\label{Yz}
Y_n(z)= \begin{pmatrix} P_n(z)  & (\mathcal{C}P_n w)(z) \\[1mm]
    -\frac{2\pi i}{h_{n-1}}P_{n-1}(z) & -\frac{2\pi i}{h_{n-1}}(\mathcal{C}P_{n-1}w)(z)
    \end{pmatrix},
\end{equation}
where
\begin{equation}\label{Cauchy}
    (\mathcal{C}f)(z)=\frac{1}{2\pi i}
    \int_{\Gamma}\frac{f(s)}{s-z}\ d s
\end{equation}
is the Cauchy transform on $\Gamma$, and the coefficient
$h_{n-1}$ is defined as
\begin{equation}\label{hn}
h_{n-1}=\int_{\Gamma}P^2_{n-1}(s)w(s)d s.
\end{equation}

The nonlinear steepest descent method, due to Deift and Zhou (see \cite{DZ} and also \cite{DKMVZ}), consists of a series of explicit and invertible transformations of this original Riemann--Hilbert problem, with the objective of arriving at a transformed problem for a $2\times 2$ matrix $R(z)$, where the matrix jumps tend to the identity uniformly in $\mathbb{C}$. These steps are briefly as follows:
\begin{itemize}
 \item $Y_n\mapsto T_n$, normalization at $\infty$, using the equilibrium measure on $J$.
 \item $T_n\mapsto S_n$, opening of lenses around $J$ to convert the highly oscillatory jump on $J$ into a combination of exponentially close to identity and constant matrix jumps.
 \item $S_n\mapsto R_n$, construction of global and local parametrices, which are (respectively) approximations near the interval $J$ and close to the endpoints of it. In this case, the local approximation near the endpoints $z=a$ and $z=b$ are given in terms of Airy functions.
\end{itemize}

We refer the reader to the works \cite{DKMVZ}, \cite{Deift}, \cite{EML} for the details.
In this case, the Riemann-Hilbert analysis gives the following result:

\begin{theo} \label{RHP_t} For any $\de>0$, there exist $N_0(\de)>0$ and $\ep(\de)>0$ such that for any
$u\in[0,u_c-\de]$ and any $N\ge N_0(\de)$, a solution  $Y_n(z)$ to the RHP \eqref{RHP_1}-\eqref{RHP_3}
exists for $n$ in the interval
\begin{equation}\label{I_N}
I_N=\Big\{ n:\;1-\ep(\de)\le \frac{n}{N}<1+\ep(\de)\Big\}.
\end{equation}
\end{theo}

One of the key results also given by the Riemann--Hilbert analysis is that the
recurrence coefficients $\gamma_n^2$ and $\beta_n$ admit an asymptotic expansion in inverse powers of
$N$ for large $N$. We will return to these expansions later, in Section \ref{rec_coeff}, but first we need to
discuss various properties of the orthogonal polynomials, which follow from the RHP.

\section{Orthogonal polynomials $P_n(z)$}\label{OPs}

The orthogonality
conditions \eqref{op_1} are equivalent to a linear system of equations for the coefficients of
the orthogonal polynomial
\begin{equation}\label{op1}
P_n(z)=z^n+a_{n-1}z^{n-1}+\ldots+a_1 z+a_0.
\end{equation}
Namely, if
\begin{equation}\label{op2}
c_j=\int_\Ga z^jw(z)dz,
\end{equation}
then
\begin{equation}\label{op3}
\sum_{j=0}^{n-1} c_{j+k} a_j=-c_{n+k},\qquad k=0,1,\ldots, n-1.
\end{equation}
Therefore, the orthogonal polynomial $P_n(z)$ exists and it is unique if and only if linear system
\eqref{op3} is nondegenerate, so that
\begin{equation}\label{op4}
D_{n-1}=
\left|
\begin{matrix}
c_0 & c_1 & \ldots & c_{n-1} \\
c_1 & c_2 & \ldots & c_{n} \\
\vdots & \vdots & \ddots & \vdots \\
c_{n-1} & c_n & \ldots & c_{2n-2}
\end{matrix}
\right|\not=0.
\end{equation}
In this case,
\begin{equation}\label{op5}
P_n(z)=\frac{D_n(z)}{D_{n-1}}
\end{equation}
where
\begin{equation}\label{op5a}
D_n(z)=\left|
\begin{matrix}
c_0 & c_1 & \ldots & c_{n-1} & c_n\\
c_1 & c_2 & \ldots & c_{n} & c_{n+1}\\
\vdots & \vdots & \ddots & \vdots & \vdots \\
c_{n-1} & c_n & \ldots & c_{2n-2} & c_{2n-1}\\
1 & z & \ldots & z^{n-1} & z^n
\end{matrix}
\right|,
\end{equation}
see \cite{Sze}. In addition, we have

\begin{prop} \label{OPs1} If there exists a unique orthogonal polynomial $P_n(z)$ then
\begin{equation}\label{op6}
D_n=h_nD_{n-1},
\end{equation}
where
\begin{equation}\label{op7}
h_n=\int_{\Ga} P_n(z)^2 w(z)dz.
\end{equation}
\end{prop}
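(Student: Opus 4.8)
The plan is to reduce $h_n$ to a single Hankel determinant by combining the orthogonality relations \eqref{op_1} with the cofactor formula \eqref{op5}--\eqref{op5a} for $P_n(z)$. First I would note that the hypothesis that a unique orthogonal polynomial $P_n(z)$ exists is exactly the statement $D_{n-1}\neq 0$, so that the representation \eqref{op5} is available, and also that, since $P_n(z)$ is monic of degree $n$ and orthogonal to $1,z,\dots,z^{n-1}$ against $w(z)$ on $\Ga$, every term of degree below $n$ in the first factor of $\int_\Ga P_n(z)^2w(z)\,dz$ contributes nothing. Hence
\[
h_n=\int_\Ga P_n(z)^2 w(z)\,dz=\int_\Ga P_n(z)\,z^n w(z)\,dz .
\]
This is the only place the orthogonality conditions are used.

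Next I would substitute $P_n(z)=D_n(z)/D_{n-1}$ from \eqref{op5} and integrate against $z^nw(z)$. The only $z$-dependence in the determinant $D_n(z)$ of \eqref{op5a} sits in its last row $(1,z,\dots,z^{n-1},z^n)$, so multilinearity of the determinant in that row, together with $\int_\Ga z^{k}\cdot z^n w(z)\,dz=c_{n+k}$ and the definition \eqref{op2} of the moments, gives
\[
\int_\Ga D_n(z)\,z^n w(z)\,dz=
\left|
\begin{matrix}
c_0 & c_1 & \ldots & c_{n-1} & c_n\\
c_1 & c_2 & \ldots & c_{n} & c_{n+1}\\
\vdots & \vdots & \ddots & \vdots & \vdots \\
c_{n-1} & c_n & \ldots & c_{2n-2} & c_{2n-1}\\
c_n & c_{n+1} & \ldots & c_{2n-1} & c_{2n}
\end{matrix}
\right|=D_n ,
\]
where the right-hand determinant is precisely $D_n$ as defined in \eqref{op4} (the $(n+1)\times(n+1)$ Hankel determinant of the moments). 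Dividing by $D_{n-1}$ yields $h_n=D_n/D_{n-1}$, which is \eqref{op6}.

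The argument is essentially bookkeeping with determinants, and I do not anticipate a genuine obstacle; the only point needing a word of care is the interchange of the integral over the non-compact complex contour $\Ga$ with the determinant expansion, which is immediate because each entry of the last row is integrated separately and the moment integrals in \eqref{op2} converge absolutely on $\Ga$ by the convergence of \eqref{integral}. The two identities carrying the proof are the orthogonality-and-monicity reduction $\int_\Ga P_n^2 w=\int_\Ga P_n z^n w$ and the multilinearity of $D_n(z)$ in its last row.
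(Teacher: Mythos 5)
Your proof is correct and amounts to the same elementary determinant computation as the paper's: you reduce $h_n=\int_\Ga P_n(z)z^nw(z)\,dz$ by monicity and orthogonality and then recognize the $(n+1)\times(n+1)$ Hankel determinant, exactly as in the paper's Appendix~\ref{AOP}. The only cosmetic difference is that you integrate the last row of $D_n(z)$ (via the representation \eqref{op5}--\eqref{op5a}), whereas the paper applies the coefficient relations \eqref{op3} as column operations to the last column of $D_n$; these are dual presentations of the same identity $D_n=h_nD_{n-1}$.
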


For a proof of this and subsequent propositions of this section see Appendix \ref{AOP} at the end of the paper.
Another useful formula for $D_{n-1}$ is
\begin{equation}\label{op10}
D_{n-1}=\frac{1}{n!}\int_\Ga\ldots\int_\Ga \De(z)^2\,
\prod_{j=1}^n e^{-N V(z_j)}dz_1\ldots dz_n,
\end{equation}
where
\begin{equation}\label{op11}
\De(z)=\prod_{1\leq j<k\leq n}(z_j-z_k)^2\,.
\end{equation}
see \cite{Sze}.

Now we will relate the existence of a solution to RHP \eqref{RHP_1}-\eqref{RHP_3} to
the existence and the uniqueness of orthogonal polynomials.

\begin{prop} \label{exi} Suppose that  RHP \eqref{RHP_1}-\eqref{RHP_3} has a solution $Y_n(z)$. Then
there exists a unique orthogonal polynomial $P_n(z)$.
\end{prop}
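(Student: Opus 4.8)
The plan is to read off the orthogonal polynomial from the first column of a given solution $Y_n$ of the RHP \eqref{RHP_1}--\eqref{RHP_3}, and then to upgrade ``existence'' to ``existence and uniqueness'' by re‑using the lower row of $Y_n$ together with the fact that the RHP has at most one solution. I would begin by recording two standard facts. Since the jump matrix in \eqref{RHP_1} is unipotent, $\det Y_n$ has no jump across $\Ga_0\cup\Ga_1$, and it is bounded near the origin (the only point where the three rays meet, where the local singularities cancel because $\al+(1-\al)=1$, which is exactly the compatibility that makes the RHP well posed there), so it continues to an entire function; by \eqref{RHP_3} it tends to $1$ at infinity, hence $\det Y_n\equiv1$. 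In particular every $Y_n(z)$ is invertible, and if $Y_n,\tilde Y_n$ both solved the RHP then $Y_n\tilde Y_n^{-1}$ would have no jump, would be entire, and would tend to $I$ at infinity, so $Y_n\tilde Y_n^{-1}\equiv I$; thus the RHP has at most one solution.

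For existence, the upper‑triangularity of the jump matrix means the first column of $Y_n$ has no jump, so $(Y_n)_{11}$ is entire and by \eqref{RHP_3} equals $z^n+\mathcal O(z^{n-1})$; hence it is a monic polynomial $P_n$ of degree $n$, and similarly $(Y_n)_{21}$ is a polynomial $q$ of degree at most $n-1$. The entry $(Y_n)_{12}$ and the Cauchy transform $(\mathcal CP_nw)(z)$ have the same jump $\al(z)P_n(z)w(z)$ across $\Ga_0\cup\Ga_1$ (the defining integral converging because the rays lie in $S_0,S_1,S_2$), so their difference is entire and vanishes at infinity, forcing $(Y_n)_{12}=(\mathcal CP_nw)(z)$; expanding this in powers of $1/z$ and matching $(Y_n)_{12}=\mathcal O(z^{-n-1})$ gives $\int_{\Ga}P_n(s)s^kw(s)\,ds=0$ for $k=0,\dots,n-1$, i.e.\ $P_n$ is a monic degree‑$n$ orthogonal polynomial for $w$. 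The same argument applied to $(Y_n)_{22}=(\mathcal Cqw)(z)$, matching $(Y_n)_{22}=z^{-n}+\mathcal O(z^{-n-1})$, yields $\int_{\Ga}q(s)s^kw(s)\,ds=0$ for $k=0,\dots,n-2$ and $\int_{\Ga}q(s)s^{n-1}w(s)\,ds=-2\pi i$.

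For uniqueness, let $\tilde P_n$ be any monic polynomial of degree $n$ satisfying \eqref{op_1}, and form
\[
\tilde Y_n(z)=\begin{pmatrix}\tilde P_n(z)&(\mathcal C\tilde P_nw)(z)\\[1mm] q(z)&(\mathcal Cqw)(z)\end{pmatrix}
\]
with the very same $q$ as above. Using the Plemelj jump relation for the Cauchy transform (for the jump conditions) together with the orthogonality of $\tilde P_n$ and the two moment identities for $q$ (for the asymptotics at infinity), one checks that $\tilde Y_n$ solves \eqref{RHP_1}--\eqref{RHP_3}. By the uniqueness from the first step, $\tilde Y_n=Y_n$, hence $\tilde P_n=(\tilde Y_n)_{11}=(Y_n)_{11}=P_n$. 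Therefore the monic degree‑$n$ orthogonal polynomial exists and is unique, which is exactly the assertion of the proposition.

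I expect the only genuinely delicate point to be the behaviour at the origin: one must verify that the Cauchy transforms written above are bounded there, so that the repeated ``no jump $+$ bounded near $\Ga_0\cup\Ga_1$ $+$ decay at infinity $\Rightarrow$ entire, then constant'' steps are legitimate; this boundedness again rests on the cancellation $\al+(1-\al)=1$. Checking that $\tilde Y_n$ satisfies all of \eqref{RHP_1}--\eqref{RHP_3} and extracting the moment conditions from the $1/z$‑expansions are routine.
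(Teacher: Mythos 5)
Your proposal is correct and follows essentially the same route as the paper: $\det Y_n\equiv 1$ and Liouville give uniqueness of the RHP solution, the triangular jump makes $(Y_n)_{11}$ a monic degree-$n$ polynomial, and identifying $(Y_n)_{12}$ with the Cauchy transform and matching the $\mathcal{O}(z^{-n-1})$ decay yields the orthogonality conditions. Your uniqueness step merely spells out explicitly (by inserting a competing monic orthogonal polynomial into the first row while keeping the second row $(q,\mathcal{C}(qw))$ of the given solution) what the paper compresses into the remark that uniqueness of $Y_n$ implies uniqueness of $P_n$, so it is the same argument, just with that detail made precise.
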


The following proposition proves the existence of the three term recurrence relation:

\begin{prop} \label{ttr}
Suppose that RHP \eqref{RHP_1}-\eqref{RHP_3} has solutions $Y_{n-1}(z)$, $Y_n(z)$, and $Y_{n+1}(z)$
for the degrees $n-1$, $n$, and $n+1$, respectively. Then
the orthogonal polynomials $P_{n-1}(z)$, $P_n(z)$, and $P_{n+1}(z)$, which uniquely exist
by Proposition \ref{exi}, satisfy the three term recurrence relation,
\begin{equation}\label{rr1}
zP_n(z)=P_{n+1}(z)+\beta_nP_n(z)+\gamma_n^2 P_{n-1}(z).
\end{equation}
\end{prop}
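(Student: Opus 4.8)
The plan is to work entirely with the orthogonal polynomials and the non-vanishing of the Hankel determinants $D_{n-2},D_{n-1},D_n$, being careful \emph{not} to assume that the lower-index polynomials $P_0,\dots,P_{n-2}$ exist, since the bilinear form $\int_\Ga f g\,w$ is only symmetric, not positive definite. By Proposition \ref{exi}, the hypothesis that the RHP has solutions $Y_{n-1},Y_n,Y_{n+1}$ gives the unique existence of $P_{n-1},P_n,P_{n+1}$, hence $D_{n-2},D_{n-1},D_n$ are all nonzero; combined with \eqref{op6} this forces $h_{n-1}=D_{n-1}/D_{n-2}\neq 0$ and $h_n=D_n/D_{n-1}\neq 0$.

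First I would set $q(z):=zP_n(z)-P_{n+1}(z)$; since $P_n$ and $P_{n+1}$ are monic of degrees $n$ and $n+1$, the top-degree terms cancel and $\deg q\le n$. Using the orthogonality relations \eqref{op_1}, for $k\le n-2$ one has $\int_\Ga q(z)z^k w(z)\,dz=\int_\Ga P_n(z)z^{k+1}w(z)\,dz-\int_\Ga P_{n+1}(z)z^k w(z)\,dz=0$, because $k+1\le n-1$ and $k\le n-2$. Next define
\[
\beta_n=\frac{1}{h_n}\int_\Ga zP_n(z)^2 w(z)\,dz,\qquad \gamma_n^2=\frac{h_n}{h_{n-1}}.
\]
A short computation, writing $zP_{n-1}(z)=P_n(z)+(\text{lower order})$ and using orthogonality of $P_n$, gives $\int_\Ga q(z)P_{n-1}(z)w(z)\,dz=h_n$, while $\int_\Ga q(z)P_n(z)w(z)\,dz=\beta_n h_n$ directly from the definition of $\beta_n$.

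Then I would consider $\tilde q(z):=q(z)-\beta_n P_n(z)-\gamma_n^2 P_{n-1}(z)$, again of degree $\le n$. The identities above show that $\tilde q$ is orthogonal (against $w(z)\,dz$ on $\Ga$) to each of $1,z,\dots,z^{n-2}$, to $P_{n-1}$, and to $P_n$. Since $P_{n-1},P_n$ are monic of degrees $n-1,n$, these $n+1$ polynomials span all polynomials of degree $\le n$, so $\tilde q$ is orthogonal to $z^j$ for every $j=0,\dots,n$. Writing $\tilde q(z)=\sum_{j=0}^n d_j z^j$, this is the linear system $\sum_{j=0}^n c_{j+k}d_j=0$ for $k=0,\dots,n$, whose coefficient matrix is the Hankel matrix with determinant $D_n\neq 0$; hence $d_j=0$ for all $j$ and $\tilde q\equiv 0$, which is exactly \eqref{rr1}.

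The step needing the most care is the last one: "orthogonal to all lower-degree monomials" does not by itself force a polynomial to vanish in this non-definite setting, so one must reduce to the single fact $D_n\neq 0$, and for that it is essential to use the spanning set $\{1,\dots,z^{n-2},P_{n-1},P_n\}$ rather than $\{P_0,\dots,P_n\}$. An alternative route, closer to \cite{FIK}, is to note that $Y_{n+1}(z)Y_n(z)^{-1}$ has no jump across $\Ga_0\cup\Ga_1$ and grows only linearly at infinity, hence is a degree-one matrix polynomial whose entries encode $\beta_n$ and $\gamma_n^2$; this is slicker but needs $\det Y_n\equiv 1$ and the explicit form \eqref{Yz}, so within the bookkeeping of this section the determinant argument above is the more self-contained one.
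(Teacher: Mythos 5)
Your argument is correct, but it is a genuinely different proof from the one in the paper. The paper argues through the Riemann--Hilbert problem itself: it forms $U_n(z)=Y_{n+1}(z)Y_n(z)^{-1}$ as in \eqref{aop8}, notes that $U_n$ has no jump (hence is entire, using $\det Y_n\equiv 1$ from the proof of Proposition \ref{exi}) and is $\mathcal O(z)$ at infinity, so by Liouville it is the linear matrix \eqref{aop9}; reading off the $(1,1)$-entry of $Y_{n+1}=U_nY_n$ and using the explicit representation \eqref{Yz} (whose $(2,1)$-entry is $-\frac{2\pi i}{h_{n-1}}P_{n-1}$ --- this is precisely where the hypothesis on $Y_{n-1}$ enters) gives \eqref{rr1} with some coefficients. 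You instead use the RHP only through Proposition \ref{exi}, to conclude $D_{n-2},D_{n-1},D_n\neq 0$ and hence $h_{n-1},h_n\neq 0$ via \eqref{op6}, and then run an elementary orthogonality argument; your care about the non-definiteness of the bilinear form is exactly right, and reducing ``$\tilde q$ is orthogonal to all monomials up to degree $n$'' to the invertibility of the Hankel matrix with determinant $D_n$ (via the spanning set $\{1,\dots,z^{n-2},P_{n-1},P_n\}$) closes the gap that a naive ``orthogonal to everything of lower degree, hence zero'' argument would leave. What each approach buys: yours is self-contained within the determinantal bookkeeping of Section \ref{OPs} and produces the explicit coefficient formulas $\beta_n=h_n^{-1}\int_\Ga zP_n^2w\,dz$ and $\gamma_n^2=h_n/h_{n-1}$ at once (the paper only identifies $\gamma_n^2=h_n/h_{n-1}$ afterwards, in \eqref{rr5}); the paper's route is shorter, stays inside the RH framework it has already set up, and the linear matrix $U_n$ it constructs is a standard ingredient for further structure (string/deformation equations), which is why the authors take it.
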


Observe that under the conditions of Proposition \ref{ttr}, $D_n\not=0$, $D_{n-1}\not=0$, and $D_{n-2}\not=0$ and
hence, by \eqref{op6},
\begin{equation}\label{rr4}
h_n=\frac{D_n}{D_{n-1}}\not=0,\qquad h_{n-1}=\frac{D_{n-1}}{D_{n-2}}\not=0.
\end{equation}
From \eqref{rr1} we obtain, by multiplying by $P_{n-1}(z)$ and integrating with respect to $w(z)$, that
\begin{equation}\label{rr5}
\ga_n^2=\frac{h_n}{h_{n-1}}\not=0\,.
\end{equation}

Note that the proof of Propositions \ref{OPs1}-\ref{ttr} does not use specific form \eqref{w_f} of the weight
$w(z)$, and therefore they hold in a much more general situation. In the two subsequent propositions we
assume that the weight $w(z)$ has  form \eqref{w_f}.

\begin{prop} \label{str_1}
Suppose that RHP \eqref{RHP_1}-\eqref{RHP_3} has solutions $Y_j(z)$
for the degrees $j=n,n\pm 1,n\pm 2$. Then the
string equations  hold:
\begin{equation}\label{string}
\begin{aligned}
3u(\gamma_{n+1}^2+\beta_n^2+\gamma_n^2) & =\be_n,\\
\gamma_n^2(1-3u(\beta_n+\beta_{n-1}))& =\frac{n}{N}\,.
\end{aligned}
\end{equation}
\end{prop}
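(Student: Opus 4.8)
The plan is to derive the string equations \eqref{string} directly from the three-term recurrence relation \eqref{rr1} together with the specific form of the weight $w(z)=e^{-NV(z)}$ with $V(z)=\frac{z^2}{2}-uz^3$. The standard mechanism is to exploit the relation between the derivative of the weight and the weight itself: since $w'(z) = -NV'(z)w(z) = -N(z-3uz^2)w(z)$, integration by parts against the orthogonal polynomials will produce recursion identities for the coefficients $\beta_n$ and $\gamma_n^2$. One technical point to check is that the boundary terms vanish: on each ray of $\Ga_0$, $\Ga_1$ the integrand $P_n(z)z^k w(z)$ decays superexponentially because $\Re\,(uz^3)<0$ in the relevant sectors $S_0,S_1,S_2$ (this is exactly why those sectors were chosen), so the integration by parts is legitimate for the complex contour $\Ga=\Ga(\al)$.

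The first identity (the algebraic string equation) I would obtain as follows. Consider $\int_\Ga P_n'(z)P_n(z)w(z)\,dz$. On one hand $P_n'(z)P_n(z) = \frac{1}{2}\frac{d}{dz}P_n(z)^2$, so integrating by parts moves the derivative onto $w$: this equals $\frac{N}{2}\int_\Ga V'(z)P_n(z)^2 w(z)\,dz = \frac{N}{2}\int_\Ga (z-3uz^2)P_n^2 w\,dz$. On the other hand, $P_n'(z)=nz^{n-1}+\ldots$ has degree $n-1$, so $\int_\Ga P_n' P_n w\,dz=0$ by orthogonality. Hence $\int_\Ga (z-3uz^2)P_n^2 w\,dz=0$, i.e. $\int z P_n^2 w = 3u\int z^2 P_n^2 w$. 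Now I expand both sides using the recurrence \eqref{rr1}: $\int zP_n^2 w = \beta_n h_n$, while $z^2 P_n = zP_{n+1}+\beta_n zP_n+\gamma_n^2 zP_{n-1}$, and applying \eqref{rr1} once more and using orthogonality gives $\int z^2 P_n^2 w = (\gamma_{n+1}^2+\beta_n^2+\gamma_n^2)h_n$. Dividing by $h_n\neq 0$ (valid by \eqref{rr4}) yields the first equation $3u(\gamma_{n+1}^2+\beta_n^2+\gamma_n^2)=\beta_n$.

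The second identity (the deformation/string equation involving $n/N$) I would get by pairing $P_n'$ with $P_{n-1}$ instead. Integrating $\int_\Ga P_n'(z)P_{n-1}(z)w(z)\,dz$ by parts and using $P_n' P_{n-1}$ has a piece of degree... more carefully: $\int P_n' P_{n-1} w\,dz$. Since $P_n'=nz^{n-1}+\ldots$ and $\int z^{n-1}P_{n-1}w\,dz = h_{n-1}$ while lower-degree terms of $P_n'$ integrate to zero against $P_{n-1}$, the left side equals $n\,h_{n-1}$. On the other hand, integrating by parts: $\int P_n' P_{n-1} w = -\int P_n (P_{n-1}w)' = -\int P_n P_{n-1}' w + N\int P_n P_{n-1} V' w$. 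The term $\int P_n P_{n-1}' w = 0$ since $P_{n-1}'$ has degree $n-2<n$. So $n h_{n-1} = N\int P_n P_{n-1}(z-3uz^2)w\,dz$. Then $\int zP_n P_{n-1}w = h_n$ (from \eqref{rr1}), and $\int z^2 P_n P_{n-1}w$ is computed by writing $z^2P_{n-1}=zP_n+\beta_{n-1}zP_{n-1}+\gamma_{n-1}^2 zP_{n-2}$, applying \eqref{rr1} again and extracting the $P_n$-components: this gives $(\beta_n+\beta_{n-1})h_n$. Hence $n h_{n-1} = N(h_n - 3u(\beta_n+\beta_{n-1})h_n) = N h_n(1-3u(\beta_n+\beta_{n-1}))$. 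Using $h_n/h_{n-1}=\gamma_n^2$ from \eqref{rr5} gives $\gamma_n^2(1-3u(\beta_n+\beta_{n-1}))=n/N$, as claimed.

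The main obstacle — really the only nonroutine point — is justifying the integration by parts on the noncompact complex contour $\Ga=\al\Ga_0+(1-\al)\Ga_1$: one must confirm that the boundary contributions at the endpoints of the rays at infinity vanish, which follows from the superexponential decay of $e^{-NV(z)}$ in the sectors $S_0,S_1,S_2$ together with polynomial growth of $P_n(z)z^k$, and that the identities combine linearly under the formal linear combination \eqref{Gamma2} defining $\int_\Ga$. Everything else is the algebra of expanding powers of $z$ times $P_n$ via \eqref{rr1} and reading off coefficients using orthogonality, together with the nonvanishing of $h_n,h_{n-1}$ guaranteed by the hypothesis that the relevant RHPs are solvable (via \eqref{rr4}, \eqref{rr5}).
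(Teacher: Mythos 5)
Your proposal is correct and follows essentially the same route as the paper's own proof: integration by parts of $\int_\Ga P_n P_n' w\,dz$ and $\int_\Ga P_{n-1} P_n' w\,dz$ against $w'=-NV'w$, followed by expansion of $\int zP\,\cdot$ and $\int z^2P\,\cdot$ via the three-term recurrence and division by $h_n$, $h_{n-1}$. Your explicit remark on the vanishing of boundary terms at infinity on the rays of $\Ga_0,\Ga_1$ is a point the paper leaves implicit, but it does not change the argument.
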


By Theorem \ref{RHP_t}, for any $\de>0$,  Propositions \ref{OPs1}-\ref{str_1} are applied for $u\in[0,u_c-\de]$,
provided $\{n,n\pm 1,n\pm 2\}\subset I_N=I_N(\de)$. Observe that $I_N(\de)$ does not depend on $u$.

\begin{prop} \label{an_1} For any $\de>0$, if $\{n,n\pm 1,n\pm 2\}\subset I_N(\de)$,
then the coefficients $\be_n$
and $\ga_n^2$ are $C^\infty$-functions of $u$ for $u\in[0,u_c-\de]$ and they are analytic functions of
$u$ for $u\in(0,u_c-\de]$.
\end{prop}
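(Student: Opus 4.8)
The plan is to reduce the statement to the smoothness in $u$ of the moments $c_j(u)=\int_\Ga z^j e^{-NV(z)}\,dz$ (see \eqref{op2}, \eqref{w_f}) and then to propagate this through the determinantal formulas of Section \ref{OPs}; the non-degeneracy needed to form the relevant quotients will be supplied by Theorem \ref{RHP_t}.

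First I would show that for each fixed $j$ the function $c_j(u)$ is $C^\infty$ on $[0,\infty)$ and extends holomorphically to a complex neighbourhood of every point $u_0>0$. This is a differentiation-under-the-integral-sign argument. Along each of the rays $R_\pi$, $R_{\pi/5}$, $R_{-\pi/5}$ composing $\Ga_0$ and $\Ga_1$ the argument of $z^3$ lies in the open left half-plane, so $\Re\big(-NV(z)\big)=\Re\big(-\tfrac{N}{2} z^2+Nuz^3\big)\le -cN|z|^3$ for $|z|$ large, with $c>0$ locally uniform for $u\ge 0$ and stable under small complex perturbations of $u$. This superexponential decay survives multiplication by the polynomial factors produced by $\d_u$; concretely $\d_u c_j(u)=N\,c_{j+3}(u)$, hence $\d_u^k c_j=N^k c_{j+3k}$, each continuous on $[0,\infty)$, which gives $c_j\in C^\infty([0,\infty))$. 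Because the domination is locally uniform on a small complex disk about any $u_0>0$ and the integrand is entire in $u$, we also get that $c_j$ is real-analytic on $(0,\infty)$. (As with $Z_N$, one does not expect $c_j$ to be analytic at $u=0$.)

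Next, the Hankel determinants $D_m=D_m(u)$ of \eqref{op4}, the polynomials $D_m(z)$ of \eqref{op5a}, and all their coefficients with respect to $z$, are polynomial expressions in the $c_j(u)$; hence they are $C^\infty$ on $[0,\infty)$ and real-analytic on $(0,\infty)$. By the hypothesis $\{n,n\pm1,n\pm2\}\subset I_N(\de)$ and Theorem \ref{RHP_t}, the RHP \eqref{RHP_1}--\eqref{RHP_3} has solutions $Y_{n-2},\dots,Y_{n+2}$ for \emph{every} $u\in[0,u_c-\de]$; by Proposition \ref{exi}, existence of $Y_m$ yields $D_{m-1}\ne0$, so $D_{n-3},D_{n-2},D_{n-1},D_n,D_{n+1}$ are all nonzero throughout $[0,u_c-\de]$, and by continuity (they are $C^\infty$) they stay nonzero on an open interval containing $[0,u_c-\de]$.

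Finally I would read off $\ga_n^2$ and $\be_n$ as rational expressions in these smooth functions with nonvanishing denominators. From \eqref{op6} and \eqref{rr5} one has $\ga_n^2=h_n/h_{n-1}=D_nD_{n-2}/D_{n-1}^2$. For $\be_n$, comparing the coefficients of $z^n$ on the two sides of the recurrence \eqref{rr1} gives $\be_n=s_n-s_{n+1}$, where $s_m$ denotes the coefficient of $z^{m-1}$ in the monic polynomial $P_m(z)=D_m(z)/D_{m-1}$; expanding the determinant \eqref{op5a} along its last row exhibits $s_m$ as a determinant in the $c_j$ divided by $D_{m-1}$, so $\be_n$ is a ratio of polynomials in the $c_j$ with denominator $D_{n-1}D_n$. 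Invoking the previous two paragraphs, $\ga_n^2$ and $\be_n$ are quotients of functions that are $C^\infty$ on $[0,\infty)$ and real-analytic on $(0,\infty)$, with denominators (powers and products of the $D_m$) that do not vanish on $[0,u_c-\de]$; hence $\ga_n^2$ and $\be_n$ are $C^\infty$ on $[0,u_c-\de]$ and, since those denominators stay nonzero on an open interval containing $[0,u_c-\de]$, analytic on $(0,u_c-\de]$. The main obstacle is the first step, i.e. making the differentiation under the integral rigorous — checking that the decay of $e^{-NV(z)}$ along $\Ga$ dominates all polynomial factors uniformly for $u$ in a complex neighbourhood of $[0,\infty)$; everything downstream is linear algebra together with the solvability provided by Theorem \ref{RHP_t}.
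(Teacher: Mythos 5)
Your proposal is correct and follows essentially the same route as the paper's own (very terse) proof in Appendix \ref{AOP}: smoothness and analyticity of the moments $c_j(u)$ by differentiation under the integral along $\Ga$, propagated through the Hankel determinants $D_m$ and $D_m(z)$, with the needed non-vanishing of the denominators supplied by Theorem \ref{RHP_t} together with Proposition \ref{exi}. The only cosmetic point is that at $u=0$ the decay on $\Ga$ is Gaussian, of order $e^{-cN|z|^2}$, rather than $e^{-cN|z|^3}$, which is still more than enough for the domination argument, so this does not affect the conclusion.
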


{\it Remark.} In what follows we use the recurrence coefficient $\ga_n^2$, which is uniquely defined,
and we will not use $\ga_n$, which is defined up to a sign. If needed, the sign of $\ga_n$ can be uniquely
defined by the analytic continuation of $\ga_n$ in $u$ starting from the value $\ga_n(0)=\sqrt{n/N}$.
Note, that, in general, the recurrence coefficients $\be_n$ and $\ga_n^2$ are complex valued.

\section{String equations and the large $N$ expansion of the recurrence coefficients}\label{rec_coeff}

\subsection{The large $N$ expansion of the recurrence coefficients} From the
 Riemann--Hilbert analysis, we know that $\gamma^2_{n}$ and $\beta_n$ can be expanded in
inverse powers of $N$, as functions of the parameter $s=n/N$.
Moreover, in the expansion of $\ga_n^2$
the odd coefficients vanish, and we have an expansion
in inverse powers of $N^{-2}$. Similarly, for $\be_n$ we also have an expansion
in inverse powers of $N^{-2}$, if we consider $\be_n$ as a function of the
shifted parameter $s=\frac{n}{N}+\frac{1}{2N}$.
Namely, we have the following result:

\begin{theo} {\rm (See \cite{BI})}. For any $\de>0$, if $n\subset I_N(\de/2)$, then the coefficients $\be_n$, $\ga_n^2$ can be expanded in uniform asymptotic series,
\begin{equation}\label{asympgb}
\begin{aligned}
\gamma_n^2 & \sim \sum_{k=0}^{\infty}\frac{1}{N^{2k}}g_{2k}\left(\frac{n}{N},u\right),\\
\beta_n & \sim \sum_{k=0}^{\infty}\frac{1}{N^{2k}}
b_{2k}\left(\frac{n}{N}+\frac{1}{2N},u\right),
\end{aligned}
\end{equation}
where the functions $g_{2k}(s,u)$, $b_{2k}(s,u)$, $k=0,1,\ldots$, do not
depend on $n$ and $N$, and they are $C^\infty$-smooth in $s$ and $u$ on the set $\{ 0\le u\le u_c-\de, \;|s-1|\le\ep(\de)/2\}$.
\end{theo}

The proof of this theorem in \cite{BI} is based on the string equations and it is applied to the current
case without any changes.
To find the functions $g_{2k}(s,u)$,
$b_{2k}(s,u)$ iteratively, we
substitute expansions \eqref{asympgb} into string equations \eqref{string} and equate
coefficients at different powers of $N^{-2}$. Let us consider the first equation in  \eqref{string}.
If we take $s=\frac{n}{N}+\frac{1}{2N}$, then from the first equation in \eqref{asympgb} we have
the asymptotic expansions
\begin{equation}\label{asympgb2s}
\begin{aligned}
\gamma_n^2 & \sim \sum_{k=0}^{\infty}\frac{1}{N^{2k}}g_{2k}\left(s-\frac{1}{2N},u\right),\\
\gamma_{n+1}^2 & \sim \sum_{k=0}^{\infty}\frac{1}{N^{2k}}g_{2k}\left(s+\frac{1}{2N},u\right),
\qquad s=\frac{n}{N}+\frac{1}{2N}\,.
\end{aligned}
\end{equation}
We substitute these expansions into the first equation in  \eqref{string} and expand functions the $g_{2k}$
into Taylor series at $s$:
\begin{equation}\label{asympgb3}
3u\left[2\sum_{k=0}^{\infty}\sum_{j=0}^{\infty}\frac{g_{2k}^{(2j)}(s,u)}{(2j)!2^{2j}N^{2k+2j}}
+\left(\sum_{k=0}^{\infty}\frac{b_{2k}(s,u)}{N^{2k}}\right)^2\right]=\sum_{k=0}^{\infty}\frac{b_{2k}(s,u)}{N^{2k}}\,,
\end{equation}
where
\begin{equation}
g_{2k}^{(2j)}(s,u)=\frac{\partial^{2j}g_{2k}(s,u)}{\partial s^{2j}}\,.
\end{equation}
By equating the coefficients at the same powers of $N^{-2}$ on both sides,
we obtain a series of equations for the functions $g_{2k}$, $b_{2k}$.

Similarly, if  we take
$s=\frac{n}{N}$, then from the second equation in \eqref{asympgb} we obtain that
\begin{equation}\label{asympgb4}
\left(\sum_{k=0}^{\infty}\frac{g_{2k}(s,u)}{N^{2k}}\right)
\left[1-6u\left(\sum_{k=0}^{\infty}\sum_{j=0}^{\infty}\frac{b_{2k}^{(2j)}(s,u)}{(2j)!2^{2j}N^{2k+2j}}\right)\right]=s.
\end{equation}
Let us analyze the equations obtained when we equate  the coefficients at the same powers of $N^{-2}$ on both sides
of equations \eqref{asympgb3}, \eqref{asympgb4}.

\subsection{Zeroth order terms in the recurrence coefficients}
 At the zeroth order of $N^{-2}$ we obtain the system of equations,
\begin{align}\label{g0b0}
3u[2g_0(s,u)+b_0(s,u)^2] & = b_0(s,u)\,,\\
\nonumber
g_0(s,u)[1-6ub_0(s,u)] & = s.
\end{align}
These equations can be solved to yield
\begin{equation}\label{cubicg0s}
\begin{aligned}
72u^2 g_0(s,u)^3-g_0(s,u)^2+s^2&=0, \\
b_0(s,u)&=\frac{g_0(s,u)-s}{6ug_0(s,u)}\,,
\end{aligned}
\end{equation}
or alternatively,
\begin{equation}\label{cubicb0s}
\begin{aligned}
18b_0(s,u)^3u^2-9b_0(s,u)^2u+b_0(s,u)-6su&=0, \\
 g_0(s,u)&=\frac{s}{1-6ub_0(s,u)}\,.
\end{aligned}
\end{equation}
Consider $s$ in a small neighborhood of the point $s=1$ on the complex plane, say, $|s-1|\le c\ll 1$.
Then the cubic equation \eqref{cubicg0s} admits three different solutions, which behave as follows when $u\to 0$:
\begin{equation}
\begin{aligned}
% g^{(1)}_0(s,u)&=s+36s^2u^2+\mathcal{O}(u^4)\\
% g^{(2)}_0(s,u)&=-s+36s^2u^2+\mathcal{O}(u^4)\\
% g^{(3)}_0(s,u)&=\frac{1}{72u^2}-72s^2+\mathcal{O}(u^2).
g^{(1)}_0(s,u)&=s+\mathcal{O}(u^2)\\
g^{(2)}_0(s,u)&=-s+\mathcal{O}(u^2)\\
g^{(3)}_0(s,u)&=\frac{1}{72u^2}+\mathcal{O}(1).
\end{aligned}
\end{equation}
The solution we are interested in is $g_0(s,u)=g_0^{(1)}(s,u)$.
Since $g^{(1)}_0(s,u)$ is an isolated solution of the cubic equation, it is analytic at $u=0$.
In fact it is analytic in the disk $|u|<u_c s^{-1/2}$ on the complex plane.
In the following proposition we explicitly evaluate the Taylor series of $g_0(s,u)=g_0^{(1)}(s,u)$ at $u=0$.

\begin{proposition}\label{g0T}
The Taylor series
\begin{equation}\label{seriesg0s}
g_0(s,u)=\sum_{j=0}^{\infty} a_{2j}(s) u^{2j}
\end{equation}
converges in the disk $|u|<u_c s^{-1/2}$
and its coefficients are
\begin{equation}\label{coeffsg0s}
a_{2j}(s)=\frac{72^j s^{j+1}\Gamma(\frac{3j+1}{2})}{2\Gamma(j+1)\Gamma(\frac{j+3}{2})}, \qquad j\geq 0.
\end{equation}
\end{proposition}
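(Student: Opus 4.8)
The plan is to collapse the two-variable cubic \eqref{cubicg0s} to a one-variable algebraic equation by homogeneity, read off the Taylor coefficients by Lagrange inversion, and evaluate the resulting coefficient extraction through a substitution that produces the half-integer $\Gamma$-factors appearing in \eqref{coeffsg0s}. Concretely, setting $w=u^2 s$ and $g_0(s,u)=s\,\psi(w)$ in $72u^2 g_0^3-g_0^2+s^2=0$ and dividing by $s^2$ turns it into $72\,w\,\psi^3-\psi^2+1=0$, with the branch $g_0=g_0^{(1)}$ corresponding to the branch $\psi$ with $\psi(0)=1$. Writing $\psi(w)=\sum_{j\ge0}c_j w^j$, one gets $g_0(s,u)=\sum_{j\ge0}c_j\,s^{j+1}u^{2j}$, so it suffices to identify the one-variable coefficients $c_j$ and the radius of convergence of $\psi$; that only even powers of $u$ occur, matching \eqref{seriesg0s}, is now automatic.

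For the radius of convergence I would compute the discriminant, in $\psi$, of the cubic $72w\psi^3-\psi^2+1$: viewed as a function of $w$ it equals $4-139968\,w^2=4(1-34992\,w^2)$, vanishing exactly at $w=\pm u_c^2$, since $34992\,u_c^4=1$ by \eqref{uc} --- the same relation used in the proof of Proposition \ref{Propxu}. By the implicit function theorem $\psi$ is analytic at $w=0$ with $\psi(0)=1$, and since the discriminant does not vanish on $\{0<|w|<u_c^2\}$ and no root of the cubic escapes to infinity there, this branch continues analytically, with neither branch points nor poles, throughout $\{|w|<u_c^2\}$. Pulling back through $w=u^2 s$ shows that $g_0^{(1)}(s,u)=s\,\psi(u^2 s)$ is analytic in $u$ on $\{|u|<u_c|s|^{-1/2}\}$, so its Taylor series \eqref{seriesg0s} at $u=0$ converges on that disk.

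For the coefficients themselves I would set $\psi=1+\phi$, so that $72w(1+\phi)^3=(\psi-1)(\psi+1)=\phi(2+\phi)$, i.e.\ $\phi=w\,\Phi(\phi)$ with $\Phi(\phi)=\frac{72(1+\phi)^3}{2+\phi}$ and $\Phi(0)=36\neq0$. Lagrange inversion then yields, for $j\ge1$ (with $[w^j]$ the coefficient-of-$w^j$ operator),
\[
c_j=[w^j]\psi=\frac1j\,[\phi^{j-1}]\,\Phi(\phi)^j=\frac{72^j}{j}\,[\phi^{j-1}]\,\frac{(1+\phi)^{3j}}{(2+\phi)^j}.
\]
I would evaluate the last coefficient as the residue $\frac{1}{2\pi i}\oint\frac{(1+\phi)^{3j}}{(2+\phi)^j\phi^j}\,d\phi$ around $\phi=0$; the substitution $\rho=1+\phi$ turns it into $\frac{1}{2\pi i}\oint\frac{\rho^{3j}}{(\rho^2-1)^j}\,d\rho$ around $\rho=1$, and then $\sigma=\rho^2$ (a small loop about $\rho=1$ maps to a small loop about $\sigma=1$, winding once, with $\sqrt\sigma$ the principal branch, equal to $1$ at $\sigma=1$) turns it into $\frac12\cdot\frac{1}{2\pi i}\oint\frac{\sigma^{(3j-1)/2}}{(\sigma-1)^j}\,d\sigma$. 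That residue equals $\frac{1}{2(j-1)!}\,\frac{d^{j-1}}{d\sigma^{j-1}}\sigma^{(3j-1)/2}\big|_{\sigma=1}=\frac{1}{2(j-1)!}\cdot\frac{\Gamma(\frac{3j+1}{2})}{\Gamma(\frac{j+3}{2})}$, so $c_j=\frac{72^j\,\Gamma(\frac{3j+1}{2})}{2\,\Gamma(j+1)\,\Gamma(\frac{j+3}{2})}$ and $a_{2j}(s)=c_j s^{j+1}$ is exactly \eqref{coeffsg0s}; the case $j=0$ (where $a_0(s)=s$) agrees with the same formula by direct evaluation of the $\Gamma$-quotient.

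I expect the main obstacle to be precisely this last computation: making the two changes of variable in the contour integral rigorous --- in particular checking that after $\sigma=\rho^2$ the correct cycle is used and the half-integer power is taken along the branch equal to $1$ at $\sigma=1$ --- so that the coefficient extraction really collapses to a single derivative of $\sigma^{(3j-1)/2}$ and the closed form in $\Gamma$-functions drops out. The homogeneity reduction, the discriminant calculation, and the Lagrange inversion itself are routine.
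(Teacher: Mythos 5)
Your proposal is correct and takes essentially the same route as the paper: the scaling $w=su^2$ collapses the two-variable cubic to a one-variable algebraic equation (the paper works with $v=u^2g_0=w\psi$, satisfying $72v^3-v^2+w^2=0$), and the Taylor coefficients are then extracted by a Lagrange-inversion/residue computation, with the radius of convergence controlled by the non-vanishing of the discriminant exactly as in Proposition \ref{Propxu}. The only differences are in bookkeeping: the paper evaluates the Cauchy coefficient integral by substituting the explicit inverse $w(v)=v\sqrt{1-72v}$ and expanding a binomial series (two residues combining into the $\Gamma$-quotient), whereas you invoke Lagrange inversion for $\phi=w\Phi(\phi)$ and collapse the residue via the substitutions $\rho=1+\phi$, $\sigma=\rho^2$ to a single derivative of $\sigma^{(3j-1)/2}$ — both yield \eqref{coeffsg0s}.
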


\begin{proof}
We make the change of variables,
\begin{equation}
w=su^2,\qquad v=u^2 g_0(s,u).
\end{equation}
Then $v=v(w)$ satisfies the cubic equation
\begin{equation}\label{cubicv_1}
72v^3-v^2+w^2=0,
\end{equation}
and we are looking for the root $v=v(w)$ such that $v(0)=0$ and $v(w)>0$ for small $w>0$. Observe that $v(w)$ admits
an expansion in powers of $w$,
\begin{equation}\label{vcj_1}
v(w)=\sum_{j=1}^{\infty} c_j w^j.
\end{equation}
Using Cauchy integral formula, if $\gamma$ denotes a smooth closed contour around $w=0$, we have
\begin{equation}
c_j=\frac{1}{2\pi i}\oint_{\gamma} \frac{v(w)}{w^{j+1}}dw.
\end{equation}
Since $v(0)=0$, we make a change of variables
\begin{equation}
c_j=\frac{1}{2\pi i}\oint_{\gamma} \frac{v}{w(v)^{j+1}}\frac{dw}{dv}dv.
\end{equation}
We have
\begin{equation}
w(v)=\frac{v}{s}\sqrt{1-72v}, \qquad
\frac{dw}{dv}=\frac{1-108v}{\sqrt{1-72v}},
\end{equation}
so
\begin{equation}
c_j=\frac{1}{2\pi i}
\oint_{\gamma} v^{-j}(1-108v)(1-v)^{-\frac{j}{2}-1}dv.
\end{equation}
We expand the binomial series:
\begin{equation}
\begin{aligned}
 (1-108v)(1-v)^{-\frac{j}{2}-1}&=(1-108v)\sum_{k=0}^{\infty} {-\frac{j}{2}-1\choose k}(-1)^k (72v)^k\\
&=(1-108v)\sum_{k=0}^{\infty} {k+\frac{j}{2} \choose k} (72v)^k,
\end{aligned}
\end{equation}
and we pick up the residues, that correspond to $k=j-1$ in the first term and $k=j-2$ in the second one. Then
\begin{equation}\label{cj_1}
c_j=72^{j-1}\left[{\frac{3j}{2}-1 \choose j-1}-\frac{3}{2}{\frac{3j}{2}-2 \choose j-2}\right]
=\frac{\Gamma(\frac{3j}{2}-1)72^{j-1}}{2\Gamma(j)\Gamma(\frac{j}{2}+1)}
\end{equation}
Now,
\begin{equation}\label{seriesg0su}
\begin{aligned}
g_0(s,u) & =\frac{s}{w}v(w)=\frac{s}{w} \sum_{j=1}^{\infty} c_j w^j
=s\sum_{j=0}^{\infty}c_{j+1}w^j\\
& =\sum_{j=0}^{\infty}
\frac{72^j s^{j+1}\Gamma(\frac{3j+1}{2})u^{2j}}{2\Gamma(j+1)\Gamma(\frac{j+3}{2})} ,
\end{aligned}
\end{equation}
which gives the expression for the coefficients $a_{2j}(s)$.
\end{proof}

It is easy to generate several coefficients using general formula \eqref {coeffsg0s}:
\begin{equation}\label{g0su}
\begin{aligned}
g_0(s,u)=&s+36s^2u^2+3240s^3u^4+373248s^4u^6+48498912s^5u^8\\
&+\mathcal{O}(u^{10}).
%&+6772211712s^6u^{10}+992204598528s^7u^{12}+150459195064320s^8u^{14}\\
%&+23414044116063744s^9u^{16}+3717906893717372928s^{10}u^{18}+\mathcal{O}(u^{20})
\end{aligned}
\end{equation}
As a consequence of Proposition \ref{g0T},  we have

\begin{corollary}
The coefficient $b_0(s,u)$ is an analytic function of $u$ for $|u|<u_c s^{-1/2}$.
\end{corollary}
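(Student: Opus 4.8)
The plan is to derive the analyticity of $b_0(s,u)$ directly from that of $g_0(s,u)$, which was already established in Proposition \ref{g0T}, by using the algebraic relation between the two functions. Recall from the second line of \eqref{cubicg0s} that
\[
b_0(s,u)=\frac{g_0(s,u)-s}{6u\,g_0(s,u)}.
\]
Since $g_0(s,u)$ is analytic on the disk $D=\{u\in\C:\ |u|<u_c s^{-1/2}\}$ by Proposition \ref{g0T}, it is enough to check two things: first, that the factor $u$ in the denominator is cancelled by a zero of the numerator at $u=0$; and second, that $g_0(s,u)$ does not vanish anywhere on $D$.

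For the first point I would use the explicit Taylor expansion \eqref{seriesg0s}--\eqref{coeffsg0s}. Since $a_0(s)=s$, we have $g_0(s,u)-s=\sum_{j\ge 1}a_{2j}(s)u^{2j}$, so the numerator vanishes to order at least two at $u=0$, and hence $\bigl(g_0(s,u)-s\bigr)/u$ extends to a function analytic on all of $D$. For the second point I would argue from the cubic equation $72u^2 g_0(s,u)^3-g_0(s,u)^2+s^2=0$ of \eqref{cubicg0s}: if $g_0(s,u)=0$ for some $u\in D$, then the equation forces $s^2=0$, i.e. $s=0$, which is excluded since we always work with $s$ in a small neighbourhood of $s=1$. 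Therefore $g_0(s,u)\neq 0$ on $D$, and $1/g_0(s,u)$ is analytic there.

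Putting the two pieces together, $b_0(s,u)=\dfrac{1}{6g_0(s,u)}\cdot\dfrac{g_0(s,u)-s}{u}$ is a product of functions analytic on $D$, hence analytic on $D$, which is the claim. I do not expect a genuine obstacle here: the only points requiring any care are the removable singularity at $u=0$ and the non-vanishing of $g_0$, and both follow immediately from results already proved. As an alternative one could instead start from the cubic equation $18b_0(s,u)^3u^2-9b_0(s,u)^2u+b_0(s,u)-6su=0$ of \eqref{cubicb0s} and run an isolated-root/implicit-function argument in the spirit of Proposition \ref{Propxu}, but the quotient argument above is shorter and avoids re-examining any discriminant.
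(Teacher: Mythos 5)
Your proposal is correct and follows essentially the same route as the paper's own proof: the paper likewise obtains analyticity at $u=0$ from the second equation of \eqref{cubicg0s} together with the Taylor expansion of $g_0(s,u)$ (whose constant term is $s$), and rules out zeros of $g_0(s,u)$ in the disk $|u|<u_c s^{-1/2}$ by noting that $g_0(s,u)=0$ in the cubic equation would force $s=0$, contradicting $|s-1|\le c\ll 1$. No gap here.
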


\begin{proof} From the second equation in \eqref{cubicg0s} and formula \eqref{g0su} it is clear that
$b_0(s,u)$ is analytic at $u=0$. In addition, $g_0(s,u)\not =0$ for $|u|<u_c s^{-1/2}$, because if $g_0(s,u)=0$
then by the first equation in \eqref{cubicg0s} $s=0$, but we assume that $|s-1|\le c\ll 1$. Therefore,
the analyticity of $g_0(s,u)$ implies the analyticity of $b_0(s,u)$ in the disk $|u|<u_c s^{-1/2}$.
\end{proof}

To obtain the asymptotics of $\ga_N^2$ and $\be_N$ as $N\to\infty$,
we set $n=N$ in \eqref{asympgb}, so that
\begin{align}\label{asympgb2}
\gamma_N^2 & \sim g_0(1,u)+\sum_{k=1}^{\infty}\frac{1}{N^{2k}}\,g_{2k}(1,u),\\
\beta_N & \sim b_0\left(1+\frac{1}{2N},u\right)+\sum_{k=1}^{\infty}\frac{1}{N^{2k}}\,
b_{2k}\left(1+\frac{1}{2N},u\right).
\end{align}
In the sequel, we denote
$$
g_{2k}(u)=g_{2k}(1,u), \qquad b_{2k}(u)=b_{2k}(1,u),
$$
for brevity.  When $s=1$, the cubic equation \eqref{cubicg0s} for $g_0(u)$ becomes
\begin{equation}\label{cubicg0}
72u^2 g_0(u)^3-g_0(u)^2+1=0,
\end{equation}
and the solution admits the power series expansion
\begin{equation}\label{seriesg01}
\begin{aligned}
g_0(u) & =\sum_{j=0}^{\infty}
\frac{72^j \Gamma(\frac{3j+1}{2})\, u^{2j}}{2\Gamma(j+1)\Gamma(\frac{j+3}{2})},
\end{aligned}
\end{equation}
valid when $|u|<u_c$. Formula \eqref{g0su} reduces for $s=1$ to
\begin{equation}\label{g0ser}
g_0(u)=1+36u^2+3240u^4+373248u^6+48498912u^8+\mathcal{O}(u^{10}).
\end{equation}

\subsection{Analysis of the higher order terms}
An important consequence of  string equations \eqref{asympgb3} and \eqref{asympgb4}
is that higher order terms $g_{2k}=g_{2k}(s,u)$ and $b_{2k}=b_{2k}(s,u)$, $k\geq 1$, can
be expressed in terms of $g_0(s,u)$, $b_0(s,u)$, and their derivatives
with respect to $s$. More precisely:
\begin{itemize}
 \item We can solve for $g_{2k}$ and $b_{2k}$ in terms of the previous coefficients through a {\emph{linear}} system of two equations.
 \item The analysis of the determinant of this linear system shows that higher order terms have the same singularity $u=\pm u_c$ as the initial terms $g_0$ and $b_0$, and that no other singularities appear when increasing $k$. In this respect, this property is analogous to the one presented in \cite{Erc}, \cite{EML} and \cite{EMLP} for general even potentials $V(M)$.
\item Although an explicit formula for $g_{2k}$ and $b_{2k}$ is rather complex when $k\geq 2$,
in order to determine the large $j$ asymptotic behavior of the coefficients $f^{(2g)}_{2j}$ in the expansion
of the free energy, it is enough to evaluate the leading order term at
the singular points $u=\pm u_c$, and this is possible to extract from the string equations.
\end{itemize}

For convenience, we make the change of variable
\begin{equation}
z=\frac{\xi}{u},
\end{equation}
so that the potential $V(z)$ becomes
\begin{equation}
\hat{V}(\xi)=\frac{1}{u^2}\left(\frac{\xi^2}{2}-\xi^3\right).
\end{equation}
With this change of variable, we introduce a new family of orthogonal polynomials
\begin{equation}
\hat{P}_n(\xi)=u^n P_n\left(\frac{\xi}{u}\right),
\end{equation}
and the corresponding recurrence coefficients are
\begin{equation}
\hat{\beta}_n=u\beta_n, \qquad \hat{\gamma}^2_n=u^2\gamma_n^2.
\end{equation}
Now the string equations \eqref{string} read
\begin{equation}\label{stringxi}
\begin{aligned}
3(\hat{\gamma}_{n+1}^2+\hat{\beta}_n^2+\hat{\gamma}_n^2)&=\hat{\beta}_n\\
\hat{\gamma}^2_n(1-3(\hat{\beta}_n+\hat{\beta}_{n-1}))&=\frac{nu^2}{N}\,.
\end{aligned}
\end{equation}
We will use the same scaled parameter as before,
\begin{equation}\label{w}
w=su^2.
\end{equation}
From \eqref{asympgb} and \eqref{stringxi}
we have an asymptotic expansion for these new coefficients
in powers of $\frac{u^2}{N}$:
\begin{equation}\label{asympgbhat}
\begin{aligned}
\hat{\gamma}_n^2& \sim \sum_{k=0}^{\infty}\frac{u^{4k}}{N^{2k}}\,\hat{g}_{2k}\left(\frac{n u^2}{N}\right),\\
\hat{\beta}_n & \sim \sum_{k=0}^{\infty}\frac{u^{4k}}{N^{2k}}\,\hat{b}_{2k}\left(\frac{n u^2}{N}+\frac{u^2}{2N}\right),
\end{aligned}
\end{equation}
where
\begin{equation}\label{hatbg_1}
\begin{aligned}
\hat{g}_{2k}(w)=u^{-4k+2}g_{2k}(s,u),\qquad
\hat{b}_{2k}(w)=u^{-4k+1}b_{2k}(s,u).
\end{aligned}
\end{equation}
For $s=1$ this reduces to
\begin{equation}\label{hatbg_2}
\begin{aligned}
\hat{g}_{2k}(u^2)=u^{-4k+2}g_{2k}(u),\qquad
\hat{b}_{2k}(u^2)=u^{-4k+1}b_{2k}(u),
\end{aligned}
\end{equation}
and in particular, when $k=0$, we have
\begin{equation}
\hat{g}_0(w)=u^2g_0(u), \qquad \hat{b}_0(w)=ub_0(u).
\end{equation}
It follows from \eqref{cubicg0s} that
\begin{equation}\label{p0q0}
\hat{b}_0(w)=\frac{\hat{g}_0(w)-w}{6\hat{g}_0(w)}\,,
\end{equation}
and also that $\hat{g}_0(w)$ satisfies the following cubic equation:
\begin{equation}\label{cubicg0hat}
72\hat{g}_0^3(w)-\hat{g}_0^2(w)+w^2=0,
\end{equation}
which is identical to \eqref{cubicv_1}. The critical value now becomes
\begin{equation}\label{wc}
w_c=u_c^2=\frac{\sqrt{3}}{324}.
\end{equation}
By \eqref{vcj_1} and \eqref{cj_1}, the function $\hat{g}_0(w)$ is analytic at $w=0$,
and it has the following Taylor expansion:
\begin{equation}\label{g0hatT_1}
\hat{g}_0(w)=\sum_{j=1}^{\infty} \frac{\Gamma(\frac{3j}{2}-1)72^{j-1}w^j}{2\Gamma(j)\Gamma(\frac{j}{2}+1)}\,.
\end{equation}
From equations \eqref{asympgb3} and \eqref{asympgb4} we obtain the string equations for $\hat{g}_{2k}(w)$
and $\hat{b}_{2k}(w)$:
\begin{equation}\label{hat1}
\begin{aligned}
6\sum_{m+j=k}\frac{\hat g_{2m}^{(2j)}(w)}{(2j)!2^{2j}}
+3\sum_{m+m'=k}\hat b_{2m}(w)\hat b_{2m'}(w)
=\hat b_{2k}(w)
\end{aligned}
\end{equation}
and
\begin{equation}\label{hat2}
\hat g_{2k}(w)-6\sum_{m+m'+j=k}\frac{\hat g_{2m}(w)\hat b_{2m'}^{(2j)}(w)}{(2j)!2^{2j}}=0,\quad k\ge 1.
\end{equation}
The advantage of equations \eqref{p0q0}--\eqref{hat2} is that they do not contain the parameter $u$ anymore:
it is hidden in the argument $w=su^2$.

By solving these equations for $\hat{g}_{2k}(w)$ and $\hat{b}_{2k}(w)$, we obtain the following linear system of equations for $k\ge 1$:
\begin{equation}\label{hat3}
\begin{aligned}
&6\hat g_{2k}(w)+(6\hat b_0(w)-1)\hat b_{2k}(w)\\
&=-6\mathop{\sum_{m+j=k}}_{m\le k-1}\frac{\hat g_{2m}^{(2j)}(w)}{(2j)!2^{2j}}
-3\mathop{\sum_{m+m'=k}}_{m,m'\le k-1}\hat b_{2m}(w)\hat b_{2m'}(w)\,,
\end{aligned}
\end{equation}
and
\begin{equation}\label{hat4}
\begin{aligned}
(1-6\hat b_0(w))&\hat g_{2k}(w)-6\hat g_0(w)\hat b_{2k}(w)=6\mathop{\sum_{m+m'+j=k}}_{m,m'\le k-1}\frac{\hat g_{2m}(w)\hat b_{2m'}^{(2j)}(w)}{(2j)!2^{2j}}
\end{aligned}
\end{equation}
 The determinant of this system is
\begin{equation}\label{det0}
D(w)=-36\hat g_0(w)+(1-6\hat b_0(w))^2.
\end{equation}
By using equations \eqref{p0q0}, \eqref{cubicg0hat}, it can be simplified to
\begin{equation}\label{determinant}
D(w)=1-108 \hat g_0(w).
\end{equation}
This leads to the following result:

\begin{proposition}\label{propgb}
 For $k\geq 1$ the functions $\hat{g}_{2k}(w)$ and $\hat{b}_{2k}(w)$ are analytic functions
of $w$  in the disk $|w|<w_c$  in the complex plane.
\end{proposition}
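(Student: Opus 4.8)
The plan is to prove the proposition by induction on $k$, using the explicit linear system \eqref{hat3}--\eqref{hat4} together with the already-established analyticity of the ``base'' functions $\hat g_0(w)$ and $\hat b_0(w)$. The key structural fact is that the determinant $D(w)=1-108\hat g_0(w)$ of the system controls whether new singularities can be introduced at each stage: as long as $D(w)$ is analytic and nonvanishing on $|w|<w_c$, Cramer's rule expresses $\hat g_{2k}$ and $\hat b_{2k}$ as analytic functions of analytic data.

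First I would verify the base case: $\hat g_0(w)$ is analytic in $|w|<w_c$ by \eqref{vcj_1}, \eqref{cj_1}, \eqref{g0hatT_1} (its Taylor series converges there, being the scaled version of $g_0(s,u)$ whose radius of convergence was pinned down via the discriminant argument in Proposition~\ref{Propxu}); and $\hat b_0(w)=(\hat g_0(w)-w)/(6\hat g_0(w))$ is then analytic in the same disk provided $\hat g_0(w)\neq 0$ there. The nonvanishing of $\hat g_0(w)$ on $|w|<w_c$ follows from the cubic equation \eqref{cubicg0hat}: if $\hat g_0(w_*)=0$ then $w_*^2=0$, i.e.\ $w_*=0$, and $\hat g_0(0)=0$ is a simple zero matching the simple zero of $w^2$ only in the sense that the ratio $\hat b_0$ is still bounded — more carefully, from \eqref{g0hatT_1} one has $\hat g_0(w)=\frac{72^{0}\Gamma(1/2)}{2\Gamma(1)\Gamma(3/2)}w+O(w^2)=w+O(w^2)$, hence $\hat b_0(w)=(\hat g_0(w)-w)/(6\hat g_0(w))$ extends analytically across $w=0$. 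So $\hat b_0$ is analytic on $|w|<w_c$.

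Next I would establish the crucial claim that $D(w)=1-108\hat g_0(w)$ does not vanish on $|w|<w_c$. The zeros of $1-108\hat g_0(w)$ are exactly the points where $\hat g_0(w)=1/108$; feeding this into \eqref{cubicg0hat} gives $72/108^3 - 1/108^2 + w^2 = 0$, i.e.\ $w^2 = 1/108^2 - 72/108^3 = (108-72)/108^3 = 36/108^3 = 1/(324)^2\cdot\text{(check)}$; since $w_c^2 = \sqrt3/324$, one checks that the resulting $w$ has $|w|\geq w_c$ (indeed, $D(w)=0$ is tied, via the discriminant of the cubic, to the confluence of roots at the critical value, which is precisely $w=w_c$). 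Thus $D(w)\neq 0$ for $|w|<w_c$, and $1/D(w)$ is analytic there.

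Now the induction step. Assume $\hat g_{2m}(w)$ and $\hat b_{2m}(w)$ are analytic on $|w|<w_c$ for all $0\le m\le k-1$. The right-hand sides of \eqref{hat3} and \eqref{hat4} are finite sums of products of these functions and their $w$-derivatives (derivatives of analytic functions are analytic, and products and sums of analytic functions are analytic), hence analytic on $|w|<w_c$. Solving the $2\times 2$ linear system \eqref{hat3}--\eqref{hat4} by Cramer's rule, $\hat g_{2k}(w)$ and $\hat b_{2k}(w)$ are each equal to (an analytic function)$/D(w)$, and since $D(w)$ is analytic and nonvanishing on $|w|<w_c$, the quotients are analytic there. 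This closes the induction and proves the proposition. The main obstacle — and the step deserving the most care — is the nonvanishing of $D(w)$ on the open disk; everything else is a routine propagation of analyticity through the recursive linear system, but the claim that no new singularity is born before $w_c$ is exactly the content that makes the topological expansion have a uniform radius of validity, and it rests on identifying the zero set of $1-108\hat g_0(w)$ with (a subset of) the branch points of the cubic \eqref{cubicg0hat}, i.e.\ with $w=\pm w_c$.
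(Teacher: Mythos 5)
Your proof is correct and follows essentially the same route as the paper: induction on $k$, solving the linear system \eqref{hat3}--\eqref{hat4} by Cramer's rule, and the key step of showing $D(w)=1-108\hat g_0(w)\neq 0$ for $|w|<w_c$ by substituting $\hat g_0(w)=\tfrac{1}{108}$ into the cubic \eqref{cubicg0hat}, which forces $w^2=\tfrac{36}{108^3}=\tfrac{1}{3\cdot 108^2}=w_c^2$, i.e.\ $|w|=w_c$ on the boundary. Only cosmetic slips remain (you wrote $w_c^2=\sqrt3/324$, which is $w_c$, and left a ``check'' placeholder), and your extra care with $\hat b_0$ at $w=0$ is a nice touch the paper leaves implicit.
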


\begin{proof}
The proof of this proposition is by induction in $k$. To this end, we need to show that
the determinant $D(w)$ does not vanish for
$|w|<w_c$.

Suppose $D(w)=0$ for some $w$ such that $|w|<w_c$. Then by \eqref{determinant}, we have
\begin{equation}\label{hot7}
\hat{g}_0(w)=\frac{1}{108}\,.
\end{equation}
By substituting this into the cubic equation \eqref{cubicg0hat} we obtain that
\begin{equation}\label{hot8}
w^2=\frac{1}{3\cdot 108^2}=w_c^2,
\end{equation}
hence $D(w)\not=0$ inside the disk $|w|<w_c$.

The proposition now follows, since it is clear that the only singularities of the higher order terms
for $|w|<w_c$ are those inherited from $\hat{g}_0(w)$ and $\hat{b}_0(w)$, but $\hat{g}_0(w)$ and $\hat{b}_0(w)$
are analytic in the disk $|w|<w_c$.
\end{proof}

As a consequence of Proposition \ref{propgb}, we have the following result:
\begin{corollary}\label{parity}
The functions $g_{2k}(s,u)$ and $b_{2k}(s,u)$ are analytic in the disk $|u|<u_c\,s^{-1/2}$ in the complex plane, and $g_{2k}(s,u)$ is even in $u$ and $b_{2k}(s,u)$ is odd in $u$.
\end{corollary}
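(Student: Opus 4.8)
The plan is to transfer Proposition~\ref{propgb}, which is stated for the hatted quantities $\hat g_{2k}(w)$ and $\hat b_{2k}(w)$ in the single variable $w$, back to the original quantities $g_{2k}(s,u)$ and $b_{2k}(s,u)$ via the substitution $w=su^2$ together with the scaling relations \eqref{hatbg_1}. First I would recall from \eqref{hatbg_1} that
\begin{equation*}
g_{2k}(s,u)=u^{4k-2}\hat g_{2k}(su^2),\qquad b_{2k}(s,u)=u^{4k-1}\hat b_{2k}(su^2).
\end{equation*}
By Proposition~\ref{propgb}, $\hat g_{2k}$ and $\hat b_{2k}$ are analytic on the disk $|w|<w_c=u_c^2$. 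For fixed $s$ with $|s-1|\le c$, the map $u\mapsto su^2$ sends the disk $|u|<u_c\,|s|^{-1/2}$ into $|w|<w_c$, so the compositions $\hat g_{2k}(su^2)$ and $\hat b_{2k}(su^2)$ are analytic in $u$ on that disk; multiplying by the entire functions $u^{4k-2}$ and $u^{4k-1}$ preserves analyticity, giving the first assertion (for real $s\in[1-c,1+c]$ the disk $|u|<u_c s^{-1/2}$ is exactly the one claimed).

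Next I would establish the parity statement. Since $\hat g_{2k}(w)$ and $\hat b_{2k}(w)$ are power series in $w$ (analytic at $w=0$), $\hat g_{2k}(su^2)$ and $\hat b_{2k}(su^2)$ are power series in $u^2$, hence even functions of $u$. Then $g_{2k}(s,u)=u^{4k-2}\hat g_{2k}(su^2)$ is a product of $u^{4k-2}$, an even power of $u$, with an even function, so $g_{2k}$ is even; while $b_{2k}(s,u)=u^{4k-1}\hat b_{2k}(su^2)$ is a product of $u^{4k-1}$, an odd power, with an even function, so $b_{2k}$ is odd. For this one only needs that $\hat g_{2k},\hat b_{2k}$ are genuine Taylor series at $w=0$, which is part of Proposition~\ref{propgb} (they are analytic there, being built inductively from $\hat g_0,\hat b_0$, which by \eqref{g0hatT_1} are analytic at $w=0$, via the linear system \eqref{hat3}--\eqref{hat4} whose determinant $D(w)=1-108\hat g_0(w)$ satisfies $D(0)=1\ne0$).

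The only point requiring a little care — and the mild obstacle — is the bookkeeping of the exponent $4k-2$ when $k=1$, where $u^{4k-2}=u^{2}$ is still a positive power, so no spurious pole is introduced; for $k\ge 1$ the exponents $4k-2$ and $4k-1$ are always nonnegative, so the scaling never degrades analyticity at $u=0$. One should also note that for the statement about analyticity on the full disk $|u|<u_c s^{-1/2}$ it suffices that $su^2$ stays in $|w|<w_c$, which is immediate from $|su^2|=|s||u|^2<|s|\cdot u_c^2|s|^{-1}=u_c^2=w_c$; no boundary behavior of $\hat g_{2k},\hat b_{2k}$ at $|w|=w_c$ is needed. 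This completes the argument.
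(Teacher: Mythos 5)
Your argument is correct and follows essentially the same route as the paper: analyticity is obtained from Proposition \ref{propgb} via the substitution $w=su^2$, and the parity from rewriting \eqref{hatbg_1} as $g_{2k}(s,u)=u^{4k-2}\hat g_{2k}(su^2)$, $b_{2k}(s,u)=u^{4k-1}\hat b_{2k}(su^2)$ and noting the even/odd prefactors. Your extra checks (the disk $|u|<u_c s^{-1/2}$ maps into $|w|<w_c$, and the exponents $4k-2$, $4k-1$ are nonnegative for $k\ge 1$) simply make explicit what the paper leaves implicit.
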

\begin{proof}
Analyticity follows from Proposition \ref{propgb} and the change of variable $w=su^2$, and the parity, from rewriting equation \eqref{hatbg_1}:
\begin{equation}
\begin{aligned}
g_{2k}(s,u)=u^{4k-2}\hat{g}_{2k}(w),\qquad
b_{2k}(s,u)=u^{4k-1}\hat{b}_{2k}(w).
\end{aligned}
\end{equation}

\end{proof}

%%%%

\section{The large $N$ expansion of the free energy $F_N$}\label{free_energy}

\subsection{Another change of variable and an analysis of the auxiliary potential}
To evaluate the large  $N$ expansion of the free energy $F_N$, we will use the Toda equation, which relates $F_N$ to the recurrence coefficient $\ga_N^2$.
To this end we will consider a different form of the partition function.
Under the change of variable
\begin{equation}\label{wz}
z=(3u)^{-1/3}\zeta+\frac{1}{6u}\,,
\end{equation}
where we assume that $u>0$ and $(3u)^{-1/3}>0$, we have, by  straightforward algebra, that
\begin{equation}\label{wz1}
\frac{z^2}{2}-uz^3-\frac{1}{108u^2}=-\frac{\zeta^3}{3}+t\zeta,
\end{equation}
where
\begin{equation}\label{tu}
t=\frac{1}{4(3u)^{4/3}}\,.
\end{equation}
Let us denote
\begin{equation}\label{ZN2}
\tilde{Z}_N=\int_{\tilde \Ga}\ldots\int_{\tilde \Ga} \prod_{1\leq j<k\leq N} (\zeta_j-\zeta_k)^2
\prod_{j=1}^N e^{-N\tilde{V}(\zeta_j)}d\zeta_1\ldots d\zeta_N,
\end{equation}
where
\begin{equation}
\tilde{V}(\zeta)=-\frac{\zeta^3}{3}+t\zeta,
\end{equation}
and $\tilde \Ga$ is the image of $\Ga$ under the change of variable \eqref{wz}.
Note that this cubic model was used in \cite{DHK} with $t=0$, in the context of the complex Gaussian quadrature
of integrals with high order stationary points. The corresponding free energy is
\begin{equation}
\tilde{F}_N=\frac{1}{N^2}\ln \tilde{Z}_N.
\end{equation}
Formula \eqref{wz1} implies that
\begin{equation}\label{FNFN}
\tilde{F}_N=\frac{1}{108 u^2}+\frac{\ln \,(3u)}{3}+F_N=\frac{2\,t^{3/2}}{3}-\frac{\ln\,(4t)}{4}+F_N\,.
\end{equation}
By Theorem \ref{Th1}, for any $\ep>0$, $F_N(u)$ admits a uniform asymptotic expansion for $0\le u\le u_c-\ep$,
hence for any $\ep>0$, $\tilde F_N(t)$ admits a uniform asymptotic expansion as well, for $t_c+\ep\le t< \infty$:
\begin{equation}\label{FNFN1}
\tilde{F}_N(t)\sim \sum_{k=0}^{\infty}\frac{ \tilde F^{(2k)}(t)}{N^{2g}}.
\end{equation}
where
\begin{equation}\label{FNFN0}
\tilde{F}^{(0)}(t)=\frac{1}{108 u^2}+\frac{\ln\, (3u)}{3}+F^{(0)}(u)=\frac{2\,t^{3/2}}{3}-\frac{\ln\,(4t)}{4}+F^{(0)}(u).
\end{equation}
and for $k\ge 1$,
\begin{equation}\label{FNFN2}
\tilde{F}^{(2k)}(t)=F^{(2k)}(u),\qquad u=\frac{1}{3(4t)^{3/4}}\,.
\end{equation}
From \eqref{uc} and \eqref{tu} we obtain that the new critical value is
\begin{equation}\label{tc0}
t_c=\frac{1}{4(3u_c)^{4/3}}=3\cdot 2^{-2/3}.
\end{equation}

Note that with the change of variable \eqref{wz}, the family of orthogonal polynomials is scaled and shifted.
If we write $z=c\zeta+d$, then:
\begin{equation}
P_n(z)=P_n(c\zeta+d)=c^n \tilde{P}_n(\zeta),
\end{equation}
so that $\tilde{P}_n(\zeta)$ is again monic. The coefficients of the recurrence relation are modified in the following way:
\begin{equation}\label{bg}
\tilde{\beta}_n=\frac{\beta_n-d}{c} , \qquad
\tilde{\gamma}_n^2=\frac{\gamma^2_n}{c^2}.
\end{equation}
In our case, according to \eqref{wc}, $c=(3u)^{-1/3}$ and $d=\frac{1}{6u}$, so
\begin{equation}\label{gbu}
\begin{aligned}
\tilde{\gamma}_n^2&=(3u)^{2/3}\gamma_n^2=\frac{\gamma_n^2}{2\sqrt{t}}\,,\\
\tilde{\beta}_n&=(3u)^{1/3}\left(\beta_n-\frac{1}{6u}\right)=\frac{1}{(4t)^{1/4}}\left(\beta_n-\frac{1}{6u}\right).
\end{aligned}
\end{equation}

An important fact in our analysis will be that with respect to the variable $t$ we have the following Toda equation
(see, e.g., \cite{BI}):
\begin{equation}\label{Toda}
\frac{d^2 \tilde{F}_N}{dt^2}=\tilde{\gamma}^2_N.
\end{equation}
The usual derivation of the Toda equation assumes the existence of the orthogonal polynomials $\tilde P_n(\z)$ for
$n=0,1,\ldots,N$. In our case we know the existence of the orthogonal polynomials $P_n(z)$, and hence
$\tilde P_n(\z)$ only for $n\in I_N$. We will prove that, nevertheless,
the Toda equation is valid in our case.
Namely, we will prove the following proposition:

\begin{prop} \label{Toda_eq} Toda equation \eqref{Toda} holds for any $t>t_c$.
\end{prop}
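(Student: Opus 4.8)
The plan is to derive the Toda equation directly from the partition function $\tilde Z_N$, circumventing the fact that we only control the orthogonal polynomials $\tilde P_n(\zeta)$ for $n\in I_N$ rather than for all $n=0,1,\dots,N$. The starting point is the standard Toda lattice structure: if one defines $\tau_n(t)=D_{n-1}$ (the Hankel determinant of the moments $c_j$, now computed on the contour $\tilde\Ga$ with weight $e^{-N\tilde V(\zeta)}$), then the Toda hierarchy gives the bilinear identity $\tau_n''\tau_n-(\tau_n')^2=\tau_{n+1}\tau_{n-1}$, where $'$ denotes $d/dt$, provided all the determinants $\tau_0,\dots,\tau_{N}$ are nonzero. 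Rewriting this in terms of $\tilde F_N=\frac{1}{N^2}\ln\tilde Z_N=\frac{1}{N^2}\ln(N!\,\tau_N)$ and using $\tilde\gamma_N^2=\tau_{N+1}\tau_{N-1}/\tau_N^2$ (which is just $h_N/h_{N-1}$, cf.\ \eqref{rr5} and \eqref{op6}) yields \eqref{Toda}. So the real content is to justify the determinantal identity and the nonvanishing of the relevant $\tau_n$.

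First I would establish the differentiation formulas for the moments: from $\tilde V(\zeta)=-\zeta^3/3+t\zeta$ we get $\partial_t e^{-N\tilde V(\zeta)}=-N\zeta\,e^{-N\tilde V(\zeta)}$, so $\frac{d}{dt}c_j=-N\,c_{j+1}$. This is precisely the relation that makes the Hankel determinants $\tau_n(t)$ into a sequence of $\tau$-functions for the Toda lattice (up to the rescaling by $N$): differentiating $\tau_n=\det(c_{i+j})_{0\le i,j\le n-1}$ and using multilinearity produces, via the classical Jacobi/Sylvester determinant identity, the relation $\tau_n\,\tau_n''-(\tau_n')^2=N^2\,\tau_{n+1}\tau_{n-1}$. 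This is a finite, purely algebraic computation and needs only that $\tau_n$ be differentiable in $t$ — which holds since $\tilde Z_N(t)$ is an entire function of $t$ for the fixed convergent contour $\tilde\Ga$, and hence so is each moment $c_j(t)$ and each $\tau_n(t)$. Alternatively, one can avoid the determinant algebra entirely by working with $\tilde Z_N$ directly: expanding $\ln\tilde Z_N$ and differentiating twice in $t$ under the integral sign (legitimate by dominated convergence on $\tilde\Ga$, since the integrand decays like $e^{-cN|\zeta|^3}$ along the rays), one obtains $\frac{d^2}{dt^2}\tilde F_N$ as a connected two-point correlator of $\sum_j\zeta_j$, and the standard orthogonal-polynomial/Christoffel--Darboux manipulation collapses this to $\tilde\gamma_N^2$. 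I would likely present the determinant route since it makes the nonvanishing hypothesis transparent.

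The step that requires care — and that I expect to be the main obstacle — is verifying that $\tau_0(t),\dots,\tau_N(t)$ are all nonzero for $t>t_c$, since the RH analysis (Theorem~\ref{RHP_t}) only guarantees $D_{n-1}\ne 0$, equivalently the existence of $P_n$, for $n$ in the band $I_N$. To handle this I would argue as follows: the product of consecutive ratios telescopes, $\tau_N=\tau_0\prod_{n=0}^{N-1}h_n$ with $h_n=\tau_{n+1}/\tau_n$, but this still needs $h_n\ne 0$ for the low indices $n\notin I_N$. The cleanest fix is to note that the Toda identity is an identity between entire functions of $t$ once we know it holds on an open set, so it suffices to prove \eqref{Toda} for $t$ large (equivalently $u$ small), where all of $\tau_0,\dots,\tau_N$ are nonzero: indeed, as $u\to 0^+$ the weight degenerates to the Gaussian $e^{-N\zeta^2/2}$ (after the shift), whose Hankel determinants are the classical positive Hermite normalization constants, so by continuity $\tau_n(t)\ne 0$ for all $0\le n\le N$ once $t$ is sufficiently large. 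On that region the bilinear identity and hence \eqref{Toda} hold with all quantities well-defined; since both sides of \eqref{Toda} are real-analytic in $t$ on $(t_c,\infty)$ — the left side because $\tilde F_N$ is (being $\frac{1}{N^2}\ln$ of a nonvanishing analytic function near each such $t$, using that $\tilde Z_N(t)\ne 0$ for $t>t_c$, which itself follows from $D_{N-1}\ne 0$ for $N\in I_N$ via \eqref{op10}), and the right side because $\tilde\gamma_N^2$ is real-analytic there by Proposition~\ref{an_1} and \eqref{gbu} — the identity propagates to all $t>t_c$ by analytic continuation. I would close by remarking that the same telescoping shows $\tilde Z_N(t)\ne 0$ throughout $t>t_c$, so $\tilde F_N$ is genuinely defined and smooth there, which is what legitimizes writing \eqref{Toda} in the first place.
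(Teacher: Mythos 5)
Your argument is correct and is essentially the paper's own proof: establish the Toda equation for large $t$ (small $u$), where by continuity of the moments toward the Gaussian case all Hankel determinants $\tau_0,\dots,\tau_N$ are nonzero so the standard bilinear/orthogonal-polynomial derivation applies, and then extend to all $t>t_c$ by analytic continuation, using analyticity of $\tilde F_N$ (via nonvanishing of $\tilde Z_N$, i.e.\ $D_{N-1}\neq 0$) and of $\tilde\gamma_N^2$. The only difference is that you spell out the ``usual argument'' (the determinantal Toda identity) that the paper delegates to the reference \cite{BI}.
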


For a proof of this proposition see Appendix \ref{A_Toda} in the end of the paper.
The idea now is to integrate twice equation \eqref{Toda} in the variable $t$,
in order to obtain $\tilde{F}_N$, and then compute $F_N$ using equation \eqref{FNFN}.
From the asymptotic expansion \eqref{asympgb2}, we have
\begin{equation}\label{ggbu}
\tilde{\gamma}^2_N(t)
\sim \sum_{k=0}^{\infty} \frac{\tilde{g}_{2k}(t)}{N^{2k}}
=\frac{1}{2\sqrt{t}}\sum_{k=0}^{\infty} \frac{g_{2k}(u)}{N^{2k}},\qquad u=\frac{1}{3(4t)^{3/4}}\,.
\end{equation}
Integrating this expression twice should give an expansion for $\tilde{F}_N$, and then we can compute $F_N$ using \eqref{FNFN}. However, the problem now is that we have to justify that the term-by-term integration
of the large $N$ asymptotic expansion of $\tilde{\gamma}_N$ over an unbounded interval in the variable $t$ is permissible.
This is the content of the following proposition:

\begin{proposition} \label {prop_41} We have that
\begin{equation}\label{tF0}
\tilde F^{(0)}(t)=\frac{2\,t^{3/2}}{3}-\frac{\ln\,(4t)}{4}+
\int_{\infty}^t\int_\infty^\tau \left(\tilde g_0(\sg)-\frac{1}{2\sqrt \sg}-\frac{1}{4\sg^2}\right)
 d\sg d\tau,
\end{equation}
and for any $k\ge 1$,
\begin{equation}\label{tFk}
\tilde F^{(2k)}(t)=\int_{\infty}^t\int_\infty^\tau \tilde g_{2k}(\sg) d\sg d\tau.
\end{equation}
\end{proposition}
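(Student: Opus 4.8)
The plan is to avoid integrating the series for $\tilde\gamma_N^2$ term by term over the unbounded interval $[t,\infty)$ directly, and instead to use the asymptotic expansion of $\tilde F_N$ that we already have from Theorem \ref{Th1}, together with the Toda equation, to \emph{identify} the coefficients $\tilde F^{(2k)}$. First I would record that, by Theorem \ref{Th1} and \eqref{FNFN}--\eqref{FNFN2}, the free energy $\tilde F_N(t)$ has, for every $\delta>0$, a uniform asymptotic expansion $\tilde F_N(t)\sim\sum_{k\ge0}\tilde F^{(2k)}(t)N^{-2k}$ on $[t_0,\infty)$, $t_0=t(u_c-\delta)$, which by Theorem \ref{Th1} can be differentiated any number of times in $u$, hence — after the change of variable $u=\tfrac1{3(4t)^{3/4}}$, whose $t$-derivatives are bounded on $[t_0,\infty)$ — any number of times in $t$, with a uniform error term. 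In particular $\tfrac{d^2}{dt^2}\tilde F_N\sim\sum_{k\ge0}\big(\tfrac{d^2}{dt^2}\tilde F^{(2k)}\big)N^{-2k}$ uniformly on $[t_0,\infty)$. On the other hand, Proposition \ref{Toda_eq} and \eqref{ggbu} give $\tfrac{d^2}{dt^2}\tilde F_N=\tilde\gamma_N^2\sim\sum_{k\ge0}\tilde g_{2k}(t)N^{-2k}$ uniformly on $[t_0,\infty)$. Since an asymptotic expansion in powers of $N^{-2}$ is unique, comparing the two expansions at a fixed $t$ yields
\[
\frac{d^2\tilde F^{(2k)}}{dt^2}(t)=\tilde g_{2k}(t),\qquad t>t_c,\quad k\ge0,
\]
where the restriction $t\ge t_0$ drops out because $\delta>0$ was arbitrary; equivalently, peeling off the explicit part of $\tilde F^{(0)}$ and using $\tfrac{d^2}{dt^2}\big(\tfrac{2}{3}t^{3/2}\big)=\tfrac1{2\sqrt t}$ and $\tfrac{d^2}{dt^2}\big(-\tfrac14\ln 4t\big)=\tfrac1{4t^2}$,
\[
\frac{d^2}{dt^2}\big[F^{(0)}(u(t))\big]=\tilde g_0(t)-\frac1{2\sqrt t}-\frac1{4t^2}.
\]

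Next I would fix the two constants of integration from the behaviour at $t=\infty$. As $t\to\infty$ we have $u(t)\to0^+$, and by \eqref{top2} each $F^{(2g)}(u)=\sum_{j\ge1}f^{(2g)}_{2j}u^{2j}/(2j)!$ vanishes to second order at $u=0$, so $F^{(2g)}(0)=0$ and $(F^{(2g)})'(0)=0$. Hence for $k\ge1$ both $\tilde F^{(2k)}(t)=F^{(2k)}(u(t))$ and $\tfrac{d}{dt}\tilde F^{(2k)}(t)=(F^{(2k)})'(u(t))\,u'(t)$ tend to $0$ as $t\to\infty$, and likewise $F^{(0)}(u(t))$ and its $t$-derivative tend to $0$. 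Integrating the identities of the first step twice from $\infty$ to $t$ and using these limits gives, for $k\ge1$,
\[
\tilde F^{(2k)}(t)=\int_\infty^t\!\!\int_\infty^\tau\tilde g_{2k}(\sigma)\,d\sigma\,d\tau,
\]
and for $k=0$,
\[
\tilde F^{(0)}(t)=\frac{2\,t^{3/2}}{3}-\frac{\ln(4t)}{4}+\int_\infty^t\!\!\int_\infty^\tau\Big(\tilde g_0(\sigma)-\frac1{2\sqrt\sigma}-\frac1{4\sigma^2}\Big)d\sigma\,d\tau,
\]
which are \eqref{tFk} and \eqref{tF0}. The iterated improper integrals are well defined: by the fundamental theorem of calculus $\int_\infty^\tau(\,\cdot\,)\,d\sigma$ equals $\tfrac{d}{dt}\tilde F^{(2k)}(\tau)$ (resp. $\tfrac{d}{dt}F^{(0)}(u(\tau))$), which has limit $0$ at $\infty$, and then $\int_\infty^t(\,\cdot\,)\,d\tau$ equals $\tilde F^{(2k)}(t)$ (resp. $F^{(0)}(u(t))$).

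If one wishes the integrals to converge absolutely, this can be supplemented by decay estimates at $\infty$. Using the substitution $w=u^2$ of Section \ref{rec_coeff}, an induction on the string equations \eqref{hat4} gives $\hat g_{2k}(0)=0$ for all $k\ge1$ — at $w=0$ every term on the right of \eqref{hat4} carries a factor $\hat g_{2m}(0)$ with $m\le k-1$, while $\hat g_0(0)=\hat b_0(0)=0$ and the determinant equals $1$ there — so $g_{2k}(u)=u^{4k-2}\hat g_{2k}(u^2)=\mathcal O(u^{4k})$ as $u\to0$, whence $\tilde g_{2k}(\sigma)=g_{2k}(u(\sigma))/(2\sqrt\sigma)=\mathcal O(\sigma^{-3k-1/2})$ for $k\ge1$, and a direct computation from \eqref{g0ser} yields $\tilde g_0(\sigma)-\tfrac1{2\sqrt\sigma}-\tfrac1{4\sigma^2}=\mathcal O(\sigma^{-7/2})$.

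The step that will need the most care is the first one: justifying that the asymptotic expansion of $\tilde F_N$ may be differentiated twice in $t$ with a \emph{uniform} error term on the unbounded interval $[t_0,\infty)$, which is what makes the term-by-term comparison with $\tilde\gamma_N^2$ legitimate. This rests on the ``differentiability with uniform error'' clause of Theorem \ref{Th1} for $F_N(u)$ on $[0,u_c-\varepsilon]$ (a closed interval that includes $u=0$, i.e.\ $t=\infty$), transported through the change of variable $u\mapsto t$, whose successive derivatives are bounded for $t\ge t_0$. By contrast, integrating the expansion \eqref{ggbu} of $\tilde\gamma_N^2$ directly would instead demand that the $N$-remainder $\tilde\gamma_N^2(\sigma)-\sum_{k\le K}\tilde g_{2k}(\sigma)N^{-2k}$ be not merely $\mathcal O(N^{-2K-2})$ but also decay in $\sigma$ fast enough to be doubly integrable on $[t,\infty)$, which is the genuinely delicate point in that alternative route.
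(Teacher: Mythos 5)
Your proposal is correct, and it reaches the paper's two key identities by a somewhat different mechanism. The paper also reduces everything to $\tfrac{d^2\tilde F^{(2k)}}{dt^2}=\tilde g_{2k}(t)$ followed by two integrations with constants fixed at $t=\infty$, but it establishes the second-derivative identity \emph{without} differentiating the asymptotic expansion of $\tilde F_N$ in $t$: it applies the second-difference operator $\Delta_h$, writes $\Delta_h\tilde F_N(t)=\int_{t-h}^{t+h}\tilde F_N''(\tau)m_h(\tau,t)\,d\tau$ with the hat function $m_h\ge 0$, inserts the Toda equation and the uniform expansion of $\tilde\gamma_N^2$, and identifies coefficients of the two expansions of $\Delta_h\tilde F_N$; since $h$ and $t$ are arbitrary this forces $\tfrac{d^2\tilde F^{(2k)}}{dt^2}=\tilde g_{2k}$. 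Thus the paper's argument needs only the \emph{undifferentiated} uniform expansions of $\tilde F_N$ and $\tilde\gamma_N^2$ plus Toda, whereas you invoke the full strength of the term-by-term differentiability clause of Theorem \ref{Th1} and transport it through $u=\tfrac{1}{3(4t)^{3/4}}$ (legitimate as the theorem is stated, and the bounded derivatives of $u(t)$ on $[t_0,\infty)$ make the chain-rule step sound, but it is a stronger input than the paper uses). The second step also differs in flavor: the paper expands $g_{2k}(u)=a_0(k)+a_1(k)u^2+\cdots$, integrates to the general form $\tfrac{4a_0t^{3/2}}{3}-\tfrac{a_1\ln t}{72}+C+Dt+R_{2k}(t)$, and determines $a_0(k),a_1(k),C,D$ by matching the growth as $t\to\infty$ against $\lim_{u\to0}F^{(2k)}(u)=0$ (obtaining $g_{2k}(u)=\mathcal O(u^4)$, $k\ge1$, as a byproduct), while you fix the constants directly from the boundary values $F^{(2k)}(u(t))\to0$ and $\tfrac{d}{dt}F^{(2k)}(u(t))\to0$ and, optionally, prove the decay of $\tilde g_{2k}$ at infinity independently from the string equations \eqref{hat4} (your induction giving $\hat g_{2k}(0)=0$, hence $g_{2k}(u)=\mathcal O(u^{4k})$, is correct and in fact sharper than the paper's remark \eqref{g2ku2}). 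Both routes interpret the iterated improper integrals in the same way, and your FTC argument for their existence matches the paper's conclusion.
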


\begin{proof}
Let us prove first that  for any $k\ge 0$,
\begin{equation}\label{dF2}
\frac{d^2\tilde F^{(2k)}(t)}{dt^2}= \tilde g_{2k}(t).
\end{equation}
Consider the difference operator
\begin{equation}
\Delta_h f(t)=f(t+h)-2f(t)+f(t-h),
\end{equation}
where $h>0$ and $t$ are fixed. From the Taylor theorem with integral remainder, we have
\begin{equation}\label{dh0}
\Delta_h f(t)=\int_{t-h}^{t+h} f''(\tau)m_h(\tau,t)d\tau,
\end{equation}
where $m_h(\tau,t)$ is the hat-shaped function
\begin{equation}
m_h(\tau,t)=\left\{
\begin{array}{ll}
h-|\tau-t|, & |\tau-t|\leq h\\
0, & \textrm{otherwise}.
\end{array}\right.
\end{equation}
It is clear that $m_h(\tau,t)\geq 0$ for $\tau\in [t-h,t+h]$. Now, for any $K>0$,
$\ep>0$, $1\ge h >0$, and $t\ge t_c+1+\ep$,
\begin{equation}\label{dh1}
\begin{aligned}
\Delta_h \tilde{F}_N(t)&=\int_{t-h}^{t+h} \tilde{F}_N''(\tau)m_h(\tau,t)d\tau\\
&=\int_{t-h}^{t+h} \tilde{\gamma}_N(\tau)m_h(\tau,t)d\tau\\
&=\sum_{k=0}^K \int_{t-h}^{t+h} \frac{\tilde{g}_{2k}(\tau)}{N^{2k}}m_h(\tau,t)d\tau+\mathcal{O}(N^{-2K-2}),
\end{aligned}
\end{equation}
where the error term is uniform in $t\ge t_c+1+\ep$, since the asymptotic expansion \eqref{ggbu}
is uniform in $t$. On the other hand,
\begin{equation}\label{dh2}
\Delta_h \tilde{F}_N(t)=\sum_{k=0}^K  \frac{\Delta_h\tilde{F}^{(2k)}(t)}{N^{2k}}
+\mathcal{O}(N^{-2K-2}),
\end{equation}
where the error term is uniform in $t\ge t_c+1+\ep$, because the asymptotic expansion \eqref{FNFN1}
is uniform in $t$.
Since the coefficients of an asymptotic series in powers of $N^{-2}$ for the function $\Delta_h \tilde{F}_N(t)$
are uniquely determined, we obtain from \eqref{dh1}, \eqref{dh2} that for any $k\ge 0$,
\begin{equation}
 \Delta_h\tilde{F}^{(2k)}(t)=\int_{t-h}^{t+h} \tilde{g}_{2k}(\tau)m_h(\tau,t)d\tau.
\end{equation}
But by \eqref{dh0},
\begin{equation}
\Delta_h \tilde{F}^{(2k)}(t)=\int_{t-h}^{t+h} \frac{d^2 \tilde{F}^{(2k)}(\tau)}{d\tau^2}\,m_h(\tau,t)d\tau,
\end{equation}
hence
\begin{equation}
\int_{t-h}^{t+h} \frac{d^2 \tilde{F}^{(2k)}(\tau)}{d\tau^2}\,m_h(\tau,t)d\tau=
\int_{t-h}^{t+h} \tilde{g}_{2k}(\tau)m_h(\tau,t)d\tau,
\end{equation}
and since $h$ and $t$ are arbitrary and $m_h(\tau,t)$ is a positive function, equation \eqref{dF2} holds.

By Corollary \ref{parity}, the function $g_{2k}(u)$ is an even function of $u$, analytic at $u=0$, hence as $u\to 0$,
\begin{equation}\label{g2ku}
g_{2k}(u)=a_0(k)+a_2(k)u^2+\mathcal{O}(u^{4}).
\end{equation}
Because of \eqref{ggbu}, we know that
\begin{equation}\label{gtg_1}
\tilde{g}_{2k}(t)=\frac{g_{2k}(u)}{2\sqrt t}, \qquad k\geq 0,
\end{equation}
and this implies that as $t\to\infty$,
\begin{equation}
 \tilde{g}_{2k}(t)=\frac{g_{2k}(u)}{2\sqrt t}
=\frac{a_0(k)}{2\sqrt t}+\frac{a_1(k)}{72t^2}+\mathcal{O}(t^{-7/2}).
\end{equation}
Integrating twice we obtain that
\begin{equation}\label{CDR1}
 \tilde F^{(2k)}(t)=\frac{4a_0(k) t^{3/2}}{3}-\frac{a_1(k)\ln t}{72}+C+Dt+R_{2k}(t),
\end{equation}
where
\begin{equation}
R_{2k}(t)=\mathcal{O}(t^{-3/2})
\end{equation}
and $C,\,D$ are some unknown constants.
The error term can be written as
\begin{equation}\label{CDR1a}
 R_{2k}(t)=\int_{\infty}^t\int_\infty^\tau \left(\tilde g_{2k}(\sg)-\frac{a_0(k)}{2\sqrt \sg}-\frac{a_1(k)}{72\sg^2}\right)
 d\sg d\tau
\end{equation}
Combining \eqref{CDR1} for $k=0$ with \eqref{FNFN0} we obtain that
\begin{equation}\label{CDR2}
\begin{aligned}
 \tilde F^{(0)}(t)&=\frac{2\,t^{3/2}}{3}-\frac{\ln\,(4t)}{4}+F^{(0)}(u)\\
&=\frac{4a_0(0) t^{3/2}}{3}
-\frac{a_1(0)\ln t}{72}+C+Dt+R_{0}(t).
\end{aligned}
\end{equation}
Since
\begin{equation}\label{CDR3}
\lim_{u\to 0}F^{(0)}(u)=0,\qquad \lim_{t\to\infty}R_{0}(t)=0,
\end{equation}
we conclude from \eqref{CDR2}, by taking $t\to\infty$, that
\begin{equation}\label{CDR4}
a_0(0)=\frac{1}{2}\,,\quad a_1(0)=18,\quad C=-\frac{\ln 2}{2}\,,\quad D=0,
\end{equation}
and \eqref{tF0} follows from \eqref{CDR1a}, \eqref{CDR2}.

Similarly, combining \eqref{CDR1} for $k\ge 1$ with \eqref{FNFN2} we obtain that
\begin{equation}\label{CDR5}
\begin{aligned}
 \tilde F^{(2k)}(t)&=F^{(2k)}(u)\\
&=\frac{4a_0(0) t^{3/2}}{3}
-\frac{a_1(0)\ln t}{72}+C+Dt+R_{2k}(t).
\end{aligned}
\end{equation}
Again, since
\begin{equation}\label{CDR6}
\lim_{u\to 0}F^{(2k)}(u)=0,\qquad \lim_{t\to\infty}R_{2k}(t)=0,
\end{equation}
we conclude from \eqref{CDR5}, by taking $t\to\infty$, that
\begin{equation}\label{CDR7}
a_0(k)=0\,,\quad a_1(k)=0,\quad C=0\,,\quad D=0,
\end{equation}
and \eqref{tFk} follows from \eqref{CDR1a}, \eqref{CDR5}.
\end{proof}

\begin{remark} From \eqref{g2ku} and \eqref{CDR7} we obtain that if $k\ge 1$ then
\begin{equation}\label{g2ku2}
g_{2k}(u)=\mathcal{O}(u^{4}),\qquad u\to 0.
\end{equation}
\end{remark}

Also, from \eqref{FNFN0}, \eqref{FNFN2}, \eqref{tF0}, and \eqref{tFk},
we have the following corollary of Proposition \ref{prop_41}.

\begin{cor}
\begin{equation}\label{CDR8}
F^{(0)}(u)=
\int_{\infty}^t\int_\infty^\tau \left(\tilde g_0(\sg)-\frac{1}{2\sqrt \sg}-\frac{1}{4\sg^2}\right)
 d\sg d\tau,
\end{equation}
and
\begin{equation}\label{CDR9}
F^{(2k)}(u)=\int_{\infty}^t\int_\infty^\tau \tilde g_{2k}(\sg) d\sg d\tau,
\qquad t=\frac{1}{4(3u)^{4/3}}\,;\qquad k\ge 1.
\end{equation}
\end{cor}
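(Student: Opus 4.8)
The plan is to obtain the Corollary as an immediate bookkeeping consequence of Proposition \ref{prop_41}, rewritten in the original variable $u$ by means of the elementary change-of-variable identities \eqref{tu}, \eqref{FNFN0} and \eqref{FNFN2}; no new estimate is needed.

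For the genus-zero statement, I would start from formula \eqref{tF0}, which expresses $\tilde F^{(0)}(t)$ as the sum of the elementary part $\frac{2t^{3/2}}{3}-\frac{\ln(4t)}{4}$ and the double integral $\int_\infty^t\int_\infty^\tau\bigl(\tilde g_0(\sg)-\frac1{2\sqrt\sg}-\frac1{4\sg^2}\bigr)\,d\sg\,d\tau$. On the other hand, \eqref{FNFN0} gives $\tilde F^{(0)}(t)=\frac{2t^{3/2}}{3}-\frac{\ln(4t)}{4}+F^{(0)}(u)$ with $u$ and $t$ related by \eqref{tu}. Cancelling the common elementary terms on the two sides yields \eqref{CDR8}.

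For $k\ge 1$, formula \eqref{tFk} gives $\tilde F^{(2k)}(t)=\int_\infty^t\int_\infty^\tau\tilde g_{2k}(\sg)\,d\sg\,d\tau$, while \eqref{FNFN2} identifies $\tilde F^{(2k)}(t)=F^{(2k)}(u)$. The only point to check is that the relation $u=\frac1{3(4t)^{3/4}}$ appearing in \eqref{FNFN2} is the same as the relation $t=\frac1{4(3u)^{4/3}}$ stated in \eqref{CDR9}; this is one line of algebra from \eqref{tu}, since $t=\frac1{4(3u)^{4/3}}$ gives $4t=(3u)^{-4/3}$, hence $(4t)^{3/4}=(3u)^{-1}$, i.e. $u=\frac1{3(4t)^{3/4}}$. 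Combining the two expressions for $\tilde F^{(2k)}(t)$ gives \eqref{CDR9}.

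Since everything reduces to substitution, there is really no obstacle to speak of; the Corollary is simply Proposition \ref{prop_41} re-expressed through $t=\frac1{4(3u)^{4/3}}$. The one point worth recording — and it is already addressed inside the proof of Proposition \ref{prop_41} — is that the iterated improper integrals over $[\,\cdot\,,\infty)$ actually converge. This is guaranteed by the asymptotics established there: $\tilde g_0(\sg)-\frac1{2\sqrt\sg}-\frac1{4\sg^2}=\mathcal{O}(\sg^{-7/2})$ and, for $k\ge 1$, $\tilde g_{2k}(\sg)=\mathcal{O}(\sg^{-7/2})$ (which follows from $g_{2k}(u)=\mathcal{O}(u^4)$ as $u\to 0$, cf. \eqref{g2ku2}, together with $u\sim t^{-3/4}$), so that after two integrations each remainder $R_{2k}(t)=\mathcal{O}(t^{-3/2})\to 0$; compare \eqref{CDR1a}.
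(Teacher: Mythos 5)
Your argument is exactly the paper's: the Corollary is stated there as an immediate consequence of \eqref{FNFN0}, \eqref{FNFN2}, \eqref{tF0}, and \eqref{tFk}, i.e.\ cancelling the elementary terms in the genus-zero case and identifying $\tilde F^{(2k)}(t)=F^{(2k)}(u)$ under $t=\frac{1}{4(3u)^{4/3}}$ for $k\ge 1$. Your additional remark on convergence of the iterated integrals is already contained in the proof of Proposition \ref{prop_41}, so the proposal is correct and matches the paper's route.
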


Using this result, we will prove Theorem \ref{Th2}, by integrating the explicit expression that we found for the leading coefficient $g_0(u)$. Also, we will prove Theorem \ref{Th3} by using an expression for $\hat{g}_2(w)$ obtained from the string equations.

\section{Proof of Theorem \ref{Th2} and Theorem \ref{Th3}}
\label{Th2Th3}

Observe that from \eqref{tu} we have that
\begin{equation}\label{CDR10}
72u^2=t^{-3/2},
\end{equation}
so \eqref{seriesg01} implies that
\begin{equation}
g_0(u)=\sum_{j=0}^{\infty}
\frac{72^j \Gamma(\frac{3j+1}{2})\, u^{2j}}{2\Gamma(j+1)\Gamma(\frac{j+3}{2})}
=\sum_{j=0}^{\infty} \frac{\Gamma(\frac{3j+1}{2}) t^{-3j/2}}{2\Gamma(j+1)\Gamma(\frac{j+3}{2})}\,.
\end{equation}
Multiplying by the factor $(3u)^{2/3}=\frac{1}{2\sqrt{t}}$, we obtain
\begin{equation}
\begin{aligned}
\tilde{g}_0(t) &= \frac{1}{2\sqrt{t}}+\frac{1}{4t^2}+
\sum_{j=2}^{\infty} \frac{\Gamma(\frac{3j+1}{2})t^{-\frac{3j+1}{2}}}{4\Gamma(j+1)\Gamma(\frac{j+3}{2})}\,.
\end{aligned}
\end{equation}
Integrating twice in $t$, according to formula \eqref{CDR8}, we get
\begin{equation}\label{seriesf0}
\begin{aligned}
F^{(0)}(u)=\sum_{j=2}^{\infty} \frac{\Gamma(\frac{3j+1}{2})t^{-\frac{3j-3}{2}}}{(3j-1)(3j-3)\Gamma(j+1)\Gamma(\frac{j+3}{2})},
\end{aligned}
\end{equation}
and making the change of variable $t=\frac{1}{4(3u)^{4/3}}$, we have
\begin{equation}\label{seriesf0u}
\begin{aligned}
F^{(0)}(u)&=
\sum_{j=2}^{\infty} \frac{3^{2(j-1)} 4^{\frac{3j-3}{2}}
\Gamma(\frac{3j+1}{2})u^{2(j-1)}}{(3j-1)(3j-3)\Gamma(j+1)\Gamma(\frac{j+3}{2})}\\
&=\sum_{j=1}^{\infty} \frac{72^j
\Gamma(\frac{3j}{2})u^{2j}}{2\Gamma(j+3)\Gamma(\frac{j}{2}+1)}.
\end{aligned}
\end{equation}
This proves formula \eqref{Th2_1}.

 From the explicit expression of the coefficients $f^{(0)}_{2j}$ in formula \eqref{Th2_1} we obtain the asymptotic formula,
\begin{equation}
f^{(0)}_{2j}=\frac{(2j)!}{\sqrt{6\pi}\,u_c^{2j}\, j^{7/2}}
\left(1-\frac{115}{36j}+\frac{19705}{2592j^2}+\mathcal{O}(j^{-3})\right), \qquad j\to
\infty,
\end{equation}
which gives \eqref{asympf02j}, together with some higher order corrections.
Theorem \ref{Th2} is proved.

When $k=1$, system \eqref{hat3}--\eqref{hat4} reads
\begin{equation}\label{systemg2b2}
\begin{aligned}
&
\begin{pmatrix}
6 & 6\hat{b}_0(w)-1\\
1-6\hat{b}_0(w) & -6\hat{g}_0(w)
\end{pmatrix}
\begin{pmatrix}
\hat{g}_2(w)\\
\hat{b}_2(w)
\end{pmatrix}
=\frac{3}{4}
\begin{pmatrix}
-\hat{g}_0''(w)\\
\hat{g}_0(w)\hat{b}_0''(w)
\end{pmatrix}.
\end{aligned}
\end{equation}
This system can be solved (and simplified, using the string equations) to give
\begin{align}
\hat{g}_2(w)&=\frac{162\hat{g}_0(w)(5-324\hat{g}_0(w))}{(1-108\hat{g}_0(w))^4},\label{g2b2g} \\
\hat{b}_2(w)&=\frac{54w}{\hat{g}_0(w)(1-108\hat{g}_0(w))^4}\label{g2b2b}.\\
\end{align}
An explicit expression \eqref{Th3_1} for the coefficients $f^{(2)}_{2j}$ can be now obtained from equation
\eqref{g2b2g}. Namely, $\hat{g}_2(w)$ can be expanded in powers of $w$ around the origin,
\begin{equation}
\hat{g}_2(w)=\sum_{j=2}^{\infty} c_j w^j,
\end{equation}
starting with a $w^2$ term. From cubic equation \eqref{cubicg0hat}, we obtain
\begin{equation}
w=\hat{g}_0(w)\sqrt{1-72\hat{g}_0(w)},
\end{equation}
so writing $v=\hat{g}_0(w)$ and applying the Cauchy integral formula, we find
\begin{equation}
\begin{aligned}
c_j&=\frac{1}{2\pi i}\oint_{\gamma} \frac{(5-324v)v^{-j}(1-72v)^{\frac{-j-1}{2}}}{162(1-108v)^4}\left(\sqrt{1-72v}-\frac{36v}{\sqrt{1-72v}}\right)dv,\\
&=\frac{1}{2\pi i}\oint_{\gamma} \frac{(5-324v)v^{-j}(1-72v)^{-\frac{j}{2}}}{162(1-108v)^3}dv,
\end{aligned}
\end{equation}
where $\gamma$ is a smooth closed contour around the origin in the $v$ plane. Now we expand the binomial series and pick up the residue at $v=0$, to get
\begin{equation}
c_j=162\cdot 72^{j-1}\sum_{m=0}^{j-1}{\frac{3j}{2}-m-1 \choose j-m-1}(m+1)(m+5)\left(\frac{3}{2}\right)^m.
\end{equation}
If we write this in terms of the standard Pochhammer symbol, see \cite{dlmf}, we see that
\begin{equation}
(m+1)(m+5)=5\,\frac{(2)_m (6)_m}{(1)_m (5)_m},
\end{equation}
and also
\begin{equation}
{\frac{3j}{2}-m-1 \choose j-m-1}
={\frac{3j}{2}-1 \choose j-1}\frac{(-j+1)_m}{\left(-\frac{3j}{2}+1\right)_m},
\end{equation}
so we can identify $c_j$ with the following hypergeometric function:
\begin{equation}
c_j=\frac{45\cdot 72^j}{4} {\frac{3j}{2}-1 \choose j-1}
\, _3F_2\left(\begin{array}{l} -j+1,2,6\\ 5,-\frac{3j}{2}+1 \end{array};\frac{3}{2}\right).
\end{equation}
Since one of the parameters in the numerator equals another one in the denominator plus one, we can simplify this $_3F_2$ function in terms of Gauss hypergeometric functions using the following straightforward identity:
\begin{equation}\label{F32F21}
_3F_2\left(\begin{array}{l} a,b,c+1\\ c,d\end{array};z\right)
=\, _2F_1\left(\begin{array}{l} a,b\\ d\end{array};z\right)
+\frac{abz}{cd}\, _2F_1\left(\begin{array}{l} a+1,b+1\\ d+1\end{array};z\right),
\end{equation}
and thus
\begin{equation}\label{F21}
\begin{aligned}
c_j&=
\,\frac{45}{4} 72^j {\frac{3j}{2}-1 \choose j-1}\times\\
&\left[_2F_1\left(\begin{array}{l} -j+1,2\\ -\frac{3j}{2}+1 \end{array};\frac{3}{2}\right)
+\frac{6(j-1)}{5(3j-2)}\, _2F_1\left(\begin{array}{l} -j+2,3\\ -\frac{3j}{2}+2 \end{array};\frac{3}{2}\right)\right].
\end{aligned}
\end{equation}
Next, observe from \eqref{hatbg_1} and \eqref{hatbg_2} that
\begin{equation}
\tilde{g}_2(t)=\frac{1}{2\sqrt{t}}g_2(u), \qquad
g_{2}(u)=u^2 \hat{g}_2(w),
\end{equation}
so
\begin{equation}
\begin{aligned}
\tilde{g}_2(t)&=\frac{1}{2\sqrt{t}}g_2(u)=\frac{1}{144t^2}\hat{g}_2(w)
=\frac{1}{144t^2}\hat{g}_2\left(\frac{1}{72t^{3/2}}\right)\\
&=\frac{1}{144t^2}\sum_{j=1}^{\infty}c_j\left(\frac{1}{72t^{3/2}}\right)^j.
\end{aligned}
\end{equation}
Using formula \eqref{CDR9}, we integrate twice in $t$, and going back to the variable $u$, we find
\begin{equation}\label{F2u}
F^{(2)}(u)=\frac{1}{36}\sum_{j=1}^{\infty}\frac{c_ju^{2j}}{3j(3j+2)}\,,
\end{equation}
so the coefficients are
\begin{equation}
\begin{aligned}
f^{(2)}_{2j}&= \frac{c_j(2j)!}{108j(3j+2)}\\
&=\frac{45\cdot 72^j}{4} {\frac{3j}{2}-1 \choose j-1}
\, _3F_2\left(\begin{array}{l} -j+1,2,6\\ 5,-\frac{3j}{2}+1 \end{array};\frac{3}{2}\right)
 \frac{(2j)!}{108j(3j+2)}\\
% &=\frac{5\cdot 72^j}{48} {\frac{3j}{2}-1 \choose j-1}
% \, _3F_2\left(\begin{array}{l} -j+1,2,6\\ 5,-\frac{3j}{2}+1 \end{array};\frac{3}{2}\right) \frac{(2j)!}{j(3j+2)}
&=\frac{5\cdot 72^j \Gamma\left(\frac{3j}{2}\right)(2j)!}{48(3j+2)\Gamma(j+1)\Gamma\left(\frac{j}{2}+1\right)}
 \, _3F_2\left(\begin{array}{l} -j+1,2,6\\ 5,-\frac{3j}{2}+1 \end{array};\frac{3}{2}\right).
\end{aligned}
\end{equation}
This proves formula \eqref{Th3_1}. The large $j$ asymptotic behavior of these coefficients is given in Appendix \ref{App_Th3_2} below, proving formula  \eqref{Th3_2}.

\section{Case of genus $g>1$}\label{Th4}

The expressions for the terms in the topological expansion become more and more complicated as $k$ grows.
Fortunately, as we pointed out before, in order to understand the large $j$ behavior of the terms $f^{(2g)}_{2j}$,
it is enough to consider the leading order terms near the critical point $w=w_c$ only.
We recall that $w_c=\frac{\sqrt{3}}{324}$, as given in \eqref{wc}.

When $w=w_c$, the cubic equation $72\hat{g}_0^3(w)-\hat{g}_0^2(w)+w^2=0$ given in \eqref{cubicg0hat} has three roots,
\begin{equation}
r_1=r_2=\frac{1}{108}, \qquad r_3=-\frac{1}{216}.
\end{equation}
Substituting
\begin{equation}
w=w_c-\Delta w,\qquad \hat{g}_0(w)=\frac{1}{108}+\Delta\hat{g}_0(w)
\end{equation}
into the cubic equation, we have
\begin{equation}
72(\Delta\hat{g}_0(w))^2\left(\hat{g}_0(w)+\frac{1}{216}\right)=w_c^2-w^2
=\Delta w (w_c+w),
\end{equation}
Therefore,
\begin{equation}
\begin{aligned}
\Delta\hat{g}_0&=\pm \sqrt{\frac{(w+w_c)\Delta w}{72\left(\hat{g}_0(w)+\frac{1}{216}\right)}}=\pm \sqrt{\frac{2w_c\Delta w}{72\left(\hat{g}_0(w_c)+\frac{1}{216}\right)}}+\mathcal{O}(\Delta w)\\
&=\pm \frac{2^{1/2}3^{1/4}}{18} (\Delta w)^{1/2}+\mathcal{O}(\Delta w)
\end{aligned}
\end{equation}
So, near the critical value $w=w_c$, our solution satisfies
\begin{equation}\label{g0critical}
\hat{g}_0(w)=\frac{1}{108}-\frac{2^{1/2}3^{1/4}}{18}(\Delta w)^{1/2}+\mathcal{O}(\Delta w),
\end{equation}
taking the minus sign since the function $\hat{g}_0(w)$ is increasing with $w$ and therefore $\hat{g}_0(w)<\hat{g}(w_c)$.

Since we have $\hat{b}_0(w)$ explicitly in terms of $\hat{g}_0(w)$, see formula \eqref{p0q0}, we can obtain the asymptotic behavior of $\hat{b}_0(w)$ as well:
\begin{equation}\label{b0critical}
\hat{b}_0(w)=\frac{3-\sqrt{3}}{18}-2^{1/2}3^{-1/4}(\Delta w)^{1/2}+\mathcal{O}(\Delta w).
\end{equation}
We also note that near the critical point $w=w_c$, the determinant $D(w)$ in \eqref{determinant} behaves as follows:
\begin{equation}\label{detcritical}
 D(w)=2^{3/2}3^{5/4}(\Delta w)^{1/2}+\mathcal{O}(\Delta w)=D'(\Delta w)^{1/2}+\mathcal{O}(\Delta w).
\end{equation}
We can write formulae \eqref{g0critical}, \eqref{b0critical} as
\begin{equation}\label{criticalg0b0}
\begin{aligned}
\hat{g}_0(w)&=\hat{g}_0(w_c)+C_0(\Delta w)^{1/2}+\mathcal{O}(\Delta w),\\
\hat{b}_0(w)&=\hat{b}_0(w_c)+D_0(\Delta w)^{1/2}+\mathcal{O}(\Delta w),\\
\end{aligned}
\end{equation}
where
\begin{equation}\label{hbcr}
\hat{g}_0(w_c)=\frac{1}{108}\,,\qquad \hat{b}_0(w_c)=\frac{3-\sqrt{3}}{18}\,,
\end{equation}
and
\begin{equation}\label{C0D0}
C_0=-\frac{2^{1/2}3^{1/4}}{18}\,,\qquad D_0=-2^{1/2}3^{-1/4}.
\end{equation}
Observe that
\begin{equation}\label{D0C0}
D_0=6\sqrt{3}\,C_0.
\end{equation}
For higher order terms, $k\geq 1$, we make the following Ansatz:
\begin{equation}\label{ansatz}
\begin{aligned}
\hat{g}_{2k}(w)&=C_{2k}(\Delta w)^{\frac{1}{2}-\frac{5k}{2}}+\mathcal{O}((\Delta w)^{1-\frac{5k}{2}}),\\
\hat{b}_{2k}(w)&=D_{2k}(\Delta w)^{\frac{1}{2}-\frac{5k}{2}}+\mathcal{O}((\Delta w)^{1-\frac{5k}{2}}),
\end{aligned}
\end{equation}
where the functions $\hat{g}_{2k}(w)$, $\hat{b}_{2k}(w)$ have a square root singularity in the complex
plane at the point $w=w_c$.
We will prove this Ansatz by induction with respect to $k$, by using the system of equations \eqref{hat3}--\eqref{hat4}.  Simultaneously, we will derive recurrence equations
for $C_{2k},\;D_{2k}$.

Let us analyze the first sum on the right hand side of \eqref{hat3}.
By differentiating the first formula in \eqref{ansatz}, we obtain
\begin{equation}\label{ans1}
 \hat{g}_{2m}^{(2j)}(w)=C_{2m}^{(2j)}(\Delta w)^{\frac{1}{2}-\frac{5m}{2}-2j}+\ldots, \qquad
m\le k-1,
\end{equation}
where the dots indicate higher order terms with respect to $\Delta w$ and
\begin{equation}
C_{2m}^{(2j)}=C_{2m}\left(\frac{5m}{2}-\frac{1}{2}\right)\ldots
\left(\frac{5m}{2}+(2j-1)-\frac{1}{2}\right).
\end{equation}
Since $m+j=k$,
we can write formula \eqref{ans1} as
\begin{equation}\label{ans2}
 \hat{g}_{2m}^{(2j)}(w)=C_{2m}^{(2j)}(\Delta w)^{\frac{1}{2}-\frac{5k}{2}+\frac{j}{2}}+\ldots,
\qquad j\ge 1.
\end{equation}
Hence the leading term in  the first sum on the right hand side of \eqref{hat3} corresponds
to $j=1$, $m=k-1$, so
\begin{equation}\label{ans3}
\begin{aligned}
&-6\sum_{m+j=k;\; m\le k-1}\frac{\hat g_{2m}^{(2j)}(w)}{(2j)!2^{2j}}\\
&=-\frac{3C_{2k-2}}{4}
\left(\frac{5(k-1)}{2}-\frac{1}{2}\right)
\left(\frac{5(k-1)}{2}+\frac{1}{2}\right)(\Delta w)^{1-\frac{5k}{2}}+\ldots\\
&=-\frac{3C_{2k-2}\left(5k-6\right)\left(5k-4\right)}{16}
(\Delta w)^{1-\frac{5k}{2}}+\ldots\\
\end{aligned}
\end{equation}
In the second sum, we have
\begin{equation}\label{ans4}
\begin{aligned}
-3\mathop{\sum_{m+m'=k}}_{m,m'\le k-1}\hat b_{2m}(w)\hat b_{2m'}(w)
=-3\mathop{\sum_{m+m'=k}}_{m,m'\le k-1} D_{2m}D_{2m'}(\Delta w)^{1-\frac{5k}{2}}+\ldots
\end{aligned}
\end{equation}
For $k=1$ this sum is equal to 0. Substituting formulae \eqref{ans3}, \eqref{ans4} into \eqref{hat3}, we obtain
\begin{equation}\label{ans5}
6\hat g_{2k}(w)-(1-6\hat b_0(w))\hat b_{2k}(w)
=A_{2k}(\Delta w)^{1-\frac{5k}{2}}+\ldots,
\end{equation}
where
\begin{equation}\label{ans6}
A_{2k}=-\frac{3C_{2k-2}\left(5k-6\right)\left(5k-4\right)}{16}-3\sum_{m+m'=k;\; m,m'\le k-1} D_{2m}D_{2m'}.
\end{equation}

Similarly, in equation \eqref{hat4} the leading singular term on the right, with $(\Delta w)^{1-\frac{5k}{2}}$,
corresponds to $j=1,\;m=0,\;m'=k-1$ and to $j=0,\; m+m'=k$, so that
\begin{equation}\label{ans7}
\begin{aligned}
(1-6\hat b_0(w))\hat g_{2k}(w)-6\hat g_0(w)\hat b_{2k}(w)=B_{2k}(\Delta w)^{1-\frac{5k}{2}}+\ldots,
\end{aligned}
\end{equation}
where
\begin{equation}\label{ans8}
B_{2k}=\frac{D_{2k-2}\left(5k-6\right)\left(5k-4\right)}{576}+6\sum_{m+m'=k;\; m,m'\le k-1} C_{2m}D_{2m'}.
\end{equation}
Solving system \eqref{ans5}, \eqref{ans7} and using asymptotic formula \eqref{detcritical}
for its determinant, we obtain that
\begin{equation}\label{ans9}
\begin{aligned}
\hat{g}_{2k}(w)&=C_{2k}(\Delta w)^{\frac{1}{2}-\frac{5k}{2}}+\mathcal{O}((\Delta w)^{1-\frac{5k}{2}}),\\
\hat{b}_{2k}(w)&=D_{2k}(\Delta w)^{\frac{1}{2}-\frac{5k}{2}}+\mathcal{O}((\Delta w)^{1-\frac{5k}{2}}),
\end{aligned}
\end{equation}
where
\begin{equation}\label{ans10}
\begin{aligned}
C_{2k}&=\frac{1}{2^{3/2}3^{5/4}}\bigg[-6\hat g_0(w_c)A_{2k}+(1-6 \hat b_0(w_c))B_{2k}\bigg],\\
D_{2k}&=\frac{1}{2^{3/2}3^{5/4}}\bigg[-(1-6 \hat b_0(w_c))A_{2k}+6B_{2k}\bigg].
\end{aligned}
\end{equation}
This proves Ansatz \eqref{ansatz} and gives the recurrence formula for the coefficients $C_{2k},\;D_{2k}$.
Moreover, substituting formulae \eqref{hbcr}, we obtain that
\begin{equation}\label{ans11}
\begin{aligned}
C_{2k}&=\frac{1}{2^{3/2}3^{5/4}}\bigg(-\frac{A_{2k}}{18}+\frac{\sqrt {3}\,B_{2k}}{3}\bigg),\\
D_{2k}&=\frac{1}{2^{3/2}3^{5/4}}\bigg(-\frac{\sqrt {3}\,A_{2k}}{3}+6B_{2k}\bigg).
\end{aligned}
\end{equation}
Hence
\begin{equation}\label{ans12}
D_{2k}=6\sqrt{3}\, C_{2k},
\end{equation}
which extends relation \eqref{D0C0} to $k\ge 1$. By using this relation, we can eliminate $D_{2m}$ from
formulae \eqref{ans6}, \eqref{ans8}:
\begin{equation}\label{ans13}
\begin{aligned}
A_{2k}&=-\frac{3C_{2k-2}\left(5k-6\right)\left(5k-4\right)}{16}-324\sum_{m+m'=k;\; m,m'\le k-1} C_{2m}C_{2m'},\\
B_{2k}&=\frac{\sqrt{3}\,C_{2k-2}\left(5k-6\right)\left(5k-4\right)}{96}+36\sqrt{3}\sum_{m+m'=k;\; m,m'\le k-1} C_{2m}C_{2m'},
\end{aligned}
\end{equation}
and hence by the first formula in \eqref{ans11},
\begin{equation}\label{ans14}
\begin{aligned}
C_{2k}&=\frac{1}{2^{3/2}3^{5/4}}\bigg(\left(5k-6\right)\left(5k-4\right)\frac{C_{2k-2}}{48}+54\mathop{\sum_{m+m'=k}}_{m,m'\le k-1} C_{2m}C_{2m'}\bigg).
\end{aligned}
\end{equation}
This proves recursive formula \eqref{gg3}.

From \eqref{C0D0} and \eqref{gg3}, we find the values of the constant $C_{2k}$ for $k=0,1,2$ as
\begin{equation}\label{ans14a}
\begin{aligned}
C_0=-\frac{2^{1/2}3^{1/4}}{18}\,,\qquad C_2=\frac{1}{48\cdot 108}\,,
\qquad C_4=\frac{49\cdot 2^{1/2}3^{3/4}}{17915904}\,.
\end{aligned}
\end{equation}

Let us apply formula \eqref{ansatz} to find the asymptotic behavior of $F^{(2k)}(u)$ at $u=u_c$ for $k\ge 1$. From \eqref{hatbg_2} and \eqref{gtg_1}, we know that
\begin{equation}
\tilde{g}_{2k}(t)=\frac{g_{2k}(u)}{2\sqrt{t}}= \frac{u^{4k-2}\hat{g}_{2k}(w)}{2\sqrt{t}}
=\frac{36t\hat{g}_{2k}(w)}{(72^2 t^3)^k},
\end{equation}
and then we obtain from \eqref{ans9} that
\begin{equation}\label{ans15}
\tilde{g}_{2k}(t)=\tilde{C}_{2k}(\Delta t)^{\frac{1}{2}-\frac{5k}{2}}+\ldots,
\end{equation}
where
\begin{equation}\label{ans16}
\tilde{C}_{2k}=\frac{36 t_c C_{2k}}{(72^2 t_c^3)^k}\, \left(-\frac{dw}{dt}\Big\vert_{t=t_c}\right)^{\frac{1}{2}-\frac{5k}{2}}.
\end{equation}
If we integrate \eqref{ans15} twice with respect to $\Delta t$, we get
\begin{equation}
\tilde{F}^{(2k)}(t)= \frac{4 \tilde C_{2k}}{(5k-5)(5k-3)}
(\Delta t)^{\frac{5}{2}-\frac{5}{2}k}+\ldots,
\end{equation}
if $k\ge 2$. By \eqref{FNFN2} we obtain now that
\begin{equation}\label{ans17}
F^{(2k)}(u)= A_{2k}\left(1-\frac{u^2}{u_c^2}\right)^{\frac{5}{2}-\frac{5}{2}k}+\ldots, \qquad k\ge 2,
\end{equation}
where
\begin{equation}\label{ans18}
A_{2k}= \frac{4 \tilde C_{2k}u_c^{5-5k}}{(5k-5)(5k-3)}
\left(-\frac{dt}{dw}\Big\vert_{w=w_c}\right)^{\frac{5}{2}-\frac{5}{2}k}.
\end{equation}
Observe that by \eqref{CDR10},
\begin{equation}\label{ans19}
w=\frac{1}{72 t^{3/2}},\qquad t_c=3\cdot 2^{-2/3},\qquad -\frac{dw}{dt}\Big\vert_{t=t_c}=\frac{1}{48 t_c^{5/2}}\,,
\end{equation}
hence
\begin{equation}\label{ans20}
A_{2k}= \frac{144\cdot 48^2\, t_c^{6} u_c^5 }{(5k-5)(5k-3)}\left(\frac{1}{72^2 t_c^3u_c^5}\right)^kC_{2k}.
\end{equation}
Substituting expression \eqref{ans16} and simplifying, we obtain
\begin{equation}\label{ans21}
A_{2k}= \frac{24\cdot 3^{1/4}C_{2k} }{(5k-5)(5k-3)u_c^k}\,.
\end{equation}
Using the binomial expansion, we have
\begin{equation}\label{ans22}
\left(1-\frac{u^2}{u_c^2}\right)^{\frac{5}{2}-\frac{5}{2}k}
=\sum_{j=0}^\infty \frac{c_{j}u^{2j}}{u_c^{2j}}\,,
\end{equation}
where $c_j$ has the following asymptotic behavior as $j\to\infty$:
\begin{equation}\label{ans23}
c_j=\frac{j^{(5k-7)/2}}{\Gamma\left(\frac{5k-5}{2}\right)}\left(1+\mathcal O(j^{-1})\right)\,.
\end{equation}
Combining \eqref{ans17}, \eqref{ans21}--\eqref{ans23}, we find
\begin{equation}\label{ans23a}
F^{(2k)}(u)= \sum_{j=1}^\infty d^{(2k)}_{2j} \frac{u^{2j}}{u_c^{2j}},
\end{equation}
where as $j\to\infty$,
\begin{equation}\label{ans23b}
d^{(2k)}_{2j}=K_{2k}j^{(5k-7)/2} \left(1+\mathcal O(j^{-1/2})\right),
\end{equation}
with
\begin{equation}\label{ans24}
K_{2k}=\frac{6\cdot 3^{1/4}C_{2k}}{\Gamma\left(\frac{5k-1}{2}\right)u_c^{k}}\,.
\end{equation}
This proves Theorem \ref{Thgg1}.

The initial values of $K_{2k}$ are
\begin{equation}\label{ans25}
K_{0}=\frac{1}{\sqrt{6\pi}}\,,\qquad K_2=\frac{1}{48}\,,\qquad K_4= \frac{7}{1440\sqrt{6\pi}}\,.
\end{equation}
When $k=1$, formula \eqref{ans17} becomes
\begin{equation}\label{ans26}
F^{(2)}(u)= -\frac{1}{48}\,\ln \left(1-\frac{u^2}{u_c^2}\right)+\ldots
=\sum_{j=0}^\infty \frac{c_{j}u^{2j}}{u_c^{2j}}\,,
\end{equation}
where
\begin{equation}\label{ans27}
c_j=\frac{1}{48j}\left(1+\mathcal O(j^{-1/2})\right)\,.
\end{equation}

\section{Counting 3-valent graphs on a Riemannian surface}
\label{graphs}

In this final section, we prove that the coefficient $f^{(2g)}_{p}$ in series \eqref{top2} is equal to the number of 3-valent
graphs with $p$ vertices on a closed Riemannian surface of genus $g$. By differentiating formula \eqref{ZN}
$p$ times with respect to $u$ and evaluating at the point $u=0$, we obtain
\begin{equation}\label{3v1}
\frac{Z_N^{(p)}(0)}{Z_N(0)}= N^p\left\langle \left(\sum_{i=1}^N z_i^3\right)^{p}\right\rangle_0 \,,
\end{equation}
where
\begin{equation}\label{3v2}
\begin{aligned}
\left\langle f(z_1,\ldots,z_N)\right\rangle_0 &=\frac{1}{Z_N(0)}
\int_\Ga\ldots\int_\Ga f(z_1,\ldots,z_N)\\
&\times\prod_{1\leq j<k\leq N}(z_j-z_k)^2\,
\prod_{j=1}^N e^{-\frac{Nz_j^2}{2}}dz_1\ldots dz_N
\end{aligned}
\end{equation}
is the mathematical expectation with respect to the Gaussian ensemble. By the Cauchy theorem,
we can deform the contour of integration $\Ga$ in
the integral on the right in \eqref{3v1}
to the real axis, and then we can return to the matrix integral:
\begin{equation}\label{3v3}
\begin{aligned}
\left\langle \left(\sum_{i=1}^N z_i^3\right)^{p}\right\rangle_0
&=\left\langle \left(\Tr\, M^3\right)^{p}\right\rangle_0
=\frac{1}{\tilde Z_N(0)}
\int_{\mathcal H_N} (\Tr\, M^3)^p e^{-N\Tr \frac{M^2}{2}}dM.
\end{aligned}
\end{equation}
For odd $p$ the latter integral is equal to 0.
To evaluate the integral for even $p=2q$, we apply the Wick theorem.

We have
\begin{equation}\label{wi1}
\Tr\, M^3=\sum_{i,j,k=1}^N M_{ij}M_{jk} M_{ki}
\end{equation}
and
\begin{equation}\label{wi2}
(\Tr\, M^3)^p=\sum_{i_1,j_1,k_1,\ldots,i_p,j_p,k_p=1}^N M_{i_1j_1}M_{j_1k_1} M_{k_1i_1}\ldots
M_{i_pj_p}M_{j_pk_p} M_{k_pi_p}.
\end{equation}
The covariance matrix of $M_{ij}$'s is
\begin{equation}\label{wi3}
\left\langle M_{ij} M_{kl}\right\rangle_0=\frac{\de_{il}\de_{jk}}{N}\,.
\end{equation}
By the Wick theorem, the expectation of the product of matrix entries with respect to the Gaussian measure can be expressed in terms of the product of expectations of all possible pairings of the matrix entries (see for instance \cite[\S 1.6]{For} for a more detailed exposition). In this case,
\begin{equation}\label{wi4}
\begin{aligned}
&\left\langle M_{i_1j_1}M_{j_1k_1} M_{k_1i_1}\ldots
M_{i_pj_p}M_{j_pk_p} M_{k_pi_p}\right\rangle_0\\
&=\sum_{\pi} \prod_{s=1}^{3p/2}\left\langle M_{i^{(s)}j^{(s)}} M_{k^{(s)}l^{(s)}}\right\rangle_0
=\sum_{\pi} N^{-3p/2}\prod_{s=1}^{3p/2}\de_{i^{(s)}l^{(s)}}\de_{j^{(s)}k^{(s)}}\,.
\end{aligned}
\end{equation}
where the sum is taken over all partitions
\begin{equation}\label{wi5}
\begin{aligned}
\pi&:\;\{M_{i_1j_1},M_{j_1k_1}, M_{k_1i_1},\ldots,
M_{i_pj_p},M_{j_pk_p}, M_{k_pi_p}\}\\
&=\bigsqcup_{s=1}^{3p/2}
 \{M_{i^{(s)}j^{(s)}}, M_{k^{(s)}l^{(s)}}\}
\end{aligned}
\end{equation}
of the set $\{M_{i_1j_1},M_{j_1k_1}, M_{k_1i_1},\ldots,
M_{i_pj_p},M_{j_pk_p}, M_{k_pi_p}\}$ into disjoint
pairs $\{M_{i^{(s)}j^{(s)}}, M_{k^{(s)}l^{(s)}}\}$.

From \eqref{wi2},\eqref{wi4} we have that
\begin{equation}\label{wi6}
\begin{aligned}
\left\langle(\Tr\, M^3)^p\right\rangle
&=\sum_{\pi}\sum_{i_1,j_1,k_1,\ldots,i_p,j_p,k_p=1}^N
 N^{-3p/2}\prod_{s=1}^{3p/2}\de_{i^{(s)}l^{(s)}}\de_{j^{(s)}k^{(s)}}\\
&=\sum_{\pi}
 N^{f-3p/2}\,,
\end{aligned}
\end{equation}
where $f$ is the number of cycles in the set $$\{M_{i_1j_1},M_{j_1k_1}, M_{k_1i_1},\ldots,
M_{i_pj_p},M_{j_pk_p}, M_{k_pi_p}\}$$ generated by the equalities
\begin{equation}\label{wi7}
i^{(s)}=l^{(s)},\quad j^{(s)}=k^{(s)},\quad s=1,\ldots, 3p/2.
\end{equation}
Thus,
\begin{equation}\label{wi8}
\frac{Z_N^{(p)}(0)}{Z_N(0)}= \sum_{\pi}
 N^{p+f-l},\qquad l=\frac{3p}{2}\,.
\end{equation}
By the second Wick theorem, we obtain now that
\begin{equation}\label{wi9}
F_N^{(p)}(0)= \frac{1}{N^2}\ln \frac{Z_N^{(p)}(0)}{Z_N(0)} =
{\sum_{\pi}}^c
 N^{p+f-l-2},
\end{equation}
where the sum is taken over partitions $\pi$ such that the graph $\Ga(\pi)$ is connected (for a general formulation of this result in combinatorics see for instance \cite[Vol. 2]{Sta}).

From \eqref{top1}, \eqref{top2} we have
\begin{equation}\label{wi10}
F_N^{(p)}(0)\sim \sum_{g\ge 0} \frac{f^{(2g)}_{p}}{N^{2g}}\,.
\end{equation}
Comparing this with \eqref{wi9}, we obtain that for each partition $\pi$,
\begin{equation}\label{wi11}
p+f-l-2=-2g
\end{equation}
and
\begin{equation}\label{wi12}
f^{(2g)}_{p}= \mathop{{\sum}^c}_{\pi:\; p+f-l-2=-2g} 1,
\end{equation}
As was noticed in \cite{BIZ}, by the Euler formula, the number $g$ in \eqref{wi11} is equal
to the minimal genus of a closed oriented Riemannian manifold
on which the graph $\Ga(\pi)$ can be realized without self-intersections. See the work \cite{Mul}
of Mulase for a rigorous proof. This proves that $f^{(2g)}_{p}$ counts the number of 3-valent graphs with $p$ vertices on
a closed oriented Riemannian manifold of genus $g$.
%%%%%%%%%%%%%%%%%%%%%%%%%%%%%%%%%%%%%%%%%%%%%%%%%%%%%%%%%%%%%%%%%%%%%%%%%%%%%%%%%%%%
\begin{figure}
\scalebox{0.9}{\includegraphics{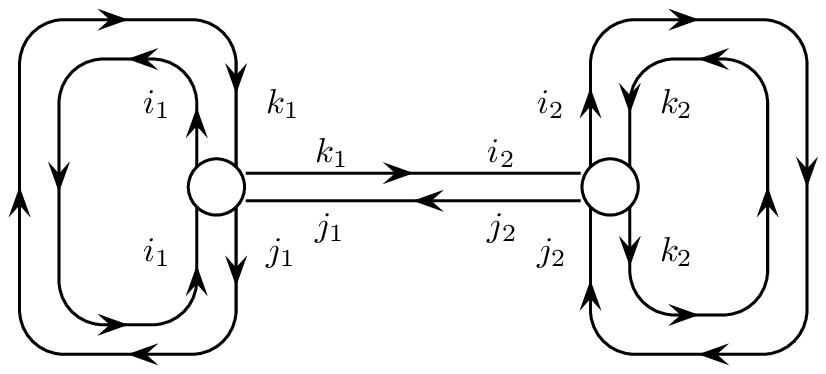}}
\scalebox{0.9}{\includegraphics{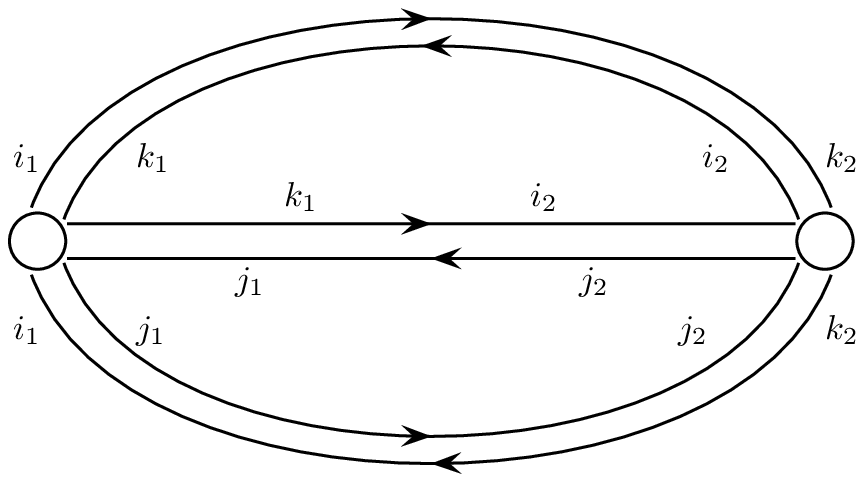}}
%\scalebox{0.7}{\includegraphics{}}
\caption{$3$-valent connected ribbon graphs with two vertices, genus $0$.}
\label{Figfatgraph1}
\end{figure}
%%%%%%%%%%%%%%%%%%%%%%%%%%%%%%%%%%%%%%%%%%%%%%%%%%%%%%%%%%%%%%%%%%%%%%%%%%%%%%%%%%%%%%%%%
\begin{figure}
\scalebox{0.9}{\includegraphics{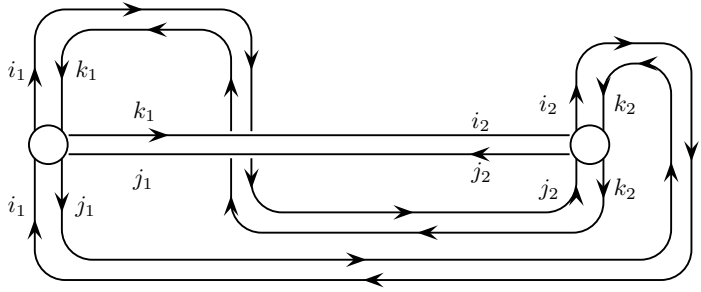}}
\caption{A $3$-valent connected fat graph with two vertices, genus $1$.}
\label{Figfatgraph3}
\end{figure}
%%%%%%%%%%%%%%%%%%%%%%%%%%%%%%%%%%%%%%%%%%%%%%%%%%%%%%%%%%%%%%%%%%%%%%%%%%%%%%

As an illustration, Fig.2 depict possible 3-valent (ribbon) graphs with two vertices on the sphere, and Fig.3, on the torus.
Observe that by counting the possible enumerations of the edges, we obtain the multiplicity factor of the first graph
in Fig.2 to be equal to 9, and of the second graph to be equal to 3. The total number of the graphs with multiplicities
is equal to 12, which fits well to the first coefficient $6=\frac{12}{2}$ in the expansion of $F^{(0)}(u)$ given in \eqref{seriesF0}.
Similarly, we obtain that the multiplicity factor of the graph in Fig.3 on the torus is equal to 3, which
fits well to the first coefficient $\frac{3}{2}$ in the expansion of $F^{(2)}(u)$ presented in  \eqref{F2}.

\medskip

\appendix

\section{RHP and properties of the orthogonal polynomials $P_n(z)$}\label{AOP}

In this appendix we prove various propositions from Section \ref{OPs}.

{\it Proof of Proposition \ref{OPs1}}.
Applying equations \eqref{op3} to the last column in the determinant
for $D_n$, we obtain that
\begin{equation}\label{aop1}
\begin{aligned}
D_{n}&=
\left|
\begin{matrix}
c_0 & c_1 & \ldots & c_{n-1} & c_n\\
c_1 & c_2 & \ldots & c_{n} & c_{n+1}\\
\vdots & \vdots & \ddots & \vdots & \vdots \\
c_{n-1} & c_n & \ldots & c_{2n-2} & c_{2n-1}\\
c_n & c_{n+1} & \ldots & c_{2n-1} & c_{2n}
\end{matrix}
\right|\\
&=
\left|
\begin{matrix}
c_0 & c_1 & \ldots & c_{n-1} & 0 \\
c_1 & c_2 & \ldots & c_{n} & 0 \\
\vdots & \vdots & \ddots & \vdots & \vdots \\
c_{n-1} & c_n & \ldots & c_{2n-2} & 0 \\
c_n & c_{n+1} & \ldots & c_{2n-1} & h_n
\end{matrix}
\right|
=h_nD_{n-1},
\end{aligned}
\end{equation}
where
\begin{equation}\label{aop2}
h_n=\int_{\Ga} z^n P_n(z) e^{-NV(z)}dz=\int_{\Ga} P_n(z)^2 e^{-NV(z)}dz,
\end{equation}
which proves \eqref{op7}.\qed

{\it Proof of Proposition \ref{exi}.} Let $y_n(z)=\det Y_n(z)$. Then by \eqref{RHP_1},
 $y_{n+}(z)=y_{n-}(z)$, $z\in\Ga_0\cup\Ga_1$,
hence $y_n(z)$ is an entire function. By \eqref{RHP_2}, $y_n(z)\to 1$ as $z\to\infty$. Hence, by the Liouville
theorem, $y_n(z)\equiv 1$.

Next, we claim that $Y_n(z)$ is the unique solution of the RHP.
Indeed, suppose that $Y_n^{1,2}(z)$ are two solutions. Then
$X(z)=Y_n^1(z)Y_n^2(z)^{-1}$ has no jump on $\Ga_1\cup\Ga_2$ and $X(z)\to I$ as $z\to\infty$.
By the Liouville theorem, $X(z)\equiv 1$, hence $Y_n^{1}(z)=Y_n^{2}(z)$, so the solution is unique.

Consider now the (11)-element $Y_{n11}(z)$ of the matrix $Y_n(z)$. From \eqref{RHP_1} we obtain that
$Y_{n11+}(z)=Y_{n11-}(z)$, hence $Y_{n11}(z)$ is an entire function. By \eqref{RHP_3},
$Y_{n11}(z)=z^n+\mathcal O(z^{n-1})$, hence $Y_{n11}(z)$ is a monic polynomial of degree $n$.

Consider then the (12)-element $Y_{n12}(z)$ of the matrix $Y_n(z)$. From \eqref{RHP_1} we obtain that
for $z\in \Ga_0\cup\Ga_1$,
\begin{equation}\label{aop3}
Y_{n12+}(z)-Y_{n12-}(z)=\al(z)e^{-NV(z)}Y_{n11}(z),
\end{equation}
and from \eqref{RHP_3}, that as $z\to\infty$,
\begin{equation}\label{aop4}
Y_{n12}(z)=\mathcal O(z^{-n-1}).
\end{equation}
This implies that
\begin{equation}\label{aop5}
Y_{n12}(z)=\frac{1}{2\pi i}\int_\Ga \frac{e^{-NV(s)}Y_{n11}(s)}{s-z}\,ds,
\end{equation}
where again $\Ga=\Ga_0\cup\Ga_1$. By expanding $1/(s-z)$ into a geometric series, we obtain that
\begin{equation}\label{aop6}
Y_{n12}(z)=-\frac{1}{2\pi i}\sum_{k=0}^{n-1}\int_\Ga \frac{s^k e^{-NV(s)}Y_{n11}(s)}{z^{k+1}}\,ds
+\mathcal O(z^{-n-1}).
\end{equation}
Comparing this with \eqref{aop4}, we obtain that
\begin{equation}\label{aop7}
\int_\Ga s^k e^{-NV(s)}Y_{n11}(s)\,ds=0,\qquad k=0,1,\ldots, n-1,
\end{equation}
hence $Y_{n11}(z)$ is a monic orthogonal polynomial of degree $n$, $Y_{n11}(z)=P_n(z)$. Since the solution $Y_n(z)$ to
the RHP \eqref{RHP_1}-\eqref{RHP_3} is unique, the orthogonal polynomial $P_n(z)$ is unique as well.
\qed

{\it Proof of Proposition \ref{ttr}.} Consider the matrix-valued function
\begin{equation}\label{aop8}
U_n(z)=Y_{n+1}(z)Y_n(z)^{-1}.
\end{equation}
It has no jump, hence it is an entire function. As $z\to\infty$, $U_n(z)=\mathcal O(z)$, hence
$U_n(z)$ is a linear function. Moreover, \eqref{RHP_3} implies that $U_n(z)$ has the form
\begin{equation}\label{aop9}
U_n(z)=
\begin{pmatrix}
z+c_{11} & c_{12} \\
c_{21} & c_{22}
\end{pmatrix}.
\end{equation}
Considering the equation $Y_{n+1}(z)=U_n(z)Y_n(z)$ for the (1,1)-entry, we obtain the three term
recurrence relation \eqref{rr1} with some coefficients, which we denote $\be_n$ and $\ga_n^2$.
\qed

{\it Proof of Proposition \ref{str_1}.} Integrating by parts, we obtain that
\begin{equation}\label{aop10}
\begin{aligned}
0&=\int_\Ga P_n(z) P_n'(z) e^{-NV(z)}dz
=N\int_\Ga  P_n(z)^2 V'(z) e^{-NV(z)}dz\\
&=N\int_\Ga P_n(z)^2 (z-3uz^2) e^{-NV(z)}dz.
\end{aligned}
\end{equation}
Now, from \eqref{rr1} we have that
\begin{equation}\label{aop11}
\begin{aligned}
\int_\Ga  P_n(z)^2 z e^{-NV(z)}dz=h_n\be_n,
\end{aligned}
\end{equation}
and
\begin{equation}\label{aop12}
\begin{aligned}
\int_\Ga P_n(z)^2 z^2 e^{-NV(z)}dz
&=\int_\Ga [P_{n+1}(z)+\beta_nP_n(z)+\gamma_n^2 P_{n-1}(z)]^2  e^{-NV(z)}dz\\
&=h_{n+1}+\be_n^2h_n+\ga_n^4 h_n,
\end{aligned}
\end{equation}
hence
\begin{equation}\label{aop13}
\begin{aligned}
0=h_n\be_n-3u(h_{n+1}+\be_n^2h_n+\ga_n^4 h_{n-1}).
\end{aligned}
\end{equation}
This implies the first equation in \eqref{string}.

Similarly, integrating by parts, we obtain that
\begin{equation}\label{aop14}
\begin{aligned}
nh_{n-1}&=\int_\Ga P_{n-1}(z) P_n'(z) e^{-NV(z)}dz\\
&=N\int_\Ga P_{n-1}(z) P_n(z) V'(z) e^{-NV(z)}dz\\
&=N\int_\Ga P_{n-1}(z) P_n(z) (z-3uz^2) e^{-NV(z)}dz.
\end{aligned}
\end{equation}
Now, from \eqref{rr1} we have that
\begin{equation}\label{aop15}
\begin{aligned}
\int_\Ga P_{n-1}(z) P_n(z) z e^{-NV(z)}dz=h_n,
\end{aligned}
\end{equation}
and
\begin{equation}\label{aop16}
\begin{aligned}
&\int_\Ga P_{n-1}(z) P_n(z) z^2 e^{-NV(z)}dz\\
&=\int_\Ga [P_{n}(z)+\beta_{n-1}P_{n-1}(z)+\gamma_{n-1}^2 P_{n-2}(z)] P_n(z) z e^{-NV(z)}dz\\
&=(\be_n+\be_{n-1})h_n,
\end{aligned}
\end{equation}
hence
\begin{equation}\label{aop17}
\begin{aligned}
nh_{n-1}=N[h_n-3u(\be_n+\be_{n-1})h_n].
\end{aligned}
\end{equation}
This implies the second equation in \eqref{string}.
\qed

{\it Proof of Proposition \ref{an_1}.} Observe that the moments,
\begin{equation}\label{aan1}
c_j=\int_\Ga z^j e^{-N(z^2/2-uz^3)}dz,
\end{equation}
are $C^\infty$-functions of $u$ for $u\ge 0$, and they are analytic for $u>0$.
Hence, the same is true for the determinant $D_{n-1}$ and the polynomial $D_n(z)$, for any $n$.
\qed

\section{Proof of Toda equation \eqref{Toda_eq}}\label{A_Toda}

Let $N$ be fixed. Then, as $u\to 0$, the complex-valued measure
\[
e^{-N(\frac{z^2}{2}-uz^3)}dz
\]
on $\Ga=\al\Ga_0+(1-\al)\Ga_1$ converges to the Gaussian measure, and the moments
of the measure converge to the
moments of the Gaussian measure. This implies that there exists $u(N)>0$ such that for $u\in[0,u(N)]$
the orthogonal polynomials $P_n(z)$ exist  for $n=0,1,\ldots,N$. Hence
the orthogonal polynomials $\tilde P_n(\z)$ exist for
 $n=0,1,\ldots,N$, if $t\ge t(N)$, where
\begin{equation}\label{atoda1}
t(N)=\frac{1}{4[3 u(N)]^{4/3}}\,,
\end{equation}
cf. \eqref{FNFN2}
This implies, by the usual argument (see e.g. \cite{BI}), that Toda equation \eqref{Toda} is valid
for $t\ge t(N)$. Since both the free energy $\tilde F_N$ and the recurrence coefficient $\ga_N^2$ are analytic
in $t$ for $t>t_c$, we obtain, by the analytic continuation, that Toda equation \eqref{Toda} is valid
for $t>t_c$.   \qed

\section{Proof of formula \eqref{Th3_2}} \label{App_Th3_2}

In order to prove \eqref{Th3_2}, we will use formula \eqref{F32F21} and some linear transformations
and integral representations of the $_2F_1$ functions. Namely, we note that we can use \cite[15.8.7]{dlmf}:
$$
\mathop{_2F_1\/}\nolimits\!\left({-m,b\atop c};z\right)=\frac{(c-b)_{m}}{(c)_{m}}\mathop{_2F_1\/}\nolimits\!\left({-m,b\atop b-c-m+1};1-z\right),
$$
setting $m=j-1$ and $m=j-2$:
\begin{equation}
\begin{aligned}
_2F_1\left(\begin{array}{l} -j+1,2\\ -\frac{3j}{2}+1 \end{array};\frac{3}{2}\right)
&=\frac{\left(\frac{j}{2}+3\right)_{j-1}}{\left(\frac{j}{2}+1\right)_{j-1}}\,
_2F_1\left(\begin{array}{l} -j+1,2\\ \frac{j}{2}+3 \end{array};-\frac{1}{2}\right)\\
\end{aligned}
\end{equation}
and
\begin{equation}
\begin{aligned}
_2F_1\left(\begin{array}{l} -j+2,3\\ -\frac{3j}{2}+2 \end{array};\frac{3}{2}\right)
& =\frac{\left(\frac{j}{2}+4\right)_{j-2}}{\left(\frac{j}{2}+1\right)_{j-2}}\,
_2F_1\left(\begin{array}{l} -j+2,3\\ \frac{j}{2}+4 \end{array};-\frac{1}{2}\right).
\end{aligned}
\end{equation}
Additionally, we can use
\begin{equation}
\mathop{_2F_1\/}\nolimits\!\left({a,b\atop c};z\right)= \frac{\mathop{\Gamma\/}\nolimits\!\left(c\right)}{\mathop{\Gamma\/}\nolimits\!\left(b\right)\mathop{\Gamma\/}\nolimits\!\left(c-b\right)}\int _{0}^{1}\frac{t^{{b-1}}(1-t)^{{c-b-1}}}{(1-zt)^{a}}dt,
\end{equation}
which is valid provided that $\textrm{Re}\, c>\textrm{Re}\, b>0$. In our case,
\begin{equation}
 \begin{aligned}
_2F_1\left(\begin{array}{l} -j+1,2\\ \frac{j}{2}+3 \end{array};\frac{3}{2}\right)
&=\frac{\Gamma\left(\frac{j}{2}+3\right)}{\Gamma\left(\frac{j}{2}+1\right)}
\int _{0}^{1} t(1-t)^{\frac{j}{2}}\left(1+\frac{1}{2}t\right)^{j-1} dt,\\
_2F_1\left(\begin{array}{l} -j+2,3\\ \frac{j}{2}+4 \end{array};\frac{3}{2}\right)
&=\frac{\Gamma\left(\frac{j}{2}+4\right)}{2\,\Gamma\left(\frac{j}{2}+1\right)}
 \int _{0}^{1} t^2(1-t)^{\frac{j}{2}}\left(1+\frac{1}{2}t\right)^{j-2} dt.
\end{aligned}
\end{equation}
If we write the previous integrands as follows,
\begin{equation}\label{integ}
\begin{aligned}
I_1=\int _{0}^{1} t\left(1+\frac{1}{2}t\right)^{-1}e^{-j\phi(t)} dt, \qquad
I_2=\int _{0}^{1} t^2\left(1+\frac{1}{2}t\right)^{-2}e^{-j\phi(t)} dt,
\end{aligned}
\end{equation}
with
\begin{equation}
\phi(t)=-\frac{1}{2}\ln (1-t)-\ln\left(1+\frac{1}{2}t\right).
\end{equation}
This function has a minimum at $t=0$, since
\begin{equation}
\phi'(t)=\frac{3t}{2(1-t)(2+t)},
\end{equation}
which is positive for $0<t<1$. Therefore, for large $j$ the main contribution to both integrals comes from $t=0$. We make the change of variable
\begin{equation}
-\frac{1}{2}\ln (1-t)-\ln\left(1+\frac{1}{2}t\right)=\tau,
\end{equation}
which maps $t=0$ to $\tau=0$. We expand around $t=0$:
\begin{equation}
-\frac{1}{2}\ln (1-t)-\ln\left(1+\frac{1}{2}t\right)
=\frac{3}{8}t^2+\frac{1}{8}t^3+\frac{9}{64}t^4+\mathcal{O}(t^5),
\end{equation}
so inverting this series, we obtain
\begin{equation}
t=\frac{2\sqrt{6}}{3}\tau^{1/2}-\frac{4}{9}\tau-\frac{17\sqrt{6}}{81}\tau^{3/2}
+\mathcal{O}(\tau^2).
\end{equation}
Additionally, one can work out the following:
\begin{equation}
\begin{aligned}
t\left(1+\frac{t}{2}\right)^{-1}\frac{dt}{d\tau}&=\frac{4}{3}-\frac{8\sqrt{6}}{9}\tau^{1/2}+\mathcal{O}(\tau),\\
t^2\left(1+\frac{t}{2}\right)^{-2}\frac{dt}{d\tau}&=\frac{8\sqrt{6}}{9}\tau^{1/2}
-\frac{160}{27}\tau+\mathcal{O}(\tau^{3/2}).
\end{aligned}
\end{equation}
An application of classical Watson's lemma \cite{Olver}, shows now that the integrals in \eqref{integ} are $I_1=\mathcal{O}(j^{-1})$ and $I_2=\mathcal{O}(j^{-3/2})$ respectively. The factors that multiply in front are
\begin{equation}
 \frac{\left(\frac{j}{2}+3\right)_{j-1}}{\left(\frac{j}{2}+1\right)_{j-1}}
\frac{\Gamma\left(\frac{j}{2}+3\right)}{\Gamma\left(\frac{j}{2}+1\right)}
=\frac{\Gamma\left(\frac{3j}{2}+2\right)}{\Gamma\left(\frac{3j}{2}\right)}
\sim \frac{9j^2}{4}
\end{equation}
and
\begin{equation}
 \frac{\left(\frac{j}{2}+4\right)_{j-2}}{\left(\frac{j}{2}+1\right)_{j-2}}
\frac{\Gamma\left(\frac{j}{2}+4\right)}{2\Gamma\left(\frac{j}{2}+1\right)}
=\frac{\Gamma\left(\frac{3j}{2}+2\right)}
{2\Gamma\left(\frac{3j}{2}-1\right)}
\sim \frac{27j^3}{16},
\end{equation}
so we only need to consider the second integral $I_2$, which is dominant for large $j$. Writing everything together and noting that the binomial number has the following behavior,
\begin{equation}
 {\frac{3j}{2}-1 \choose j-1}=\frac{\Gamma\left(\frac{3j}{2}\right)}{\Gamma(j)\Gamma\left(\frac{j}{2}+1\right)}\sim \frac{2^{-j+\frac{1}{2}}3^{\frac{3j}{2}-\frac{1}{2}}j^{-1/2}}{\sqrt{\pi}}, \qquad j\to\infty,
\end{equation}
we obtain
\begin{equation}
\begin{aligned}
\frac{45\cdot 72^j}{4} \frac{\Gamma\left(\frac{3j}{2}\right)}{\Gamma(j)\Gamma\left(\frac{j}{2}+1\right)}
\frac{6(j-1)}{5(3j-2)}\, _2F_1\left(\begin{array}{l} -j+2,3\\ -\frac{3j}{2}+2 \end{array};\frac{3}{2}\right)
\sim 2^{2j-2} 3^{\frac{7j}{2}+3}j,
\end{aligned}
\end{equation}
when $j\to\infty$. We observe that we can write this as
\begin{equation}
2^{-2} 3^3 j (3^\frac{7}{2}2^2)^j=\frac{27j}{4w_c^j},
\end{equation}
and recalling \eqref{F2u} we obtain
\begin{equation}
f^{(2)}_{2j}\sim \frac{27j(2j)!}{4\cdot 36\cdot 9j^2 w_c^j}=\frac{(2j)!}{48j w_c^j},
\end{equation}
which proves asymptotic formula \eqref{Th3_2}.

\end{document}